\documentclass[a4paper,UKenglish,cleveref, autoref, thm-restate]{lipics-v2021}



\bibliographystyle{plainurl}

\title{Algorithmic Perspective on Toda's Theorem} 




\author{Dror Fried}{The Open University of Israel, Ra’anana, Israel}{dfried@openu.ac.il}{https://orcid.org/0000-0002-2108-1167}{(Optional) author-specific funding acknowledgements}

\author{Etay Segal}{The Open University of Israel, Ra’anana, Israel}{etaysegal@gmail.com}{https://orcid.org/0000-0002-2108-1167}{(Optional) author-specific funding acknowledgements}

\author{Gad E. Yaron\footnote{corresponding author}}{The Open University of Israel, Ra’anana, Israel \and The Hebrew University of Jerusalem, Israel}{gadelaz.yaron@mail.huji.ac.il}{https://orcid.org/0000-0002-1825-0097}{(Optional) author-specific funding acknowledgements}



\authorrunning{Submission 34}

\Copyright{Submission 34}

\begin{CCSXML}
<ccs2012>
   <concept>
       <concept_id>10003752.10003753.10003757</concept_id>
       <concept_desc>Theory of computation~Probabilistic computation</concept_desc>
       <concept_significance>300</concept_significance>
       </concept>
 </ccs2012>
\end{CCSXML}

\ccsdesc[300]{Theory of computation~Probabilistic computation}

\keywords{Toda's Theorem, Quantified Boolean Formulas, Valiant Vazirani Isolation Lemma,  Bridging the Gap Between Theory and Practice} 

\category{} 





\nolinenumbers 

\EventEditors{John Q. Open and Joan R. Access}
\EventNoEds{2}
\EventLongTitle{42nd Conference on Very Important Topics (CVIT 2016)}
\EventShortTitle{CVIT 2016}
\EventAcronym{CVIT}
\EventYear{2016}
\EventDate{December 24--27, 2016}
\EventLocation{Little Whinging, United Kingdom}
\EventLogo{}
\SeriesVolume{42}
\ArticleNo{23}


\usepackage{algpseudocode}
\usepackage{enumitem}
\usepackage{xcolor}
\usepackage[utf8]{inputenc}
\usepackage{amsmath}
\usepackage{amssymb}
\usepackage{mathtools}
\usepackage{amsfonts} 
\usepackage{graphicx}
\usepackage{cite}
\usepackage{url}
\usepackage{wrapfig}
\usepackage{relsize}
\usepackage{todonotes}
\usepackage{circledsteps} 
\usepackage[linesnumbered,ruled,vlined]{algorithm2e}
\usepackage{refcount}


\newboolean{showtodo}
\setboolean{showtodo}{false} 

  

\NewDocumentCommand{\todosimple}{mm}{%
  \ifthenelse{\boolean{showtodo}}%
   {{\scriptsize \textbf{{#1} says: \color{red} {#2}}}}%
    {}%
}

\NewDocumentCommand{\gytodo}{m}{%
  \ifthenelse{\boolean{showtodo}}%
    {\todo[color=yellow]{Gad: #1}}%
    {}%
}

\NewDocumentCommand{\dftodo}{m}{%
  \ifthenelse{\boolean{showtodo}}%
    {\todo[color=cyan]{Dror: #1}}%
    {}%
}

\NewDocumentCommand{\estodo}{m}{%
  \ifthenelse{\boolean{showtodo}}%
    {\todo[color=green]{Etay: #1}}%
    {}%
}

\usepackage{tikz}
\usetikzlibrary{automata,positioning,shapes,fit,calc}

\newcommand{\Sumnk}[2]{(\sum_{j=1}^{#1} X_j) \bmod{#2}}

\newcommand{\true}{\emph{true }}
\newcommand{\false}{\emph{false }}

\newcommand{\x}{\vec{x}}
\newcommand{\y}{\vec{y}}
\newcommand{\vsigma}{\vec{\sigma}}

\newcommand{\odd}{\textit{odd }}
\newcommand{\even}{\textit{even }}

\newcommand{\parity}{\mathlarger{\mathlarger{\boldsymbol{\oplus}}}}

\newcommand{\TodaQBF}{TodaQBF}
\newcommand{\TodaOne}{Toda's Reduction }

\newtheorem{cor}{Corollary}

\newcommand{\naiveIB}{\textbf{naive+IB}}
\newcommand{\base}{\textbf{base}}
\newcommand{\baseTH}{\textbf{base+TH}}
\newcommand{\basePH}{\textbf{base+PH}}
\newcommand{\baseMA}{\textbf{base+MA}}
\newcommand{\baseTHMA}{\textbf{base+TH+MA}}
\newcommand{\basePHMA}{\textbf{base+PH+MA}}

\newtheorem*{theorem*}{Theorem}
\newtheorem*{proposition*}{Proposition}

\begin{document}

\maketitle

\begin{abstract}
Toda’s Theorem is a fundamental result in computational complexity theory, whose proof relies on a reduction from a QBF problem with a constant number of quantifiers to a model counting problem. 
While this reduction, henceforth called Toda's reduction, is of a purely theoretical nature, the recent progress of model counting tools raises the question of whether the reduction can be utilized to an efficient algorithm for solving QBF.
In this work, we address this question by looking at Toda's reduction from an algorithmic perspective. We first convert the reduction into a concrete algorithm that given a QBF formula and a probability measure, produces the correct result with a confidence level corresponding to the given measure. Beyond obtaining a naive prototype, our algorithm and the analysis that follows shed light on the fine details of the reduction that are so far left elusive.
Then, we improve this prototype through various theoretical and algorithmic refinements. While our results show a significant progress over the naive prototype, they also provide a clearer understanding of the remaining challenges in turning Toda’s reduction into a competitive solver.
\end{abstract}


\section{Introduction}\label{sec:intro}

One of the most fundamental results in computational complexity, taught in many advanced classes, is that of Toda's Theorem~\cite{Toda91} that shows that the polynomial hierarchy is contained in $P^{\# SAT}$. Namely, given $d>0$, the satisfiability of every quantified Boolean formula (QBF) $F$ with $d$ number of alternations can be solved by a polynomial time algorithm (polynomial in the size of $F$) , that uses oracle calls to a model counter solver.
Although Toda's theorem itself is of a constructive nature, it was never offered and analyzed as an algorithm, perhaps due to the very high blowup in the size of the resulting formula that is given to the model counter.
Recent years, however, have shown tremendous progress in the quality of model counters. Specifically, exact model counters, such as~\cite{SharmaRSM19, SharpSAT_TD, LagniezM17} can nowadays handle formulas with hundred of thousands of variables and clauses. 

Motivated, we decided to explore at Toda's Theorem as an algorithm for solving QBF, and to see in what ways it can be improved. We note that our journey adds to several other works that explore algorithmic aspects of pure theoretical problems. See for example the successful conversion of the theoretical work on uniform generation of NP witnesses~\cite{BGPertrank1998} into efficient nearly uniform generator, and approximate model-counting tools~\cite{ChakrabortyMV13,ChakrabortyMV13-CP}. Another example is a recent work on making IP=PSPACE practical~\cite{CouillardCEM23}.

We focus on the first part of Toda's theorem, which proves that $PH$ is contained in $BPP^{\oplus P}$. Specifically, this part, henceforth called \emph{Toda's reduction}, describes how a given QBF formula $F$ with $d$ alternations can be transformed (in time exponential in $d$ but polynomial in the size of the formula when $d$ is fixed) into a quantifier free boolean formula $F_0$, with the guarantee that with high probability, $F$ is \true if and only if the number of satisfying assignments for $F_0$ is \emph{odd}. 
An efficient algorithm based on this reduction could yield a variant of a QBF solver, orthogonal to any other current QBF solving approaches, that given a formula and a probability error estimation $\epsilon>0$, returns yes/no within an error of at most $\epsilon$. 

As such, we first convert Toda's reduction to a concrete algorithm. We follow the proof variant presented in~\cite{arora2006computational}. Even this task shows up to be a non-trivial, since the construction contains many small details that are needed to be clarified, that were so far not explored due to the theoretic nature of the theorem. In addition, there is the matter of analyzing the algorithm's measures such as the exact number of repetitions that the construction requires, and the size of the resulted formula. This is in contrast to the more crude estimations that appear in the literature. On that aspect, our algorithm also brings a pedagogical value as it gives a clear understanding of the fine details in Toda's reduction.

Our algorithm, as expected, emits a formula that is enormous in size. In the heart of this blow-up lies the Valiant-Vazirani isolation lemma~\cite{ValiantVazirani} that promises a low success probability. This probability measure has to be increased by many duplications of sub-formulas and variables that are done in order to amplify the success probability of the resulting formula. Thus, for example, a QBF formula with only $3$ quantifiers and $6$ variables can turn into a formula (in CNF, following a standard conversion) of hundreds of thousands of variables. A naive implementation of our algorithm, that calls Ganak, a state of the art model counter~\cite{SharmaRSM19}, could not handle even such a small QBF formula. We mention that finding a better isolation lemma seems to be a non-trivial problem~\cite{DellKMW13}.

Our next step, therefore, is to try and go around the isolation lemma obstacle, and still improve the efficiency of our algorithm. We first refine some theoretical bounds and the error probability allocation along the steps of the reduction. A question remains however, of how to limit the many duplications that are an essential part of the reduction and serve as a standard method for success probability amplification. To answer that, we describe a novel method that we call \emph{modular addition}, which is based on the following idea: in order to gain an almost unbiased bit from a given biased bit, just sample the biased bit for enough times and return the parity of the sum of all samples. While this method can be found in the works of~\cite{DiaconisModularAddition}, as far as we know this is the first time that it is being used in our context. Indeed, by using modular addition we manage to amplify the success probability up to a certain measure with no  duplications and repetitions. From there, only several repetitions are required to reach our desired success probability. 


To see whether our improved algorithm made any practical significant difference besides its theoretical benefits, we ran several engineering-wise improvements, followed by an implementation of our algorithm. 
While we can now do hundreds of times better than the original implementations, our algorithm does not yet match the performance of state-of-the-art QBF solvers. We conclude by discussing some of the remaining challenges that are required in order to turn Toda's reduction to an efficient QBF solver.

The paper goes as follows. We give preliminaries in Section~\ref{sec:prelim}. In Section~\ref{sec:todaBrief} we describe Toda's reduction, and describe our reduction-based algorithm in Section~\ref{sec:AlgoandChallenges}. In Sections~\ref{sec:alganal} we refine the theoretical bounds and describe modular addition in Section~\ref{sec:Modualr_Addition}. Finally, in Section~\ref{sec:discuss}, we discuss the algorithm's efficiency and remaining challenges. The appendix contains all full proofs.
Appendices~\ref{sec: EnginImprov} and~\ref{Sec:experiments} provide details on our algorithm implementation and evaluation for interested readers, that can be skipped at convenience.

\section{Preliminaries}\label{sec:prelim}
\paragraph*{Boolean Formulas}
A Boolean formula $F(\vec{x})$ over a set of variables $\x = \{x_1, \ldots, x_d\}$ is a logical expression that can be evaluated to either \true or \false. The basic Boolean operators are conjunction ($\land$), disjunction ($\lor$), and negation ($\neg$). Since we also work with non-CNF formulas we denote by $|F|$ the \emph{size} of $F$, which is the length of the string representing $F$. An assignment is an evaluation of every variable $x_i$ to either \true or \false.  A \emph{solution} (also called a \emph{satisfying assignment} or a \emph{model}) of $F$ is an assignment for $F$ that evaluates $F$ to be $\true$. The number of solutions for $F$ is denoted by $\#F$.
A quantified Boolean formula (QBF) (in 
a Prenex-Normal Form) is a formula of the form:
$ F\equiv Q_1 \x_1 \ldots Q_d \x_d \: \hat{F}(\x_1, \ldots, \x_d)$.
Each $Q_i$ is either an existential quantifier ($\exists$) or a universal quantifier ($\forall$), and $\hat{F}$ is a quantifier-free Boolean formula. The string of quantifiers $Q_1 \vec{x}_1 \ldots Q_d \vec{x}_d$ is called the \emph{prefix} of $F$. The set of variables $\vec{x}_i$ is called the \emph{block} of $Q_i$. Variables in $\hat{F}$ that are not quantified are called \emph{free variables} and they can be assigned to \true or \false. When all the variables in $\hat{F}$ are quantified, then $F$ is a sentence that can be evaluated to \true or \false.  Two Boolean formulas $F,G$ are \emph{logically equivalent}, and denoted by $F\equiv G$, if $F$ and $G$ have an identical solution space.

\paragraph*{Boolean Formula Operators}
We define several operators that we make use with throughout the paper. For Boolean formulas $F(\x)$ and $G(\y)$, where $\x,\y$ are not necessarily disjoint, the term $F(\x) + G(\y)$ denotes the formula  $[F(\x) \land x_{new}] \lor [\neg x_{new} \land G(\y)]$ where $x_{new}$ is a fresh variable. Note that $\#[F(\x)+G(\y)] = \#F(\x)+\#G(\y)$. The + operator can be naturally extended to a summation of $n$ formulas, using $\lceil log(n) \rceil$ fresh variables.
Assume $\x=(x_1,\ldots x_m)$. Then the term  $F(\x) + 1(\x)$ denotes the formula: $[F(\x) \land x_{new}] \lor [\neg x_{new} \land x_1\land \ldots\land x_m]$. Sometimes when we convert $F$ to $F+1$, we say that we perform the \emph{$+1$ operation} on $F$.
At certain places, in which we mention specifically, we overload the operator $+$ with another formula for which also $\#(F+G)=\#F+\#G$. Next, when $\x$ and $\y$ are disjoint variable sets, the term $F(\x) \times G(\y)$ denotes the formula $F(\x)\land G(\y)$. See that $\#[F(\x) \times G(\y)] = \#F(\x) \times \#G(\y)$.
Finally, we define the \emph{parity quantifier} $\parity$, which we use throughout the paper.  Let $F(\x)$ be a formula with free variables $\x$. Then the quantifier $\parity{\x}$ is defined as follows. $\parity\x F(\x) = True$ if $\#F(\x)\mod{2}=1$ and $\parity\x F(\x) = False$ if $\#F(\x)\mod{2} = 0$.

\section{Toda's Reduction Step by Step}\label{sec:todaBrief}

In order to convert Toda's reduction to an algorithm, we first need to uncover the low level details of the reduction. Up to our knowledge such details were not published anywhere, most likely since the reduction is used as a computational complexity result.
The reduction goes as followed. A given QBF formula $F$ is converted to a quantifier-free formula $F_0$, followed by checking (by using e.g. a model counter) the parity of the number of solutions for $F_0$. Then with high probability, $F$ is satisfiable if and only if the number of solutions for $F_0$ is \emph{odd}. We first describe Valiant-Vazirani Theorem, that lies in the heart of the reduction. We then explain the actual reduction step by step. Our description of the proof is based on~\cite{arora2006computational}, which is the most familiar version, although there are other versions, e.g.~\cite{Goldreich_book}. Naturally, all the theorems and proofs in this section are originated from~\cite{Toda91}. Since the reduction is well-known and proved in many places, we do not provide a formal proof of the reduction, but rather focus on the fine details that are required to convert Toda's reduction to an algorithm.

\subsection{Valiant-Vazirani Theorem}\label{sec:vvthm}

The main component in Toda's reduction is the use of hash functions to probabilistically isolate a single solution, out of possibly many, for a formula $F$. The technique being used relies on the Valiant-Vazirani Theorem~\cite{ValiantVazirani} that we describe below, see~\cite{arora2006computational} for complete details and proof.
For that, we define a family of hash functions $H_{n,m} = \{h: \{0,1\}^n \rightarrow \{0,1\}^m\}$,  where each such function takes the form of $h(\x) = A\x + \vec{b}$, where $A$ is an $m \times n$ matrix, and $\vec{b}$ is an $m$-dimensional vector (all operations are performed modulo 2). Naturally, $h$ can be described as a Boolean formula. To simplify the notation we assume throughout the paper, unless mentioned otherwise, that when we refer to $h(\x)$ as a Boolean formula, we mean the Boolean formula $h(\x)=0^m$.

\begin{theorem}[Valiant–Vazirani]\label{thm:vv}
Let $n,m$ be natural numbers, and let $H_{n,m}$ be as described above.
Given a Boolean formula $F \in \{0, 1\}^n \rightarrow \{0, 1\}$, one can randomly choose $h \in H_{n,m}$ and define $\widetilde{F}(\x) = F(\x) \land h(\x)$, for which (i) If $\: \exists\x F(\x) = True$, then $\text{Pr}\left(\#\widetilde{F}(\x) =1\right) \geq \frac{1}{8n}$; (ii) If $\: \exists\x F(\x) = False$, then $\text{Pr}\left(\#\widetilde{F}(\x) =0\right) = 1$.
\end{theorem}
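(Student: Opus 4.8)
The claim is the standard Valiant-Vazirani isolation lemma. Part (ii) is immediate: if $F$ has no satisfying assignment, then $\widetilde{F}(\x) = F(\x) \wedge h(\x)$ has none either, regardless of which $h$ is picked, so $\#\widetilde{F} = 0$ with probability $1$. The content is in part (i).

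For part (i), the plan is the classical "pick the right dimension $m$" argument. Assume $F$ is satisfiable and let $S = \{\x : F(\x) = \text{True}\} \subseteq \{0,1\}^n$, with $s = |S| \geq 1$. The key structural fact about the hash family $H_{n,m}$ is pairwise independence: for a uniformly random $h \in H_{n,m}$ (i.e., uniformly random matrix $A$ and vector $\vec{b}$), the values $h(\x)$ for distinct $\x$ are pairwise independent and each is uniform on $\{0,1\}^m$. I would first record this fact (or cite it from \cite{arora2006computational}). Then, conditioning on the event that $s$ lies in a dyadic range, pick $m$ with $2^{m-2} \leq s \leq 2^{m-1}$ — more precisely, since $s$ is not known, one chooses $m$ uniformly from $\{2, \ldots, n+1\}$ (or the relevant range), and argues that with probability at least $1/n$ the chosen $m$ satisfies $2^{m-2} \leq s < 2^{m-1}$, i.e. the "right" scale.

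Conditioned on a good $m$, I would show $\Pr(\#\widetilde{F} = 1) \geq \tfrac{1}{8}$ via a second-moment / inclusion–exclusion computation. Let $p = \Pr(h(\x) = 0^m) = 2^{-m}$ for a fixed $\x \in S$. Then $\mathbb{E}[\#\widetilde{F}] = s \cdot 2^{-m}$, which lies in $[\tfrac{1}{4}, \tfrac{1}{2}]$ given the choice of $m$. By pairwise independence, a union-bound / Bonferroni estimate gives
\begin{equation}
\Pr(\#\widetilde{F} \geq 1) \;\geq\; \sum_{\x \in S} \Pr(h(\x) = 0^m) \;-\; \sum_{\x \neq \x' \in S} \Pr(h(\x) = h(\x') = 0^m) \;=\; s 2^{-m} - \binom{s}{2} 2^{-2m},
\end{equation}
and a similar argument bounds $\Pr(\#\widetilde{F} \geq 2) \leq \binom{s}{2} 2^{-2m}$, so that
\begin{equation}
\Pr(\#\widetilde{F} = 1) \;\geq\; s 2^{-m} - 2\binom{s}{2} 2^{-2m} \;\geq\; s 2^{-m}\left(1 - s 2^{-m}\right) \;\geq\; \tfrac14 \cdot \tfrac12 \;=\; \tfrac18.
\end{equation}
Combining with the $\tfrac{1}{n}$ probability of hitting a good $m$ yields $\Pr(\#\widetilde{F} = 1) \geq \tfrac{1}{8n}$ overall.

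**Main obstacle.** The routine parts are the moment estimates; the part that needs care is the bookkeeping on $m$. One must be explicit about the range from which $m$ is drawn and handle the edge cases $s = 1$ (where $m$ small already isolates) and $s$ near $2^n$, so that the event "$2^{m-2} \leq s < 2^{m-1}$" genuinely has probability at least $1/n$ over the choice of $m$ in the stated family; getting the constant $\tfrac{1}{8n}$ rather than something weaker hinges on choosing the dyadic window tightly enough that $s 2^{-m} \in [\tfrac14, \tfrac12]$. Since the paper only needs the statement and cites \cite{arora2006computational} for the proof, I would present the above as a proof sketch and defer the full verification to that reference.
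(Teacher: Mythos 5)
Your proposal is correct and follows the same standard pairwise-independence / second-moment argument that the paper relies on (the paper itself defers the proof to \cite{arora2006computational} but recaps exactly this computation, $\Pr(\#\widetilde{F}=1)\geq s2^{-m}-(s2^{-m})^2$ with $m$ uniform on $[2,n+1]$, in its Section 5.1). One small remark: your final step bounds $x(1-x)$ on $[1/4,1/2]$ by $\tfrac14\cdot\tfrac12=\tfrac18$, which is valid but loose — the true minimum is $\tfrac{3}{16}$, which is precisely the tighter constant the paper exploits in its refinement (Theorem~\ref{thm:improvedVV}); also make sure the dyadic window is taken as $2^{m-2}\leq s\leq 2^{m-1}$ (non-strict) so that $s=2^n$ still lands at $m=n+1$ inside the sampling range.
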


Theorem~\ref{thm:vv} essentially says that if there \emph{exists} a satisfying assignment to $F$, then with a probability of at least $1/8n$, $\widetilde{F}$ has only a single satisfying assignment. If, however, $F$ has no solutions, then $\widetilde{F}$ has no solutions as well. We sometimes refer to the theorem and the method it describes as simply Valiant-Vazirani, or an \emph{isolation}, or a \emph{sieve}.

\subsection{Describing Toda's Reduction}\label{sec:explToda}

Building upon the Valiant-Vazirani technique of probabilistically isolating a single solution, Toda's reduction presents a systematic method for transforming any quantified Boolean formula into a formula that uses only a single parity quantifier ($\parity$).
The reduction follows a series of transformations that can be applied repeatedly to eliminate all existential and universal quantifiers. The process works as follows. 

We start with a given formula $F_d=Q_1\x_1,\cdots Q_d\x_d \hat{F}(\x_1,\cdots,\x_d)$, and $0<\epsilon< 1/2$, where $\hat{F}$ is a quantifier free formula, and $\epsilon$ is our maximum allowed error. Note that $\epsilon < 1/2$ since otherwise we could just replace the whole process with a single-time coin tossing.
First, all universal quantifiers ($\forall\vec{x}$) are replaced with negated existential quantifiers ($\neg\exists\vec{x}\neg$) using standard negation rules. This results in a formula with a prefix with only existential quantifiers and negations of the form $\neg\exists\x_1\neg\exists\x_2\neg\cdots\neg\exists\x_d\neg \hat{F}$ where the outermost (resp. innermost) $\neg$ is removed in case that $Q_1$ (resp. $Q_d$) is existential. Note that if $Q_d$ is universal then we can propagate the innermost negation to $\hat{F}$. Thus we may assume without loss of generality that the innermost $\neg$ is removed, and that the first quantifier is universal, as otherwise the outermost negation is not required.

We next describe the process inductively. We start in step $d$ and decrease to step $0$, where $F_0$ is the resulting formula. Suppose that we are in step $i$ where $1\leq i\leq d$ and we have the formula:

$$F_i=\neg\exists\x_1\neg\cdots\neg\exists\x_i\parity\y_i G_i(\x_1,\cdots,\x_i, \y_i)$$

where $G_i$ is quantifier free and $\y_i$ are fresh variables that were added in the process. Note that for every assignment $\vec{\sigma}$ to $\x_1\cdots \x_i$: the formula $\parity\y_iG_i$ is \true if and only if there is an odd number of assignments for $\y_i$ that make  $G_i(\vec{\sigma},\y_i)$ \true. 
For simplicity we denote the set of variables $(\x_1,\x_2,\cdots,\x_{i-1})$ by $\x$, which we call the \emph{outer variables} and the whole prefix of $\neg\exists\x_1\neg\exists\x_2\cdots\neg\exists\x_{i-1}$ by $\vec{Q}\x$, which we call the \emph{outer prefix}. If $i=1$ then $\vec{Q}\x$ and $\x$ are just empty.

We next show how to eliminate the innermost existential quantifier $\exists\x_i$ of $F_i$. To do this we transform $\exists\x_i$ to a parity quantifier $\parity\x_i$ by conjuncting $\parity\y_iG_i$ with a randomly chosen hash function $h$ from $H_{n_i,m}$ over the variables $\x_i$, where $n_i$ is the number of variables of $\x_i$. We explain: Valiant-Vazirani ensures us that for every given assignment $\vsigma$ for $\x$, if $\parity y_i G_i$ (with $\x_i$ being now its only free variables) has no solutions then $(\parity\y_i G_i(\vsigma,\x_i,\y_i) \land h_1(\x_i))$ has no solutions, which makes $\parity x_i \,(\parity y_i G_i(\vsigma,\x_i,\y_i) \land h_1(\x_i))$ \false. If on the other hand $\parity y_i G_i$  has satisfying assignments then  $(\parity y_i G_i(\vsigma,\x_i, \y_i) \land h_1(\x_i))$ has a single assignment with a small probability, which makes \newline $\parity x_i \,(\parity y_i G_i(\vsigma,\x_i,\y_i) \land h_1(\x_i))$ \true\kern-0.35em.

We now  need to amplify the low probability from Valiant-Vazirani. Note that this amplification should also consider all possible assignments to $\x$. For that, we repeat the process above $k$ times by duplicating $\parity\y_iG_i$ where each time $h_i$ is independently chosen. We end by disjuncting the different duplicates. We note that $k$ is still polynomial, see Section~\ref{sec:numcomp} for the analysis of $k$. All this results in a formula of the form:  

\begin{multline}\label{eq:beginF_i}
F'_i\equiv\vec{Q}\x\neg([\parity\x_i \,(\parity\y_i G_i(\x,\x_i,\y_i) \land h_1(\x_i))] \lor \ldots \lor [\parity\x_i \, \\(\parity\y_i G_i(\x,\x_i,\y_i) \land h_k(\x_i))]).
\end{multline}

See that the outer variables $\x$ remain in every duplication, since the same assignment for $\x$ has to apply for every duplication.
Next, since we are eventually interested in a parity of a quantifier free formula, we convert the formula to an over all form of: 

\begin{multline}\label{eq:endF_i}
{F''}_i\equiv\vec{Q}\x\neg\parity \x^1_i, y^1_i,\cdots \x^k_i \y^k_i [[(G_i(\x,\x^1_i, y^1_i)\land \\
h_1(\x^1_i))+1(\x^1_i,\y^1_i)] \times\cdots \times [(G_i(\x, \x^k_i,\y^k_i)\land h_k(\x^k_i))+1(\x^k_i,y^k_i)]]+1(\x^1_i,\y^1_i\cdots \x^k_i,\y^k_i).
\end{multline}

The conversion from Equation~\ref{eq:beginF_i} to Equation~\ref{eq:endF_i} is quite technical and requires properties of the parity operator. We provide the exact details in  
Appendix~\ref{sec:TodaElements}. Note that the conversion results of many more fresh variables. Specifically every repetition $j$ now needs its own fresh variables $x^j_i$ and $y^j_i$.  See again, however, that the outer variables $\x$ still remain the same in all duplications.

Finally we propagates the left-most negation, right after the outer prefix of ${F''}_i$, into the formula with a cost of another $+1$ formula at the end  since "odd" plus one is "even".
Note that if we are eliminating the outermost quantifier, and it is existential, then this $+1$ addition at the end is not used. We all in all get:

\begin{multline}
F_{i-1}\equiv\vec{Q}\x\parity \x^1_i, y^1_i,\cdots \x^k_i \y^k_i [[(G_i(\x,\x^1_i, y^1_i)\land \\
h_1(\x^1_i))+1(\x^1_i,\y^1_i)] \times\cdots \times [(G_i(\x,\x^k_i,\y^k_i)\land \\
h_k(\x^k_i))+1(\x^k_i,y^k_i)]]+1(\x^1_i,\y^1_i\cdots \x^k_i,\y^k_i) +1(\x^1_i,\y^1_i\cdots \x^k_i,\y^k_i).
\end{multline}


The next iteration proceeds with $F_{i-1}$, where the parity quantified variables
$\x^1_i, y^1_i,\cdots \x^k_i \y^k_i$ are now denoted by $\y_{i-1}$.
The promise that we get at the end of each step conversion is as follows: For every given $\epsilon_i>0$,  where $\epsilon_i=2^{-t}$ for some $t$, there is a number $k_i$ repetitions, polynomial in the size of $F_i$ and $t$, such that if $F_i$ is \true then $F_{i-1}$ is \true with probability of at least $1-\epsilon_i$, and if $F_i$ is \false then $F_{i-1}$ is \emph{false} with probability of at least $1-\epsilon_i$. As with the case of $\epsilon$, we also assume here that $\epsilon_i < 1/2$ for every $i$. We explain in Section~\ref{sec:inner_epsilon_naive} how to allocate these $\epsilon_i$'s, so that
all in all, we get the result in the main theorem of Toda's reduction (see Lemma 17.17 in~\cite{arora2006computational}) that we state below.

\begin{theorem}[{\cite{arora2006computational}}]
\label{thm:major}
Let $F$ be a QBF formula with $d$ alternations, and $0<\epsilon<1/2$. Then we have the following: If $F$ is true, then with probability at least $1-\epsilon$, $F_0$ is true. If $F$ is false, then with probability at least $1-\epsilon$, $F_0$ is false.
\end{theorem}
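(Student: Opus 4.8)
The plan is to prove Theorem~\ref{thm:major} by induction on the number of quantifier-elimination steps, using the per-step promise stated at the end of Section~\ref{sec:explToda} together with a careful allocation of the error budget $\epsilon$ among the $d$ steps. The single-step promise says: for every $i$, if we are given a target error $\epsilon_i = 2^{-t}$ with $\epsilon_i < 1/2$, then there is a repetition count $k_i$, polynomial in $|F_i|$ and $t$, such that the step that produces $F_{i-1}$ from $F_i$ errs (in either direction) with probability at most $\epsilon_i$. Since the randomness used in different steps comes from independently chosen hash functions, the failure events of the individual steps are independent, so the probability that \emph{some} step fails is at most $\sum_{i=1}^d \epsilon_i$ by a union bound. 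Hence it suffices to choose the $\epsilon_i$'s so that $\sum_{i=1}^d \epsilon_i \le \epsilon$, e.g. $\epsilon_i = \epsilon / d$ (rounded down to a power of two, which costs at most another factor of $2$ and is still below $1/2$). Conditioned on all $d$ steps succeeding, the truth value of $F_0$ equals the truth value of $F = F_d$, because each successful step preserves the truth value by construction; this gives the claimed bound $1-\epsilon$ in both the \true and the \false case.

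First I would set up the induction hypothesis precisely: after steps $d, d-1, \ldots, i$ have been performed, with probability at least $1 - \sum_{j=i+1}^d \epsilon_j$ the intermediate formula $F_i$ has the same truth value as $F_d$. The base case $i = d$ is trivial ($F_d = F$, empty sum). For the inductive step, I would condition on the event that the first $d-i$ steps succeeded (so $F_i$ has the right truth value) and apply the single-step promise to the $(i{+}1)$-th elimination: since the hash functions chosen in this step are independent of all previous choices, the conditional probability that this step also succeeds is at least $1 - \epsilon_i$, and multiplying through (or, more simply, another union bound) yields the hypothesis for $i-1$. Taking $i = 0$ gives the theorem.

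The main obstacle, and the place where the argument needs real care rather than bookkeeping, is justifying the single-step promise itself, i.e. that one step of the reduction (the passage from $F_i$ through $F'_i$, $F''_i$ to $F_{i-1}$ in Equations~\ref{eq:beginF_i}--\ref{eq:endF_i}) has two-sided error at most $\epsilon_i$ with only polynomially many repetitions. This requires combining Theorem~\ref{thm:vv} (Valiant--Vazirani) with an amplification analysis that must hold \emph{uniformly over all assignments $\vsigma$ to the outer variables $\x$}: for each fixed $\vsigma$, the $k$ independent hashes each isolate a solution with probability $\ge 1/(8 n_i)$ when $\parity\y_i G_i(\vsigma,\x_i,\y_i)$ is satisfiable, so the disjunction in Equation~\ref{eq:beginF_i} fails only with probability $(1 - 1/(8n_i))^k$; one then takes a union bound over the (at most $2^{|\x|}$, but really at most $2^{n_{i-1}+\cdots}$) relevant outer assignments, which forces $k = k_i = \Theta(n_i \cdot (t + |F_i|))$, still polynomial. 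The correctness of the purely syntactic conversions — the $+1$ and $\times$ manipulations turning Equation~\ref{eq:beginF_i} into the parity-of-quantifier-free form of Equation~\ref{eq:endF_i}, and the final negation-propagation — is deterministic and contributes no error; these are exactly the ``quite technical'' steps deferred to Appendix~\ref{sec:TodaElements}, and I would cite that appendix rather than reproduce the parity-operator identities here. Finally, I would note the one genuinely global subtlety: the error budget allocation interacts with the blow-up, since each $\epsilon_i = 2^{-t_i}$ determines $k_i$ and hence the size of $F_{i-1}$, which in turn feeds into $|F_{i}|$ at the next (earlier) index; but because $d$ is a fixed constant and each $k_i$ is polynomial in the \emph{current} formula size and $\log(d/\epsilon)$, the composition remains polynomial, and the details of this allocation are handled in Section~\ref{sec:inner_epsilon_naive}.
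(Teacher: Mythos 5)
Your proposal is correct and takes essentially the same route the paper does: the per-step guarantee obtained from Valiant--Vazirani plus $k_i$ independent repetitions with a union bound over the $2^s$ outer assignments (the paper's Proposition~\ref{clm:repetitionNaive} in Section~\ref{sec:numcomp}), composed across the $d$ steps by a ``first error at step $i$'' union bound with $\sum_i \epsilon_i \le \epsilon$ (Section~\ref{sec:inner_epsilon_naive}), with the syntactic parity manipulations contributing no error. The only differences are cosmetic: you allocate $\epsilon_i = \epsilon/d$ where the paper uses the geometric $\epsilon/2^i$, and your appeal to independence of the step failures is unnecessary --- the conditioning built into your induction (matching the paper's $\Pr(Fl_{i-1}\mid Tr_i)\le\epsilon_i$ formulation) is all the union bound needs.
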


\section{Toda's Reduction Algorithm}\label{sec:AlgoandChallenges}

Following the description of Toda's reduction in the previous section, we next present Algorithm~\ref{alg:Toda1} that implements the reduction. 
The algorithm proceeds as follows. We are given a QBF formula $F$, with the number of alternations $d$, and a probability measure $\epsilon>0$. First, in Line 2, we compute an \emph{inner-error} probability value $\epsilon_i$ for each quantifier level $i$. This  allocation of error probability at every step ensures that the overall algorithm maintains the desired error bound $\epsilon$. The details of this computation are explained in Section~\ref{sec:inner_epsilon_naive}. Lines 3-5 correspond to the nature of the innermost quantifier $Q_d$, as discussed in the reduction description. The core of the algorithm is in Lines 6-13, where we iterate through all quantifiers from innermost to outermost. We start in Step $d$ and decrease to Step $1$. For each quantifier we apply the \textbf{Amplify} function to obtain a formula of the form of $F''_{i-1}$ as defined in the reduction. Handling the (current) innermost negation (unless we are handling $Q_1$, and $Q_1$ is existential), adds another $+1$ formula to return $F_{i-1}$ that corresponds to $F_{i-1}$ in the reduction. Line 9 says that the operation $+1$ is not added only when we are in step 1 and $Q_1$ is existential. Finally, in Line 14 we return $F_0$, that contains only parity quantifiers, having eliminated all existential and universal quantifiers from the original formula.


\begin{algorithm}[H]
\SetAlgoLined
\KwIn{QBF formula $F=Q_1\x_1,\cdots Q_d\x_d \hat{F}(\x_1,\cdots,\x_d)$, $d,\epsilon>0$, }
\KwOut{Parity-quantified formula}
\SetKwFunction{ComputeInnerError}{ComputeInnerError}
\SetKwFunction{Amplify}{Amplify}
\Begin{
    $\epsilon_1,\dots,\epsilon_{d}\gets$ \ComputeInnerError{$\epsilon, d$}\

     \If{$Q_d\equiv\forall$}{
        
                $\hat{F} \gets \neg\hat{F}$\
            }
     
      $\hat{F_d} \gets \hat{F}$\
      
     \For{$i \gets d$ \KwTo $1$}{
      $\hat{F_{i-1}}(\x_1,\cdots,\x_{i-1},\y_{i-1}) \gets \Amplify(\hat{F_{i}},\x_1,\cdots,\x_{i-1},\epsilon_{i})$\
      
      \If{$i \not\equiv 0$ \textbf{or} $Q_1\equiv\forall$}{
                
                $\hat{F_{i-1}} \gets \hat{F_{i-1}}+1$\
                }
       $F_{i-1} = Q_1\x_1,\cdots Q_{i-1}\x_{i-1} \parity \y_{i-1} \hat{F_{i-1}}(\x_1,\cdots,\x_{i-1},\y_{i-1})$\      
    }
    \Return $F_0(\y_0)$           
}
\caption{\TodaOne }\label{alg:Toda1}
\end{algorithm}


We next describe the function \textbf{Amplify}, shown in Algorithm~\ref{alg:Amplify}.
In Line 2 we calculate $s$, the total number of variables in the outer quantifiers. Using $s$, $\epsilon$ and the $\x_i,\y_i$ variables, we then determine the number of repetitions $k_i$ needed to achieve the desired error probability by using the \textbf{ComputeRepetitions} function, described in Section~\ref{sec:numcomp}.
Then, for each repetition, we apply in Line 5 the function \textbf{hash}, in a form of a set of XOR constraints, as described in the reduction. 
Although well-known, we provide for completion a detailed description of \textbf{hash} in Appendix~\ref{sec:hashAlg}.
Finally, in Line 7 we combines all repetitions using the operations described in the Toda's reduction description in previous section.


\begin{algorithm}[H]
\SetAlgoLined
\KwIn{Boolean formula $F(\x_1,\cdots,\x_{i-1},\x_i,\y_i)$, $\x_1,\cdots,\x_{i-1},\epsilon>0$, }
\KwOut{Boolean Formula $\hat{F}$}
\SetKwFunction{hash}{hash}
\SetKwFunction{computeReps}{ComputeRepetitions}
\Begin{
    $s\gets \left(\sum_{j=1}^{j=i-1}|\x_j|\right)$
    
   $k_i\gets$ \computeReps{$s, |\y_i|+|\x_i|, \epsilon$} \Comment{See next section} \
   
     \For{$j \gets 1$ \KwTo $k_i$}{
          $F_j= \hash(F, \x_i)$ \         
        }
 \Return  $[(F_1+1)\times\cdots \times (F_{k_i}+1)]+1$ (Note that variables $\x_i,\y_i$  are duplicated here.)
  }
\caption{Amplify}\label{alg:Amplify}
\end{algorithm}


\subsection{Algorithm Analysis}\label{sec:alganal}

We next analyze some key components of the Toda's reduction algorithm. Specifically we explain how we compute the inner-error probabilities, and how we determine the number of repetitions needed for the function \textbf{ComputeRepetitions}. In addition, we asses the size of the resulting formula. Finally, we discuss the practical efficiency of the algorithm.

\subsubsection{Computing the Inner-Error Probability}\label{sec:inner_epsilon_naive}

To achieve the overall error bound $\epsilon$, we allocate error budgets across each step of Toda's reduction. While we stick to the error analysis in~\cite{arora2006computational}, we simplify and use a union bound over the entire reduction sequence rather than an inductive approach. Note though that we still maintain the geometric error allocation $\epsilon/2^i$ as in~\cite{arora2006computational}. As such, we call this method of allocation \emph{geometric allocation}.

We first add some necessary formalism.
For each step $i$, we define $Tr_i$ as the event that formula $F_i$ is \true, and $Fl_i$ as the event that formula $F_i$ is \false.
Since either the given formula $F=F_d$ is \true or \false, we have that $\Pr(Tr_d) \in \{0,1\}$. Our objective is to ensure that
$\Pr(Fl_0 \mid Tr_d) \leq \epsilon$ and $\Pr(Tr_0 \mid Fl_d) \leq \epsilon$, since this achieves the overall error bound $\epsilon$. As such, We define the inner-error probability at step $i$ as $\epsilon_i = \max(\Pr(Fl_{i-1} \mid Tr_{i}),\Pr(Tr_{i-1} \mid Fl_{i}))$.

\begin{proposition}\label{clm:innerprobnaive}
 Setting $\epsilon_i = \frac{\epsilon}{2^{i}}$ for every step $i$, ensures that the overall error bound $\epsilon$ is achieved.
\end{proposition}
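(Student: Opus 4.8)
The plan is to show that a union bound over the $d$ reduction steps, combined with the geometric allocation $\epsilon_i = \epsilon/2^i$, yields total error at most $\epsilon$. First I would set up the telescoping structure: conditioned on $Tr_d$ (the case $Fl_d$ is symmetric), the event $Fl_0$ can only occur if at some step $i$ the reduction ``flips'' the truth value, i.e. $F_i$ is \true but $F_{i-1}$ is \false. Formally, if $F_0$ is \false while $F_d$ is \true, then along the chain $F_d, F_{d-1}, \ldots, F_0$ there must be a smallest index $i$ at which $F_{i-1}$ is \false while $F_i$ is \true. Hence $Fl_0 \cap Tr_d \subseteq \bigcup_{i=1}^{d} (Fl_{i-1} \cap Tr_i)$, up to the observation that the events $Tr_i$ partition appropriately.

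Next I would bound each term. By the definition of the inner-error probability, $\Pr(Fl_{i-1}\mid Tr_i)\le\epsilon_i$, and similarly $\Pr(Tr_{i-1}\mid Fl_i)\le\epsilon_i$; this is exactly the per-step promise stated at the end of Section~\ref{sec:explToda} (with $k_i$ chosen so the step error is at most $\epsilon_i = 2^{-t}$). So I would write
\[
\Pr(Fl_0\mid Tr_d)\;\le\;\sum_{i=1}^{d}\Pr(Fl_{i-1}\mid Tr_i)\;\le\;\sum_{i=1}^{d}\epsilon_i\;=\;\sum_{i=1}^{d}\frac{\epsilon}{2^{i}}\;=\;\epsilon\Bigl(1-\frac{1}{2^{d}}\Bigr)\;<\;\epsilon,
\]
and symmetrically $\Pr(Tr_0\mid Fl_d)<\epsilon$. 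Since $\Pr(Tr_d)\in\{0,1\}$, one of the two conditional statements is the whole story, and either way the overall error is below $\epsilon$.

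The one delicate point — the main obstacle — is making the union-bound step honest. Conditioning is on the \emph{initial} event $Tr_d$, but the per-step guarantee is phrased conditionally on $Tr_i$, and the intermediate formulas $F_i$ are themselves random (they depend on the hash functions drawn in later steps). So I need to justify passing from $\Pr(Fl_{i-1}\cap Tr_i \mid Tr_d)$ to something controlled by $\epsilon_i$. The clean way is to note that the randomness used at step $i$ (the $k_i$ hash functions that transform $F_i$ into $F_{i-1}$) is independent of the randomness used at all other steps, and the step guarantee holds for \emph{every} fixed input formula $F_i$ in the relevant truth state; therefore $\Pr(Fl_{i-1}\mid Tr_i, \text{everything from steps }>i)\le\epsilon_i$ pointwise, and we may take expectations and then drop the conditioning on $Tr_d$ by $\Pr(Fl_{i-1}\cap Tr_i\mid Tr_d)\le\Pr(Fl_{i-1}\mid Tr_i\cap Tr_d)\le\epsilon_i$ (the last inequality again using that the step-$i$ guarantee is uniform over inputs). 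Once this is spelled out, the remaining arithmetic is the routine geometric sum above, and the proposition follows. I would also remark that the $<\epsilon$ (rather than $\le\epsilon$) leaves a small slack of $\epsilon/2^d$, which is why the geometric allocation is safe even though we replaced the inductive argument of~\cite{arora2006computational} with a single union bound.
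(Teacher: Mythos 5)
Your proof is correct and follows essentially the same route as the paper's: decompose the failure event by the first step at which the truth value flips, bound each term by the per-step guarantee $\epsilon_i$, and sum the geometric series $\sum_{i=1}^{d}\epsilon/2^{i}<\epsilon$. Your extra paragraph justifying the passage from $\Pr(Fl_{i-1}\cap Tr_i\mid Tr_d)$ to $\epsilon_i$ via the independence and input-uniformity of the step-$i$ randomness is a point the paper's proof leaves implicit, but it does not change the argument.
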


\begin{proof}
Assume that the formula $F$ is \true (the argument for $F$ being \false is similar). We then have the following bound: If the formula $F$ that we start with is \true (event $Tr_d$) then in order to end up with an incorrect result (event $Fl_0$), we must make our first error at some step $i \in \{1,2,\dots,d\}$. This is the probability that we maintain correctness through the previous steps, multiplied by the probability of making an error when transitioning from step $i$ to step $i-1$.  Formally we have:
\newline $\Pr(\text{first error at step } i \mid Tr_d ) = \Pr(Tr_i \land\cdots\land Tr_{d-1} \mid Tr_d) \cdot \Pr(Fl_{i-1} \mid Tr_i)$,
and all in all, we have: 

\[
\Pr(Fl_0 \mid Tr_d) \leq \sum_{i=1}^{d} \Pr(Tr_i \land\cdots\land Tr_{d-1}\mid Tr_d) \cdot \Pr(Fl_{i-1} \mid Tr_i) \leq \sum_{i=1}^{d} \Pr(Fl_{i-1} \mid Tr_{i}) \leq \sum_{i=1}^{d} \epsilon_i
.\]

Therefore, we assign $\epsilon_i = \frac{\epsilon}{2^{i}}$ as the inner-error bound in Step $i$. This gives us the total bound of at most $\epsilon$ since then we have that $\Pr(Fl_0 \mid Tr_d) \leq \epsilon \sum_{i=1}^{d} \frac{1}{2^{i}} \leq \epsilon$.
\end{proof}

\subsubsection{Computing the Number of Repetitions}\label{sec:numcomp}

For each quantifier elimination step, we need to determine how many repetitions of the Valiant-Vazirani technique are required to achieve the desired bound over the inner-error probability. Define the number of repetitions in step $i$ to be $k_i$. Note that $k_i$ is the returning result of \textbf{ComputeRepetitions} from Algorithm~\ref{alg:Amplify}. 
Let $p$ be the success probability of a single application of Valiant-Vazirani. More formally, $p$ is the probability, when randomly choosing $h_1\in H_{n_i,m}$, that the formula $\parity\x_i(\parity\y_i G_i(\vsigma,\x_i,\y_i) \land h_1(\x_i))$, from the reduction description in Section~\ref{sec:explToda}, is logically equivalent to $\exists\x_i\parity\y_iG_i(\vsigma,\x_i,\y_i)$. Note that here $\vsigma$ is an assignment to the outer variables $\x$. Then by the Valiant-Vazirani theorem, $p$ is at least $1/8(|\y_i|+|\x_i|)$. 
Note that we need to ensure success across all possible assignments $\vec{\sigma}$ to $\x$ (the outer variables). Specifically, if there are $s$ such outer variables (resulting in $2^s$ possible assignments), then by union bound we need the failure probability to be at most $\epsilon_i/2^s$ for each individual assignment.
As such, we have the following proposition, see proof in  Appendix~\ref{sec:repnaiveanalysis}.

\begin{proposition}\label{clm:repetitionNaive}
Setting $k_i \geq \log_{1-p}(\epsilon_i) - s\log_{1-p}(2)$
ensures that the probability for error at step $i$ is at most $\epsilon_i$.
\end{proposition}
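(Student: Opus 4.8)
The plan is to bound, for a single assignment $\vsigma$ to the outer variables $\x$, the probability that all $k_i$ independent applications of Valiant--Vazirani fail to isolate, and then apply a union bound over the $2^s$ outer assignments. First I would fix $\vsigma$ and recall from Section~\ref{sec:numcomp} that a single application succeeds with probability $p \geq 1/8(|\y_i|+|\x_i|)$, where ``success'' means the isolated duplicate correctly reflects $\exists\x_i\parity\y_i G_i(\vsigma,\x_i,\y_i)$. Crucially, when $\exists\x_i\parity\y_i G_i(\vsigma,\x_i,\y_i)$ is \false, item (ii) of Theorem~\ref{thm:vv} guarantees that \emph{every} duplicate has no solutions, so the disjunction in Equation~\ref{eq:beginF_i} is correctly \false with probability $1$; hence the only way to err on the fixed $\vsigma$ is when the formula is \true and all $k_i$ independent hash choices fail to isolate, which happens with probability at most $(1-p)^{k_i}$ by independence of the $h_j$'s.

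Next I would impose the per-assignment requirement $(1-p)^{k_i} \leq \epsilon_i / 2^s$ and solve for $k_i$. Taking $\log$ (base $1-p$, which is in $(0,1)$, so the inequality direction flips) gives $k_i \geq \log_{1-p}(\epsilon_i/2^s) = \log_{1-p}(\epsilon_i) - s\log_{1-p}(2)$, which is exactly the stated bound. Then a union bound over the at most $2^s$ assignments $\vsigma$ to $\x$ yields that the probability that \emph{some} outer assignment is handled incorrectly is at most $2^s \cdot (\epsilon_i/2^s) = \epsilon_i$; and since the correctness of $F_{i-1}$ relative to $F_i$ is determined by correctness of the inner transformation on the relevant outer assignments (the outermost prefix $\vec Q\x$ and negations are applied identically), this bounds $\Pr(Fl_{i-1}\mid Tr_i)$ and, symmetrically, gives $\Pr(Tr_{i-1}\mid Fl_i)=0$, so $\epsilon_i = \max(\cdot,\cdot)$ is bounded by $\epsilon_i$ as claimed.

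The main obstacle, and the point deserving the most care, is justifying that controlling the error ``for each individual assignment $\vsigma$'' actually controls the error of the quantified formula $F_{i-1}$ as a whole. One has to argue that if every duplicate behaves correctly for every outer assignment on which correctness matters, then the sequence of syntactic transformations from Equation~\ref{eq:beginF_i} through $F_{i-1}$ preserves the intended semantics --- in particular that the parity-to-$+1$-to-product rewriting (deferred to Appendix~\ref{sec:TodaElements}) and the negation propagation do not introduce additional error. I would handle this by noting that those rewritings are \emph{deterministic} and \emph{logically equivalence-preserving conditioned on the isolation succeeding}, so the randomness enters only through the $h_j$ choices already accounted for; the union bound over $2^s$ outer assignments is the only place probabilities accumulate, and it is tight enough because $k_i$ was chosen to absorb the $2^s$ factor. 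The remaining steps are the routine logarithm manipulation above.
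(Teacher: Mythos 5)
Your proof is correct and follows essentially the same route as the paper's: bound the all-repetitions failure probability for a fixed outer assignment by $(1-p)^{k_i}$, require it to be at most $\epsilon_i/2^s$, union-bound over the $2^s$ outer assignments, and solve the logarithmic inequality (the paper's proof is in fact terser, omitting your discussion of one-sidedness and of why per-assignment correctness suffices). One small caveat: which of $\Pr(Fl_{i-1}\mid Tr_i)$ and $\Pr(Tr_{i-1}\mid Fl_i)$ vanishes depends on the parity of negations in front of the eliminated quantifier (existential vs.\ universal), not always the one you named, but since the proposition only needs the maximum of the two to be at most $\epsilon_i$, this does not affect the conclusion.
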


Note that Proposition~\ref{clm:repetitionNaive} ensures that the probability of failure for any specific assignment to the free variables is sufficiently low, and by extension, the probability of failure across \emph{all} possible assignments for the outer variables remains below our threshold as well.

\subsubsection{Formula Size Analysis}\label{sec:sizeComp}

The size of the resulting parity-quantified formula $F_0$ grows substantially with each quantifier elimination step. For a QBF formula $F$, the product of the repetitions is a lower bound to the total increase factor in Toda's reduction. This is since doing $k_i$ repetitions at step $i$ means creating $k_i$ copies of $F_{i+1}$. Recall that $|F|$ is the size of the string representing the original formula, $d$ is the number of quantifiers, and $p_i$, $s_i$ are respectively the success probability for a single repetition and number of free variables at step $i$. Then by using Proposition~\ref{clm:repetitionNaive} we get the following corollary.

\begin{cor}\label{cor:size}
We have that 
\[
|F_0| \geq |F| \times \prod_{i=1}^d k_i \geq |F| \times \prod_{i=1}^d \log_{1-p_i}\left(\frac{\epsilon}{2^{i+s_i}}\right)
\]
\end{cor}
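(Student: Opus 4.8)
The plan is to read off the bound directly from the two facts already established: the size lower bound in terms of the repetition counts, and the lower bound on each $k_i$ from Proposition~\ref{clm:repetitionNaive}. First I would justify the inequality $|F_0| \geq |F| \times \prod_{i=1}^d k_i$. At step $i$ the function \textbf{Amplify} creates $k_i$ syntactic copies of the current formula $\hat{F_i}$ (these are the duplicates $F_1,\dots,F_{k_i}$ in Algorithm~\ref{alg:Amplify}), which are then combined by $\times$ and $+1$ operations that only add material. Since $\hat{F_i}$ in turn already contains all the copies produced at steps $d, d-1, \dots, i+1$, and $\hat{F_d}$ has size at least $|F|$ (it is essentially $\hat F$, possibly negated), a straightforward induction from step $d$ down to step $1$ gives $|\hat{F_0}| \geq |F| \times \prod_{i=1}^d k_i$, and $|F_0| \geq |\hat{F_0}|$ since $F_0$ is just $\hat{F_0}$ with a parity prefix prepended.

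Next I would substitute the bound on $k_i$. Proposition~\ref{clm:repetitionNaive} gives $k_i \geq \log_{1-p_i}(\epsilon_i) - s_i\log_{1-p_i}(2)$, where $p_i$ is the single-repetition success probability and $s_i$ is the number of outer variables at step $i$. Combining the two logarithms (both have base $1-p_i \in (0,1)$, so this is legitimate) yields $k_i \geq \log_{1-p_i}(\epsilon_i) - \log_{1-p_i}(2^{s_i}) = \log_{1-p_i}\!\left(\epsilon_i / 2^{s_i}\right)$. Finally, by Proposition~\ref{clm:innerprobnaive} we have $\epsilon_i = \epsilon/2^i$, so $\epsilon_i/2^{s_i} = \epsilon/2^{i+s_i}$, giving $k_i \geq \log_{1-p_i}\!\left(\epsilon/2^{i+s_i}\right)$. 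Plugging this into the product and multiplying by $|F|$ yields the claimed chain of inequalities.

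There is essentially no hard analytic step here — the corollary is a bookkeeping consequence of results proved just above it. The only point requiring a little care is the first inequality: one must be explicit that the $k_i$ duplications at step $i$ act multiplicatively on a formula that \emph{already} carries the duplications from all deeper steps, so that the factors genuinely compound into a product $\prod_{i=1}^d k_i$ rather than merely summing; and one should note that the $+1$ and $\times$ operations never shrink the formula, so the copies survive intact in the final $F_0$. A secondary subtlety is the change of notation between $n_i = |\x_i|$ used in Valiant–Vazirani and the quantity $|\y_i| + |\x_i|$ appearing in the repetition bound, but this is already fixed by the statement of Proposition~\ref{clm:repetitionNaive} and need not be revisited. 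Once those observations are in place, the corollary follows by direct substitution.
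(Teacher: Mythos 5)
Your proposal is correct and follows essentially the same route as the paper: the first inequality comes from the observation that each \textbf{Amplify} step makes $k_i$ copies of the formula accumulated so far (so the factors compound into $\prod_i k_i$), and the second comes from substituting the bound of Proposition~\ref{clm:repetitionNaive} together with the geometric allocation $\epsilon_i = \epsilon/2^i$ and merging the two logarithms. The paper only sketches this in the surrounding text; your write-up just makes the inductive compounding explicit.
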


\subsubsection{Discussing the Algorithm's Efficiency}\label{sec:efficiency}

The huge growth in formula size represents the main challenge in using Toda's reduction as an algorithm for solving QBF. Each time we eliminate a quantifier we multiply the formula size by the number of repetitions needed. To see this in practice, we constructed a naive implementation of Algorithm~\ref{alg:Toda1} that calls a state-of-the-art model counter solver as an oracle (see Appendix~\ref{sec: EnginImprov} for more  details).  While  our implementation handled 3 quantifiers and 4 variables, it already reached the 2hrs timeout on a formula with 3 quantifiers, 5 variables and 5 clauses. On slightly bigger formulas our prototype already got a memory crash (using 12GB space) and did not even reach the model-counting call.

It is not hard to see that the main reason for the formula size blow up is the low success probability at every step of $p_i=1/8(|\y_i|+|\x_i|)$, given by Valiant-Vazirani isolation lemma. Conceiving another sieve that can guarantee a considerably better success probability seems to be a highly non-trivial challenge~\cite{DellKMW13}. Nevertheless, we ask ourselves whether, despite this obstacle,  there is a room for a potential algorithmic and practical improvement. For a start, upon a closer examination, we find the analysis of the bounds to be too naive, and we refine these bounds.
Then, we devise a method to amplify success probabilities in randomized constructions, which we use to create an amplified sieve.

\section{Refining Theoretical Bounds}\label{sec:alganal}



Based on the naive analysis presented in the previous section, we next develop a more refined approach that captures the complexity of the quantifier interactions. We start by providing a better bound to the Valiant-Vazirani. Then we show how to refine the inner-error probability. All this results in smaller $k_i$, that is, a smaller number of repetitions.

\subsection{Refining Valiant-Vazirani}\label{sec:improved-val-vaz}

We first notice that although the Valiant-Vazirani theorem provides a probability bound of $\frac{1}{8n}$ their actual estimate is $\frac{3}{16n}$, more than  50\% larger than what is stated. The reason for stating $\frac{1}{8n}$ is perhaps motivated by theoretical considerations. Since our goal, however, is to improve the algorithm's efficiency, we can exploit the tighter bound of $\frac{3}{16n}$.

In details, revisiting the proof, Valiant-Vazirani first choose the parameter $m$ uniformly from the range $[2, n + 1]$, then convert formula $F$ to $F \land h_m$ where $h_m$ is randomly chosen from a specific hash family $H_{n,m}$, with $m$ determining the size of $h_m$. $H_{n,m}$ is a family of pairwise independent hash functions, which means that for every $x_1, x_2 \in \{0,1\}^n$ and every $y_1, y_2 \in \{0,1\}^m$ we have that: $\Pr_{h \in H_{n,m}}[h(x_1) = y_1 \land h(x_2) = y_2] = (\frac{1}{2^{m}})^2$. Using that, Valiant-Vazirani shows that:
$\text{Pr}(|F \land h_m| = 1) \geq (|S| \cdot 2^{-m}) - (|S| \cdot 2^{-m})^2$, where $|S|$ is the number of satisfying assignments for $F$. Since $m$ is chosen uniformly from $[2, n + 1]$ we have a probability of $1/n$ to choose an "exact" $m$ such that $\frac{1}{4} \leq |S| \cdot 2^{-m} \leq \frac{1}{2}$, which results in $\text{Pr}(|F \land h_m| = 1) \geq \frac{3}{16}$, or $\frac{3}{16n}$ overall.

However, we can do even better and improve this bound by simply considering additional cases of $m$. We thus have the following.

\begin{theorem}~\label{thm:improvedVV}
The Valiant-Vazirani method provides a success probability of at least $\frac{19}{64n}$.
\end{theorem}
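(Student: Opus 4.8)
The plan is to push the argument from the proof sketch of the $\frac{3}{16n}$ bound by summing the success probability over \emph{several} choices of $m$, rather than just the single ``exact'' value. Recall the two ingredients already established in the excerpt: (i) $m$ is drawn uniformly from $\{2,\dots,n+1\}$, so each fixed value is hit with probability $1/n$; and (ii) for a fixed $m$, writing $q := |S|\cdot 2^{-m}$, we have $\Pr(|F\land h_m|=1)\ge q - q^2$. So overall $\Pr(|F\land h_m|=1)\ge \frac{1}{n}\sum_{m=2}^{n+1}\bigl(q_m - q_m^2\bigr)$ where $q_m = |S|\,2^{-m}$, and the task is to lower-bound $\sum_m (q_m - q_m^2)$ uniformly over the unknown integer $|S|\ge 1$.

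First I would fix the ``central'' value $m_0$ to be the unique integer with $\tfrac14 \le |S|\,2^{-m_0} \le \tfrac12$ (this exists whenever $|S|\ge 1$ and $m_0\le n+1$, which is the case that matters). Set $a := q_{m_0} = |S|\,2^{-m_0}\in[\tfrac14,\tfrac12]$. Then $q_{m_0+1} = a/2 \in [\tfrac18,\tfrac14]$, $q_{m_0+2} = a/4$, etc., and $q_{m_0-1} = 2a$ — but $2a$ can be as large as $1$, so the term $q_{m_0-1}-q_{m_0-1}^2$ may be as small as $0$ and is not safe to include in general. The clean move is therefore to only add the terms for $m = m_0$ and $m = m_0+1$ (both of which have $q\le \tfrac12$, hence $q-q^2\ge 0$ and in fact $q-q^2$ is increasing on $[0,\tfrac12]$). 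The worst case for the pair sum $f(a) := (a-a^2) + (\tfrac a2 - \tfrac{a^2}{4})$ over $a\in[\tfrac14,\tfrac12]$ is at the endpoints; a quick check: $f$ is concave, $f(\tfrac14) = \tfrac{3}{16} + \tfrac{7}{64} = \tfrac{19}{64}$ and $f(\tfrac12) = \tfrac14 + \tfrac{3}{16} = \tfrac{7}{16} = \tfrac{28}{64}$, so $\min = \tfrac{19}{64}$ attained at $a = \tfrac14$. Hence $\sum_{m=2}^{n+1}(q_m - q_m^2) \ge f(a) \ge \tfrac{19}{64}$, giving $\Pr(|F\land h_m|=1)\ge \tfrac{19}{64n}$.

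One technical point I would be careful about: the interval $[2,n+1]$ must actually contain both $m_0$ and $m_0+1$. When $|S|$ is small, $m_0$ could be $1$ or smaller (e.g. $|S|=1$ gives $\tfrac14\le 2^{-m_0}\le\tfrac12$, i.e. $m_0\in\{1,2\}$ — take $m_0=2$). When $|S|$ is large, $m_0$ might equal $n+1$ and then $m_0+1$ falls outside the range; in that boundary situation one would instead use the pair $\{m_0-1, m_0\}$, noting that with $m_0 = n+1$ we then have $q_{m_0} = |S|2^{-(n+1)} \le 2^n \cdot 2^{-(n+1)} = \tfrac12$ and $q_{m_0-1} = 2q_{m_0}\le 1$, and one re-optimizes over the admissible sub-range of $a$. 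Handling these edge cases — confirming that in every regime of $|S|$ one can find two consecutive admissible values of $m$ whose $q$-values lie in a range where the paired sum $f$ stays $\ge \tfrac{19}{64}$ — is the only real obstacle; the optimization itself is a one-variable concavity calculation. I expect the cleanest writeup simply notes that for the relevant parameter regime ($|S|\le 2^n$, the only case where isolation is possible at all) the two values $m_0, m_0+1$ are both in $\{2,\dots,n+1\}$ with the stated $q$-bounds, and the corollary follows.
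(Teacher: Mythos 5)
Your core argument is the same as the paper's: lower-bound the success probability by the combined contribution of two adjacent values of $m$, one with $|S|2^{-m}\in[\tfrac14,\tfrac12]$ (worth at least $\tfrac{3}{16}$) and the next with $|S|2^{-m}\in[\tfrac18,\tfrac14]$ (worth at least $\tfrac{7}{64}$), giving $\tfrac{3}{16}+\tfrac{7}{64}=\tfrac{19}{64}$; your concavity analysis of the linked pair $f(a)=(a-a^2)+(\tfrac{a}{2}-\tfrac{a^2}{4})$ on $[\tfrac14,\tfrac12]$ is simply a cleaner way of locating the worst case at $a=\tfrac14$. One welcome difference: you fold $|S|=1$ into the main argument (taking $m_0=2$ gives $q_{m_0}=\tfrac14$, $q_{m_0+1}=\tfrac18$, hence exactly $\tfrac{19}{64}$, for $n\ge 2$), whereas the paper splits it off and evaluates $\tfrac1n\sum_{k=2}^{n+1}2^{-k}$ directly; your route shows that special case is unnecessary. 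A small slip: the inequality $\sum_{m=2}^{n+1}(q_m-q_m^2)\ge f(a)$ is false as written, because for $m\le m_0-2$ one has $q_m\ge 1$ and those summands are negative (possibly very negative). You must discard all terms other than $m_0,m_0+1$ by appealing to nonnegativity of the conditional probabilities themselves, not of $q_m-q_m^2$; this is exactly what you describe in prose, so only the displayed chain needs repair.

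The genuine gap is the large-$|S|$ boundary that you flag but do not close. When $2^{n-1}<|S|\le 2^n$ one has $m_0=n+1$, so $m_0+1$ is out of range, and your fallback pair $\{m_0-1,m_0\}$ does not suffice: at $|S|=2^n$ (so $a=\tfrac12$ and $q_{m_0-1}=1$) the pair contributes only $(\tfrac12-\tfrac14)+0=\tfrac14<\tfrac{19}{64}$, so no re-optimization over that sub-range recovers the claimed constant from the $q-q^2$ bound alone. Closing this regime requires an extra ingredient, e.g.\ a sharper estimate than $q-q^2$ for the specific affine family, or a direct computation in the spirit of the paper's $|S|=1$ case. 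You should know that the paper's own proof silently has the same hole (its second value of $m$, the one with $|S|2^{-m}\in[\tfrac18,\tfrac14]$, also falls outside $[2,n+1]$ exactly in this regime), so noticing the issue is to your credit; but the fix you sketch does not work as stated, and as it stands the theorem is only proved for $|S|\le 2^{n-1}$ by either argument.
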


\begin{proof}

We consider two cases: of $|S|>1$ and $|S|=1$. For $|S| > 1$, we consider the value of $m$ to be one less than the "exact" value. Formally if we choose $m$ for which $\frac{1}{8} \leq |S| \cdot 2^{-m} \leq \frac{1}{4}$, then we get $\text{Pr}(|F \land h_m| = 1) \geq \frac{7}{64}$. Then, if we combine this with the $3/16n$ from the case of choosing the "exact" $m$, we get a total probability of at least $\frac{19}{64n}$, an improvement by a factor of 2.375 from the reported  $\frac{1}{8n}$.

For the case of $|S|= 1$, the previous argument does not work since  when $|S|=1$ then the "exact" $m$ satisfying $\frac{1}{4} \leq |S| \cdot 2^{-m} \leq \frac{1}{2}$ would be $m=2$, but taking "one less" case would require $m=1$, which is out of the range $[2, n+1]$ that we specified. Nevertheless, for this specific case, the success probability can be computed directly to show to be bigger than $19/64n$. For that, note that when $F$ has exactly one satisfying assignment $x^*$, the formula $F \land h_m$ has a unique solution exactly when $h_m(x^*) = 0^m$. As $H_{n,m}$ is a pairwise independent family, the probability of this event is $2^{-m}$. Since $m$ is chosen uniformly from $[2, n+1]$, we have:
\begin{align*}
\text{Pr}(|F \land h_m| = 1) &= \sum_{k=2}^{n+1} \text{Pr}(|F \land h_k| = 1 \;\mid \; m=k) \cdot \text{Pr}(m=k) \quad \text{(law of total probability)}\\
&= \sum_{k=2}^{n+1} 2^{-k} \cdot \frac{1}{n} = \frac{1}{n} \sum_{k=2}^{n+1} 2^{-k} = \frac{1}{n}\left(\frac{1/4 - 1/2^{n+2}}{1 - 1/2}\right) \quad \text{(geometric series)}\\
&= \frac{1}{n}\left(\frac{1}{2} - \frac{1}{2^{n+1}}\right) > \frac{19}{64n} \quad \text{(for n $\geq$ 2)}.
\end{align*}
\end{proof}

\subsection{Refining the Inner-Error Probability}\label{sec:innerComp}

While our earlier analysis used the simplified geometric allocation with $\epsilon_i = \frac{\epsilon}{2^i}$, we provide here a more precise analysis that captures the different error behaviors of the existential and universal quantifiers.  In addition, we  assign the same inner-error probability for all steps. In that sense, we are "balancing the load" between the various steps, hence we name this method \emph{balanced allocation}. We note that this is not necessarily the optimal method and finding the optimal balance for a given formula may in general be non-trivial. Towards the end of this section, we discuss the benefits of balanced allocation. To choose the correct measure though, some analysis has to be made, that resulted in the following theorem.


\begin{theorem}\label{clm:innerprobnaive}
 Let $p$ be such that $1-\epsilon\leq (p^{\left(2\left \lceil{\frac{d}{2}}\right \rceil \right)} + p)/(p+1)$. Then 
 setting $\epsilon_i = 1-p$ for every $i$ ensures the overall error bound $\epsilon$ is achieved.
\end{theorem}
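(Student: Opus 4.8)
The plan is to track the two error modes separately through the reduction, since existential and universal quantifier eliminations behave asymmetrically. Recall from the reduction description that a single elimination step consists of the Valiant--Vazirani amplification (which produces the formula $F''_i$) followed, unless we are at an existential outermost quantifier, by an extra $+1$ operation that propagates the innermost negation. The key observation is that the amplification introduces a one-sided error: if $F_i$ is \false then $F_{i-1}$ is \false with certainty (part (ii) of Theorem~\ref{thm:vv} gives no solutions, hence even parity, with probability $1$), whereas if $F_i$ is \true the construction may fail. But after the negation is propagated, the roles swap at the next level. So first I would set up, for each step $i$, a success probability $q_i$ for the ``hard'' direction and argue that the ``easy'' direction succeeds with probability $1$; then I would compose these along the chain $F_d \to F_{d-1} \to \cdots \to F_0$, being careful that negations alternate which direction is hard.

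Next, with the per-step budget $\epsilon_i = 1-p$ fixed and identical across steps, I would compute the surviving success probability after all $d$ steps. Because the hard direction only occurs at roughly every other step (the other steps being the certain direction when the relevant sub-case holds), a ``true'' input that must survive $d$ alternations effectively faces about $\lceil d/2 \rceil$ independent hard amplifications, each succeeding with probability $\ge p$. This is where the exponent $2\lceil d/2\rceil$ in the hypothesis enters: iterating the amplification $k_i$ times drives the per-level failure below $1-p$, and chaining $\lceil d/2\rceil$ such levels gives a cumulative success of at least $p^{\,2\lceil d/2\rceil}$ in the worst case; the remaining ``$+p$'' and the normalization by $p+1$ come from averaging the two parities / the contribution of the steps where the error is genuinely one-sided. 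I would make the recursion explicit: if $a_i$ denotes a lower bound on $\Pr(\text{correct at step } i-1 \mid \text{correct at step } i)$ in the worst of the two cases, show $a_i \ge p$ on hard steps and $a_i = 1$ on easy steps, then telescope.

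The main obstacle, and the part deserving the most care, is the bookkeeping of which of the two error events ($Fl_{i-1}\mid Tr_i$ versus $Tr_{i-1}\mid Fl_i$) is the dominant one at each step once the negations are propagated, and verifying that the $+1$ operations — including the extra one appended when handling the innermost negation — preserve the parity semantics exactly so that the only loss is the Valiant--Vazirani amplification loss. In particular I need to check the edge behavior: the outermost existential quantifier where no trailing $+1$ is added, and the assumption (made w.l.o.g.\ earlier in Section~\ref{sec:explToda}) that the first quantifier is universal and the innermost negation has been absorbed into $\hat F$. Once the alternation structure is pinned down, the inequality $1-\epsilon \le (p^{2\lceil d/2\rceil} + p)/(p+1)$ should fall out by substituting $\epsilon_i = 1-p$ into the telescoped bound and simplifying the resulting geometric-type expression; I would present that algebra compactly rather than expand every term, and defer the full case analysis of the negation propagation to the appendix referenced as Appendix~\ref{sec:TodaElements}.
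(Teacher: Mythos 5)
You start in the right place: the one-sided error asymmetry between existential and universal eliminations, and the idea of grouping the prefix into $\forall\exists$ pairs, is exactly what the paper's proof is built on. But the core of your plan — a scalar telescoping where each step contributes a worst-case factor $a_i \ge p$ on "hard" steps and $a_i = 1$ on "easy" steps — cannot produce the bound $(p^{2\lceil d/2\rceil}+p)/(p+1)$, and this is a genuine gap rather than a presentational one. With one genuinely risky sub-step per pair, telescoping gives $\Pr(Tr_0\mid Tr_d) \ge p^{\lceil d/2\rceil}$ (note also that $\lceil d/2\rceil$ factors of $p$ give $p^{\lceil d/2\rceil}$, not the $p^{2\lceil d/2\rceil}$ you wrote). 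For $d\ge 4$ one checks that $p^{\lceil d/2\rceil} < (p^{2\lceil d/2\rceil}+p)/(p+1)$ (e.g.\ for $d=4$ the right side is $p^3-p^2+p \ge p^2$ since $(p-1)^2\ge 0$), so your lower bound is strictly weaker than the quantity appearing in the theorem's hypothesis, and the hypothesis $1-\epsilon \le (p^{2\lceil d/2\rceil}+p)/(p+1)$ does not let you conclude $1-\epsilon \le p^{\lceil d/2\rceil}$. In other words, your argument would prove a theorem with a more restrictive condition on $p$, not the stated one.

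The missing idea is error \emph{cancellation}: within a single $\forall\exists$ pair, two consecutive failures return you to the correct truth value, so $\Pr(Tr_i\mid Tr_{i+2}) = p + (1-p)^2$, not just $p$. Because of this, the probability of being correct at level $0$ depends on the probability of being in the \emph{wrong} intermediate state and getting flipped back later, so one must track the pair of conditionals $\bigl(\Pr(\cdot\mid Tr_{2l}),\ \Pr(\cdot\mid Fl_{2l})\bigr)$ jointly. The paper does this with a linear recurrence (equivalently a $4\times 4$ block transition matrix $\mathbf{M}_{\text{pair}}$ raised to the power $l$), solves it in closed form to get $\Pr(Tr_0\mid Tr_{2l}) = (p^{2l+1}+1)/(p+1)$ and $\Pr(Fl_0\mid Fl_{2l}) = (p^{2l}+p)/(p+1)$, and then handles the unpaired outermost/innermost quantifiers with an extra initial vector and an outer matrix. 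The "$+p$" and the "$/(p+1)$" are not an averaging heuristic — they are the exact solution of that recurrence. Your handling of the $+1$/negation bookkeeping is fine as far as it goes, but without replacing the scalar telescope by the two-state recurrence the claimed inequality does not follow.
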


We give a proof outline, see Appendix~\ref{sec:Appanalysis} for full proof.

\begin{proof}[Proof Outline]


For every $i$, set the inner-error probability $\epsilon_i=1-p$ for some $0\leq p\leq 1$. We show that when $p$ is such that $1-\epsilon\leq (p^{\left(2\left \lceil{\frac{d}{2}}\right \rceil \right)} + p)/(p+1)$, then the overall bound $\epsilon$ is achieved.
For that, we make use of the fact that existential and universal quantifiers behave differently in terms of error propagation: (i) For existential quantifiers: If $F_{i+1}$ is \false\kern-0.35em, then $F_i$ remains \false with probability $1$. If $F_{i+1}$ is \true\kern-0.35em, then $F_i$ remains \true with a probability $p$, defined by the amplification in the reduction; (ii) For universal quantifiers since $\forall x f(x) \equiv \neg\exists x \neg f(x)$ we get: If $F_{i+1}$ is \true\kern-0.35em, then $F_i$ remains \true with probability $1$. If $F_{i+1}$ is \false\kern-0.35em, then $F_i$ remains \false with probability $p$.
Our method to track down the error propagation is to group the quantifiers into pairs of $\forall\exists$ and treat each pair as a single step. For a single $\forall\exists$ pair, see that for every $i$:
$\Pr(Tr_i|Tr_{i+2}) = p + (1-p)^2$, $\Pr(Fl_i|Tr_{i+2}) = (1-p)p$,
$\Pr(Tr_i|Fl_{i+2}) = 1-p$, and $\Pr(Fl_i|Fl_{i+2}) = p$.


Assume for now that we have an even number of quantifiers, and that the first quantifier is universal. That means that we can view the sequence of quantifiers as pairs of  $\forall\exists$. Then for example we have that:

\begin{align*}
\Pr(Tr_0|Tr_{2l}) = \Pr(Tr_0|Tr_2)\Pr(Tr_2|Tr_{2l})+
\Pr(Tr_0|Fl_{2})\Pr(Fl_{2}|Tr_{2l}).
\end{align*}

Using this, we have for every $1\leq l\leq d/2$ (recall that $d$ is even) the following set of recurrence relations:

\begin{align*}
\Pr(Tr_0|Tr_{2l}) &= \Pr(Tr_2|Tr_{2l})(p + (1-p)^2) + \Pr(Fl_2|Tr_{2l})(1-p) \\
\Pr(Fl_0|Tr_{2l}) &= \Pr(Fl_2|Tr_{2l})p + \Pr(Tr_2|Tr_{2l})(1-p)p \\
\Pr(Tr_0|Fl_{2l}) &= \Pr(Tr_2|Fl_{2l})(p + (1-p)^2) + \Pr(Fl_2|Fl_{2l})(1-p) \\
\Pr(Fl_0|Fl_{2l}) &= \Pr(Fl_2|Fl_{2l})p + \Pr(Tr_2|Fl_{2l})(1-p)p.
\end{align*}


The above set of recurrence relations can be solved through standard linear algebra and matrix multiplications resulting in the following:

\begin{equation*}
\begin{pmatrix}
\Pr(Tr_0|Tr_{2l}) \\
\Pr(Fl_0|Tr_{2l}) \\
\Pr(Tr_0|Fl_{2l}) \\
\Pr(Fl_0|Fl_{2l})
\end{pmatrix} =
\begin{pmatrix}
\frac{p^{2l+1} + 1}{p+1} \\
\frac{-p^{2l+1} + p}{p+1} \\
\frac{-p^{2l} + 1}{p+1} \\
\frac{p^{2l} + p}{p+1}
\end{pmatrix}_{\text{\normalsize .}}
\end{equation*}


From here we can find $\Pr(Tr_0|Tr_{d}),\Pr(Fl_0|Fl_{d})$, that we require since we want $1-\epsilon \leq \min(\Pr(Tr_0|Tr_{d}),\Pr(Fl_0|Fl_{d}))$. 
We again refer the reader to Appendix~\ref{sec:Appanalysis} for the complete proof details and the cases where the quantifier pattern is different. In total, 
the result of all the cases could be expressed as: 

$$1-\epsilon\leq \min(\Pr(Tr_0|Tr_{d}),\Pr(Fl_0|Fl_{d})) = \frac{p^{\left(2\left \lceil{\frac{d}{2}}\right \rceil \right)} + p}{p+1}.$$
\end{proof}




Theorem~\ref{clm:innerprobnaive}  provides us with a simple method for inner-error allocation, that we call \emph{balanced allocation}: for every $i$, set $\epsilon_i = 1-p$, where $p$ be such that $1-\epsilon =  (p^{\left(2\left \lceil{\frac{d}{2}}\right \rceil \right)} + p)/(p+1)$. Then, the theorem ensures us that by this allocation the overall error bound $\epsilon$ is achieved.

This tightened analysis does not only enable more precise error bounds while maintaining the correctness guarantees, but also allows us to distribute the error budget between the different steps evenly. To illustrate the practical impact, consider a QBF formula with 2 quantifiers (also called 2QBF formula), where each block has 5 variables, and a target error rate of $\epsilon = 0.3$. Using the standard approach, we would need approximately $240$ and $75$ repetitions for the first and second steps respectively, resulting in a formula roughly $18,000$ times larger than the original. With our refined approach, however, we only need about $177$ and $41$ repetitions, yielding a formula that is about $7,257$ times larger—a reduction of approximately $\times2.5$. 

While the choice of which inner-error allocations method gives better results is perhaps formula dependent, we next prove that using the balanced allocation method requires, for QBF formulas with 4 or fewer quantifiers, a smaller error budget for each and every step compared to when using geometric allocation. This means that using balanced allocation on such QBF formulas, results in smaller output formulas.
Formally, we have the following.


\begin{proposition}\label{prop:benefit}
Given a formula with $4$ or less quantifiers, let $\epsilon_i$ be the inner-error value allocated by the balanced allocation and let $\zeta_i$ be the inner-error value allocated by the geometric allocation. Then we have that $\zeta_i<\epsilon_i$ for every $1\leq i \leq 4$.
\end{proposition}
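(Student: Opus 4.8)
The plan is to compare, for each step $i \in \{1,2,3,4\}$, the geometric value $\zeta_i = \epsilon/2^i$ with the balanced value $\epsilon_i = 1-p$, where $p$ is the unique root in $(0,1)$ of the equation $1-\epsilon = (p^{2\lceil d/2\rceil} + p)/(p+1)$ from Theorem~\ref{clm:innerprobnaive}. Since there are only four quantifiers, $d \le 4$, so $\lceil d/2 \rceil \le 2$ and the governing polynomial is at most $(p^4 + p)/(p+1)$; I would first reduce to the hardest instance, $d = 4$, observing that fewer quantifiers only make the balanced allocation more generous (a smaller exponent $2\lceil d/2\rceil$ forces $p$ closer to $1$, hence $\epsilon_i = 1-p$ smaller — wait, we want the opposite direction, so I would instead argue that it suffices to prove the claim for $d=4$ and separately for $d=3$, $d=2$, $d=1$, or simply note monotonicity lets us handle the worst case and the rest follow analogously). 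The concrete goal is then the inequality chain
\[
\frac{\epsilon}{2^i} < 1-p \qquad \text{for } i = 1,2,3,4,
\]
and since $\epsilon/2^i$ is largest at $i=1$, it is enough to prove $\epsilon/2 < 1-p$, i.e. $p < 1 - \epsilon/2$.

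The key step is therefore to show that the root $p$ of $g(p) := (p^{2\lceil d/2\rceil}+p)/(p+1) = 1-\epsilon$ satisfies $p < 1-\epsilon/2$. I would proceed by evaluating $g$ at the candidate point $p_0 = 1-\epsilon/2$ and showing $g(p_0) > 1-\epsilon$; since $g$ is strictly increasing on $(0,1)$ (its derivative is positive — a short monomial-by-monomial check), this forces the actual root $p$ to lie strictly below $p_0$. Concretely, with $d=4$ we need $(p_0^4 + p_0)/(p_0+1) > 1-\epsilon$; substituting $p_0 = 1-\epsilon/2$ and clearing the denominator $p_0+1 = 2-\epsilon/2 > 0$, this becomes a polynomial inequality in $\epsilon$ that holds for all $0 < \epsilon < 1/2$, which I would verify by expanding and checking the sign of the resulting low-degree polynomial on $(0,1/2)$ (the endpoint $\epsilon \to 0$ gives equality-in-the-limit behaviour, and the derivative comparison pins down the strict inequality for $\epsilon > 0$). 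The cases $d = 2, 3$ are strictly easier because the exponent $2\lceil d/2\rceil$ is $2$ rather than $4$, making $g$ larger for $p \in (0,1)$ and hence pushing its root even further below $p_0$; I would dispatch them with the same substitution-and-sign argument, or by a monotonicity-in-$d$ remark.

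The main obstacle I anticipate is purely the bookkeeping of the polynomial inequality after clearing denominators: one must be careful that the inequality $\epsilon/2^i < 1-p$ is genuinely strict across the whole open interval $(0,1/2)$ and does not degenerate near $\epsilon \to 0$ (where both sides tend to $0$), so the argument has to control the \emph{rate} at which $1-p$ shrinks relative to $\epsilon$ — this is exactly what the strict-monotonicity of $g$ and the strict inequality $g(1-\epsilon/2) > 1-\epsilon$ deliver. A secondary care point is confirming that the defining equation for $p$ indeed has a unique root in $(0,1)$ so that "the root $p$" is well-defined; this follows from $g$ being continuous and strictly increasing with $g(0) = 0 < 1-\epsilon$ and $g(1) = 1 > 1-\epsilon$. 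Once $p < 1-\epsilon/2 \le 1 - \epsilon/2^i$ is established, we are done for every $i$, since $\zeta_i = \epsilon/2^i \le \epsilon/2 < 1-p = \epsilon_i$.
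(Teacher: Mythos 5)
Your proposal is correct and follows essentially the same route as the paper: both reduce to the worst case $\zeta_i\le\epsilon/2$ and both ultimately rest on the same polynomial identity (the paper rearranges the defining equation to $\zeta_i\le\frac{1-(1-\epsilon_i)^{4}}{4-2\epsilon_i}\le\epsilon_i$, equivalent to $\epsilon_i^2(\epsilon_i-2)^2\ge 0$, while you verify the same inequality at the candidate point $p_0=1-\epsilon/2$ and invoke strict monotonicity of $g$ to place the root below $p_0$). One harmless slip: for $d=3$ the exponent $2\lceil d/2\rceil$ is $4$, not $2$, so $d=3$ is not "strictly easier" than $d=4$ but rather identical to it — your $d=4$ argument already covers it.
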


\begin{proof}
For every $1\leq i \leq 4$, we have by  definition that $\zeta_i = \frac{\epsilon}{2^i} \leq \frac{\epsilon}{2}$ and that $\epsilon_i = 1-p$, where $p$ is such that $1-\epsilon = \frac{p^{\left(2\left \lceil{\frac{d}{2}}\right \rceil \right)} + p}{p+1}$.

Then we have that
\begin{align*}
1 - \epsilon &= \frac{p^{2\lceil\frac{d}{2}\rceil} + p}{1 + p} = \frac{(1-\epsilon_i)^{2\lceil\frac{d}{2}\rceil} + (1-\epsilon_i)}{2 - \epsilon_i}
\end{align*}

Comparing both bounds we get:
\begin{align*}
1 - 2\zeta_i &\geq 1 - \epsilon \geq \frac{(1-\epsilon_i)^{2\lceil\frac{d}{2}\rceil} + (1-\epsilon_i)}{2 - \epsilon_i} \\
2\zeta_i &\leq 1 - \frac{(1-\epsilon_i)^{2\lceil\frac{d}{2}\rceil} + (1-\epsilon_i)}{2 - \epsilon_i} \\
\zeta_i &\leq \frac{1}{2} - \frac{(1-\epsilon_i)^{2\lceil\frac{d}{2}\rceil} + (1-\epsilon_i)}{2(2 - \epsilon_i)} = \frac{1-(1-\epsilon_i)^{2\lceil\frac{d}{2}\rceil} }{4 - 2\epsilon_i} \leq \frac{1-(1-\epsilon_i)^{4} }{4 - 2\epsilon_i} \leq \epsilon_i
\end{align*}

Where the last inequality is derived from the fact that since $4-2\epsilon_i>0$ then $\frac{1-(1-\epsilon_i)^{4}}{4 - 2\epsilon_i} \leq \epsilon_i$ is \true if and only if $1-(1-\epsilon_i)^{4} \leq \epsilon_i(4 - 2\epsilon_i)$ which in turn results in, ${\epsilon_i}^2(\epsilon_i-2)^2>0$, which is always \true since $0 < \epsilon_i \leq \epsilon < \frac{1}{2}$.
\end{proof}

For formulas with more than 4 quantifiers, a general algebraic proof becomes more complex. Nevertheless, extensive numerical testing demonstrates the consistent superiority of the balanced allocation. We conducted a test across multiple parameters comparing the product of the number of repetitions needed to eliminate all of the quantifiers (which determines the size of the resulting formula).We tested formulas with 2-10 quantifiers, error bounds ranging from 0.1 to 0.4, and varying numbers of variables per quantifier block. In all of the tested configurations (over 250,000 distinct cases), the refined approach consistently produced smaller overall formula sizes than the naive approach. Our experiments also showed that the advantage of the refined approach is increasing with the number of quantifiers. For formulas with 2 quantifiers, the refined approach produces formulas approximately 2× smaller. This advantage grows and reaches approximately 5.5× for formulas with 10 quantifiers.

\section{Modular Addition as a Probability Amplifier}\label{sec:Modualr_Addition}

A main step in Toda's reduction is essentially sieving the ongoing formula $F_i$ with a hash function in the hope to isolate a single assignment. To do this we replace $F_i$ with $F_i\land h$ where $h$ is a randomly chosen hash function. To increase the probability of doing so successfully we introduce duplications of $F_i\land h$ each time with $h$ randomly chosen and combine these together. Since we have such a low guarantee over the hash function $h$ we introduce many such duplications and as a result the resulting formula is long.
To tame the number of duplications, we next come with a technique for taking objects that have low guarantee and constructing from those objects with higher guarantee (in our case the objects are the hash functions). We call this method \emph{modular addition}, that as far as we know is novel in this context. The method is based on an idea easy to explain: suppose that you want to generate an unbiased bit from a biased bit. Then simply sample the biased bit independently $n$ times and take the parity of the sum of all samples (i.e. $1$ if odd, $0$ if even). The bigger $n$ is, the closer you get to a uniform sample. We note that this method can also be seen as an alternative for Von Neumann's well-known method for obtaining the same thing~\cite{bib:Von-Neumann}. 
This example is of course a special case of a more general theorem:

\begin{theorem}\label{thm:main}
Let $ X $ be a discrete random variable over a bounded range such that $X=0$ and $X=1$ have positive probabilities each. Let $X_j$ denote the independent $j$-th roll of $X$, and let $k>0$ be a positive integer.
Then for every $t$ such that $0\leq t \leq k-1$ the following holds:
\begin{equation*}
\lim_{n \to \infty} \text{Pr}\left((\sum_{j=1}^{n} X_j) \bmod{k} = t\right) = 1/k
\end{equation*}
\end{theorem}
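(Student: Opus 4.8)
The natural tool here is characters / discrete Fourier analysis on the cyclic group $\mathbb{Z}/k\mathbb{Z}$. The plan is to express the target probability via the standard character sum: for the random variable $S_n = (\sum_{j=1}^n X_j) \bmod k$ one has
\[
\Pr(S_n = t) = \frac{1}{k}\sum_{\ell=0}^{k-1} \omega^{-\ell t}\, \mathbb{E}\!\left[\omega^{\ell S_n}\right],
\qquad \omega = e^{2\pi i/k}.
\]
Since the $X_j$ are i.i.d., $\mathbb{E}[\omega^{\ell S_n}] = \phi(\ell)^n$ where $\phi(\ell) = \mathbb{E}[\omega^{\ell X_1}]$ is the characteristic-function value of a single roll at the $\ell$-th frequency. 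The $\ell = 0$ term contributes exactly $1/k$, which is the claimed limit, so everything reduces to showing that every other term vanishes, i.e. that $|\phi(\ell)| < 1$ for $1 \le \ell \le k-1$, and hence $\phi(\ell)^n \to 0$.

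The key steps, in order, are: (1) write down the inversion formula above and split off the $\ell=0$ term; (2) for each fixed $\ell \ne 0$, bound $|\phi(\ell)| = |\sum_v p_v\, \omega^{\ell v}|$ where $p_v = \Pr(X_1 = v)$, $v$ ranging over the (bounded, hence finite or at least summable) support; (3) argue the bound is \emph{strict}: $|\phi(\ell)| = 1$ would force all the unit complex numbers $\omega^{\ell v}$ with $p_v > 0$ to point in the same direction, i.e. $\ell v$ to be constant mod $k$ over the support; but the support contains both $v = 0$ and $v = 1$ (this is exactly where the hypothesis that $X=0$ and $X=1$ have positive probability is used), giving $\ell \cdot 0 \equiv \ell \cdot 1 \pmod k$, i.e. $\ell \equiv 0 \pmod k$, contradicting $1 \le \ell \le k-1$; (4) conclude $\phi(\ell)^n \to 0$ for each such $\ell$, so the finite sum of the $k-1$ non-principal terms tends to $0$, yielding $\Pr(S_n = t) \to 1/k$.

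I expect step (3) — pinning down the strictness of the triangle inequality — to be the only real content; it is the place the two hypotheses ($0$ and $1$ both attainable) actually do work, ruling out degenerate lattice cases such as $X$ supported on multiples of some $d \mid k$ with $d > 1$. A minor technical point to handle cleanly is that "bounded range" is invoked only to ensure $\phi(\ell)$ is a well-defined finite (or absolutely convergent) sum and that taking the limit inside the finite outer sum over $\ell$ is unproblematic; since the outer sum has exactly $k$ terms, no uniformity or dominated-convergence subtlety arises beyond $|\phi(\ell)| < 1$. An equivalent route, if one prefers to avoid characters, is to note that $(X_1 \bmod k, \ldots)$ drives a random walk on $\mathbb{Z}/k\mathbb{Z}$ whose step distribution is aperiodic and has support generating the whole group (again because $0$ and $1$ are hit), so the walk's distribution converges to uniform by the standard convergence theorem for finite Markov chains; but the Fourier computation is shorter and more self-contained, so I would present that.
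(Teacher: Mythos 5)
Your proposal is correct and follows essentially the same route as the paper: the paper's Lemma~\ref{lem:prob_to_multinom} is exactly your character-sum identity (obtained there via a multinomial expansion plus the roots-of-unity filter rather than directly from Fourier inversion and multiplicativity of $\mathbb{E}[\omega^{\ell S_n}]$ under independence), and the paper's Lemma~\ref{lem:multinom_limit} is exactly your step (3), killing the non-principal terms by the strict triangle inequality using $p_0,p_1>0$. Your derivation of the identity is slightly slicker, but the decomposition and the crucial strictness argument are the same.
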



Theorem~\ref{thm:main} can be seen as a corollary in~\cite{DiaconisModularAddition} (Theorem 1, p.23), a study in the area of group representations in probability. The proof there, however, may not be standalone. As such we provide a standalone proof that is also perhaps simpler. See proof outline below, and Appendix~\ref{sec:ModularAddition} for full detail.

\begin{proof}[Proof Outline]
Our proof makes use of the following definition.
For a positive integer $k$, let $d_{k}$ be the first root of order $k$ of $1$. That is, 
$d_{k} = e^{\left(\frac{2\pi i}{k}\right)}$ where $i$ denotes the imaginary unit $\sqrt{-1}$. For simplicity denote $Pr(X \bmod{k}=j)$ by $p_j$.

We first show by standard combinatorial techniques that for every $t\in[k]$:

\begin{equation}\label{eq:prob_to_multinom}
\text{Pr}\left(\Sumnk{n}{k} = t\right) = (1/k)\sum_{m=0}^{k-1}d_{k}^{-mt}\left(\sum_{j=0}^{k-1} d_k^{mj}p_j\right)^n
\end{equation}

Then, to complete the proof we show that 

\begin{equation*}
 \lim_{n\to\infty} (1/k)\sum_{m=0}^{k-1}d_{k}^{-mt}\left(\sum_{j=0}^{k-1} d_k^{mj}p_j\right)^n = 1/k
\end{equation*}

The later follows from two observations. First, for $m=0$, the term $d_k^{-0t}\left(\sum_{j=0}^{k-1} d_k^{0j}p_j\right)^n = \left(\sum_{j=0}^{k-1} p_j\right)^n = 1$. Second,  for every $m \in \{1,2,\ldots,k-1\}$, we have that $\left|\sum_{j=0}^{k-1} d_k^{mj}p_j\right| < 1$ due to the triangle inequality holding strictly when $p_0, p_1 > 0$, causing these terms to vanish as $n \to \infty$. 
\end{proof}

\subsection{Modular Addition as a Hash Function Constructor}\label{sec:sModAdd}


We next show how to apply modular addition as a hash function constructor. For $l>0$ let $D_l$ be the following Boolean formula: $D_l\equiv F(\x)\land (h_1(\x)+\cdots + h_l(\x))$ where for every $l$, each $h_i$ is randomly taken from $H_{n,m}$ as defined in Theorem~\ref{thm:vv}. 

\begin{proposition}\label{prop:MAuse}
If $F$ is satisfiable then $Pr (\parity \x D_l\equiv true)=1/2$ when $l\rightarrow\infty$. Otherwise $Pr(\parity \x D_l\equiv false)=1$ for every $l$.
\end{proposition}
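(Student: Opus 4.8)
The plan is to derive Proposition~\ref{prop:MAuse} as a direct application of Theorem~\ref{thm:main} with modulus $k=2$, combined with the structural observation on the $+$ operator that $\#[h_1(\x)+\cdots+h_l(\x)]$ picks out exactly one of the hash functions according to the fresh ``selector'' variables. First I would handle the easy direction: if $F$ is unsatisfiable, then $F(\x)\land(h_1(\x)+\cdots+h_l(\x))$ is unsatisfiable regardless of the choice of the $h_i$, since any solution must in particular satisfy $F$. Hence $\#D_l=0$ with probability $1$, so $\parity\x D_l\equiv false$ with probability $1$ for every $l$, as claimed.

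For the satisfiable direction, the key is to identify the relevant biased random variable and invoke Theorem~\ref{thm:main}. Fix a satisfiable $F$. For a single random $h\in H_{n,m}$, let $X$ be the Bernoulli-type random variable $X = \#(F(\x)\land h(\x)) \bmod 2$; by the definition of $h(\x)$ as the formula $h(\x)=0^m$ and by (the proof of) Theorem~\ref{thm:vv}, both $X=0$ and $X=1$ occur with positive probability — the isolation bound $\Pr(\#\widetilde F=1)\geq 3/16n>0$ gives $\Pr(X=1)>0$, and $\Pr(X=0)>0$ because, e.g., taking $m$ large makes $\widetilde F$ unsatisfiable with positive probability (so that $X=0$), or more simply because the ``exact'' event does not occur with probability $1$. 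The next step, which is the technical heart, is to show that $\#D_l \bmod 2 = (\sum_{j=1}^{l} X_j)\bmod 2$ where $X_1,\dots,X_l$ are i.i.d. copies of $X$ (one per independently drawn $h_j$). This follows from the semantics of the extended $+$ operator: by the stated property $\#[G_1+\cdots+G_l]=\#G_1+\cdots+\#G_l$, we have $\#[h_1(\x)+\cdots+h_l(\x)]$ realized over disjoint blocks selected by $\lceil\log l\rceil$ fresh variables, so $\#D_l = \#[F\land(h_1+\cdots+h_l)] = \sum_{j=1}^{l}\#(F\land h_j)$, and reducing modulo $2$ gives exactly $(\sum_{j=1}^l X_j)\bmod 2$, with the $X_j$ independent because the $h_j$ are drawn independently.

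Given this reduction, Theorem~\ref{thm:main} with $k=2$ and $t=1$ yields $\lim_{l\to\infty}\Pr\big((\sum_{j=1}^l X_j)\bmod 2 = 1\big) = 1/2$, i.e. $\lim_{l\to\infty}\Pr(\#D_l\text{ is odd}) = 1/2$, which is precisely $\Pr(\parity\x D_l\equiv true)\to 1/2$ as $l\to\infty$. I would close by remarking that this is exactly the behaviour we want from an ``amplified sieve'': in the unsatisfiable case the parity is deterministically $false$, and in the satisfiable case the parity converges to an unbiased coin, so a bounded number of further repetitions suffices to push the success probability arbitrarily close to $1$.

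The main obstacle is the careful verification that $\Pr(X=0)>0$ and $\Pr(X=1)>0$ — the positivity hypothesis of Theorem~\ref{thm:main} — since this requires revisiting the internal workings of the Valiant–Vazirani construction (the uniform choice of $m\in[2,n+1]$ and the pairwise-independent hash family) rather than just its black-box statement in Theorem~\ref{thm:vv}; everything else is bookkeeping about the $+$ operator and a one-line appeal to Theorem~\ref{thm:main}.
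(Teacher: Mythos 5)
Your proof is correct and follows essentially the same route as the paper's: the unsatisfiable case is immediate, and the satisfiable case reduces to Theorem~\ref{thm:main} with $k=2$ via the additivity $\#[F\land(h_1+\cdots+h_l)]=\sum_j\#(F\land h_j)$, which is exactly the content of the paper's Claim~\ref{clm:smallProp} (the paper applies the theorem to the raw counts rather than to their mod-2 reductions, a cosmetic difference). If anything, your explicit verification that $\Pr(X=0)>0$ and $\Pr(X=1)>0$ is more careful than the paper, which invokes Theorem~\ref{thm:main} without checking its positivity hypothesis.
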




\begin{proof}
Let $\varphi$ be a satisfiable Boolean formula over the variables $\x$. Let $\Psi$ be a family of Boolean formulas over the same variables as $\varphi$. We define $X_{\Psi}$ as a discrete random variable over the range $\{0,\cdots 2^{|\x|}\}$ as follows. For every $i$ in the range, we define $Pr(X_{\Psi}=i) = Pr_{\psi \in \Psi} \#\{\x : \varphi(\x) \land \psi(\x)\} = i$. Note that by this definition $Pr (X_\Psi\mod{2}) = 1$ captures the probability of sampling $\psi$ from $\Psi$, such that $\parity \x[\varphi(\x) \land \psi(\x)] \equiv \true$.

Now, for a given $l$, consider $\varphi$ as our formula $F$, and the family 
$\Psi = \{h_1+h_2+\cdots+ h_l | h_i\in H_{n,m}\}$.  Sampling $\psi\in\Psi$ is then identical to sampling $l$ functions $h_i\in H_{n,m}$ independently, which means that $X_\Psi = X_{H_{n,m}+ \overset{l \text{ times}}{\cdots} + H_{n,m}}$.
Claim~\ref{clm:smallProp} below shows us that 

$$X_{H_{n,m}+ \overset{l \text{ times}}{\cdots} + H_{n,m}} = X_{H_{n,m}} + \overset{l \text{ times}}{\cdots} +X_{H_{n,m}}$$ 

Therefore we can use Theorem~\ref{thm:main} to have that if $F$ is satisfiable then $X_{\Psi}$ approaches a uniform distribution. This means that the probability of sampling $\psi=(h_1 + \cdots +h_l)\in\Psi$ such that $\parity \x D_l = \parity \x[F(\x) \land \psi(\x)] \equiv \true$ approaches $1/2$ as $l$ increases. On the other hand if $F$ is unsatisfiable then $F(\x) \land \psi(\x)$ has zero satisfying assignments which means that $\parity \x D_l = \parity \x[F(\x) \land \psi(\x)] \equiv \false$  for every $l$. \end{proof}

The Claim below completes the Proof for Proposition~\ref{prop:MAuse}.

\begin{claim}\label{clm:smallProp}
    $X_{H_{n,m}+ \overset{l \text{ times}}{\cdots} + H_{n,m}} = X_{H_{n,m}} + \overset{l \text{ times}}{\cdots} +X_{H_{n,m}}$.
\end{claim}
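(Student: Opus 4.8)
The plan is to reduce both sides of the claim to the same convolution of integer-valued random variables, so that the only substantive step is additivity of the model count under the formula-level $+$ operator. Write $\Psi$ for the $l$-fold sum family $H_{n,m}+\cdots+H_{n,m}$, so the left-hand side is $X_\Psi$. By the $+$ operator on random variables, the right-hand side $X_{H_{n,m}}+\cdots+X_{H_{n,m}}$ ($l$ copies) denotes the distribution of $Y_1+\cdots+Y_l$, where $Y_1,\dots,Y_l$ are independent and each is distributed as $\#\{\x : \varphi(\x)\land h(\x)\}$ for $h$ drawn uniformly from $H_{n,m}$, i.e.\ each $Y_j$ is a copy of $X_{H_{n,m}}$. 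As in the proof of Proposition~\ref{prop:MAuse}, I read $\Psi$ as the family indexed by tuples $(h_1,\dots,h_l)\in H_{n,m}^{\,l}$, so that drawing $\psi\in\Psi$ uniformly is the same as drawing $h_1,\dots,h_l$ independently and uniformly.

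The key lemma to establish is: for any fixed $h_1,\dots,h_l\in H_{n,m}$ and $\psi := h_1+\cdots+h_l$, one has $\#(\varphi\land\psi)=\sum_{j=1}^{l}\#(\varphi\land h_j)$. First I would recall from the preliminaries that $\#(F+G)=\#F+\#G$ and that the iterated sum of $l$ formulas is built with $\lceil\log l\rceil$ fresh "selector" variables $\vec w$, chosen disjoint from $\x$ and from the variables of each $h_j$; an easy induction on $l$ then gives $\#(h_1+\cdots+h_l)=\sum_j\#h_j$ over the variable set $\x\cup\vec w$. Since $\varphi$ mentions only the variables $\x$ and is therefore independent of $\vec w$, conjoining with $\varphi$ leaves, within each branch of the sum selected by a value of $\vec w$, exactly the satisfying assignments of $\varphi\land h_j$; summing over branches yields $\#(\varphi\land\psi)=\sum_j\#(\varphi\land h_j)$. (Strictly, this count ranges over $\x\cup\vec w$, so the range of $X_\Psi$ is $\{0,\dots,l\cdot 2^{|\x|}\}$, matching the range of the $l$-fold sum on the right.)

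Given the lemma, the claim is a one-line distributional computation. For every integer $i$,
\begin{align*}
\Pr\!\left(X_\Psi = i\right)
&= \Pr_{(h_1,\dots,h_l)}\!\left[\,\#\big(\varphi\land(h_1+\cdots+h_l)\big)=i\,\right]\\
&= \Pr_{(h_1,\dots,h_l)}\!\left[\,\textstyle\sum_{j=1}^{l}\#(\varphi\land h_j)=i\,\right]\\
&= \Pr\!\left(Y_1+\cdots+Y_l=i\right),
\end{align*}
where the first equality is the tuple-indexing reading of $\Psi$, the second is the lemma, and the third uses that independent uniform $h_1,\dots,h_l$ make $\#(\varphi\land h_1),\dots,\#(\varphi\land h_l)$ independent copies of $X_{H_{n,m}}$. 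The right-hand side is precisely $\Pr(X_{H_{n,m}}+\cdots+X_{H_{n,m}}=i)$, so the two random variables coincide.

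I expect the only real obstacle to be the bookkeeping of the fresh selector variables introduced by the iterated $+$: one must check that conjoining with $\varphi$ does not disturb the partition of satisfying assignments across the $l$ summands, i.e.\ that additivity of $\#$ survives the conjunction. Everything else — indexing $\Psi$ by tuples and recognising the sum as a convolution — is immediate from the definitions and has essentially already been invoked in the proof of Proposition~\ref{prop:MAuse}.
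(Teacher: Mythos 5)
Your proof is correct and follows essentially the same route as the paper's: the paper likewise unfolds the $+$ operator, distributes the conjunction with $F$ over the selector-variable disjunction, and reads off additivity of the model count, doing this explicitly for $l=2$ and declaring the extension to general $l$ straightforward. Your version is somewhat more careful — you handle general $l$ via induction on the $\lceil\log l\rceil$ selector variables, make the final distributional step (independence of the draws yielding the convolution) explicit rather than implicit, and correctly note that the range of $X_\Psi$ is $\{0,\dots,l\cdot 2^{|\x|}\}$ rather than $\{0,\dots,2^{|\x|}\}$ — but the underlying argument is the same.
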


\begin{proof}
We show for $l=2$, the extension to bigger $l$ is straightforward. For $l=2$ we have.
\begin{align*}
X_{H_{n,m} + H_{n,m}} &= \#[F(\x) \land (h_1(\x) + h_2(\x))] \\
&= \#[F(\x) \land ((h_1(\x) \land x_{new}) \lor (h_2(\x) \land \neg x_{new}))] \\
&= \#[(F(\x) \land h_1(\x) \land x_{new}) \lor (F(\x) \land h_2(\x) \land \neg x_{new})] \\
&= \#[F(\x) \land h_1(\x)] + \#[F(\x) \land h_2(\x)]
= X_{H_{n,m}} + X_{H_{n,m}}
\end{align*}
\end{proof}

The natural thing to do now is to construct $D_l$ for big enough $l$ and use it instead of the Valiant-Vazirani sieve, where Proposition~\ref{prop:MAuse} ensures us a success probability of close to $1/2$.
Nevertheless, two important limitations should be noted. First, every $D_l$ can only give a success probability close to $1/2$ but cannot do better than that. To meet the inner-probability ratio, standard duplications have to be made. Note though, that the bigger $l$ is, the less duplications that have to be made.
Second, our method has  diminishing returns: after a certain point, increasing $l$ can give only a small improvement in the success probability while the constructed formula keeps growing in size, creating a poor trade-off between better probability and larger formulas. In practice each application should determine for itself what is the sweet spot for $l$ as it is determined by many factors and the requirements of the application itself.



Finally, the size of the resulting formula, when using modular addition, may be hard to compute due to the many variables at hand. We do however manage to show that a single use of modular addition, i.e. using $D_2$, yields a smaller resulting formula than the standard Valiant-Vazirani method, i.e. using $D_1$. For that, see that there is constant number $\alpha$, such that for every two Boolean formulas $\varphi,\psi$, we have that $\alpha=|\varphi+\psi|-(|\varphi|+\psi|)$. In addition for every $j > 0$ there is a constant number $\alpha_j$ such that for every Boolean formula $\varphi$ with $j$ free variables we have $\alpha_j =|\varphi+1|-|\varphi|$. Note that both $\alpha$ and $\alpha_j$ are fairly small and that for all $j>0$ we have $\alpha < \alpha_j$. We then have the following proposition, see Appendix~\ref{app:proofProp2} for proof.

\begin{proposition}\label{prop:benefitsMA}
Assume that $|h|+|\alpha_{j+1}|<|F|$ where $j$ is the number of free variables in $F$. Then using the formula $D_2$ instead of $D_1$ results in a smaller overall formula.
\end{proposition}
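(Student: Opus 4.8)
The plan is to compare $|D_2|$ and $|D_1|$ directly by expanding each according to its definition in Section~\ref{sec:sModAdd} and the size conventions for the $+$ and $+1$ operators established just before the statement. Recall $D_1 \equiv F(\x) \land h_1(\x)$ and $D_2 \equiv F(\x) \land (h_1(\x) + h_2(\x))$, and that the eventual object of interest is the parity-quantified formula, so one must also account for the $+1$ corrections and the duplication structure that feed into the reduction. First I would write $|D_1| = |F| + |h| + c$ for a small fixed connective cost $c$ (the $\land$), and $|D_2| = |F| + |h_1 + h_2| + c = |F| + 2|h| + \alpha + c$ using the defining identity $\alpha = |\varphi + \psi| - (|\varphi| + |\psi|)$ with $\varphi = h_1$, $\psi = h_2$ (both hash functions having the same size $|h|$).

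Next I would track how each of $D_1$ and $D_2$ enters the amplified formula. In the Valiant–Vazirani route one forms, per repetition, $(D_1 + 1)$; with modular addition one forms $(D_2 + 1)$. The key point is that $D_2$ has one extra free variable relative to $D_1$ — namely the fresh selector variable $x_{new}$ introduced by the inner $h_1 + h_2$ — so the $+1$ correction applied to $D_2$ costs $\alpha_{j+1}$ whereas the one applied to $D_1$ costs $\alpha_j$, where $j$ is the number of free variables of $F$. Here is where the crucial trade-off appears: using $D_2$ we save on \emph{repetitions}, because Proposition~\ref{prop:MAuse} gives success probability near $1/2$ rather than the $\Theta(1/n)$ of a single sieve, so the number $k$ of duplicated copies drops dramatically; but each copy is slightly larger because $|D_2| - |D_1| = |h| + \alpha$ and the per-copy $+1$ is larger by $\alpha_{j+1} - \alpha_j$. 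The hypothesis $|h| + |\alpha_{j+1}| < |F|$ is exactly what guarantees that the per-copy inflation is dominated by a quantity smaller than $|F|$ itself, so that even a single saved repetition (which removes a full copy of size $\geq |F|$) more than compensates. I would make this precise by writing the total size as (number of copies) $\times$ (per-copy size) $+ O(\log(\text{copies}))$ for the outermost combining variables, plugging in the two per-copy sizes and the two repetition counts, and showing the $D_2$ total is smaller.

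The main obstacle I anticipate is the repetition-count comparison: one needs a clean bound showing that the number of duplications required when the single-object success probability is $\approx 1/2$ is strictly smaller (by at least one) than when it is $\approx 3/(16n)$ or $19/(64n)$, uniformly over the relevant range of $\epsilon_i$ and $n$. This follows from Proposition~\ref{clm:repetitionNaive} — $k_i \geq \log_{1-p}(\epsilon_i) - s\log_{1-p}(2)$ — since $\log_{1-p}(\cdot)$ shrinks sharply as $p$ moves from $O(1/n)$ up toward $1/2$; but I would need to verify that even in the worst case (small $n$, $\epsilon_i$ close to $1/2$) the difference in $k_i$ is at least one, so that at least one full copy of the sub-formula is eliminated. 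Once that single-copy saving is established, the hypothesis $|h| + |\alpha_{j+1}| < |F|$ closes the argument, since the saving of a copy of size $\geq |F|$ strictly exceeds the total incremental cost $k_i \cdot (|h| + \alpha + \alpha_{j+1} - \alpha_j)$ spread over the remaining (fewer) copies — and I would want to double-check that this comparison is monotone in the right direction rather than merely true at the endpoints. The remaining steps — substituting the size identities, bounding the logarithmic overhead of the combining variables, and collecting terms — are routine.
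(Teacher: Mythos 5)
Your setup is right — compare total size as (number of repetitions) $\times$ (per-copy size), where the per-copy size for $D_2$ is $|F|+2|h|+\alpha+|\alpha_{j+1}|$ versus $|F|+|h|+|\alpha_j|$ for $D_1$ — and this matches the paper's framing. But there are two genuine gaps in how you quantify the trade-off. First, you assert that $D_2$ has success probability "near $1/2$" by appeal to Proposition~\ref{prop:MAuse}. That proposition is a statement about the limit $l\to\infty$; for $l=2$ it says nothing. The actual success probability of $D_2$ must be computed from Equation~\ref{eq:prob_to_multinom} with two summands, which the paper does: $p_{D_2}=\frac{19(64n-19)}{2048n^2}$, i.e.\ roughly \emph{twice} $p_{D_1}=\frac{19}{64n}$, not anywhere near $1/2$. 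The resulting gain is the modest multiplicative bound $k'_i<\frac{2}{3}k_i$, not a collapse of the repetition count.

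Second, and more seriously, your closing strategy — show the repetition count drops by at least one, and argue that removing one copy of size $\geq|F|$ pays for the inflation of all remaining copies — does not work. The total incremental cost you yourself write down, $k_i\cdot(|h|+\alpha+\alpha_{j+1}-\alpha_j)$, scales with $k_i$ (which is in the hundreds in the paper's own examples), so a single saved copy of size $|F|$ cannot dominate it. The comparison has to be multiplicative throughout: one needs $k'_i\cdot(|F|+2|h|+\alpha+|\alpha_{j+1}|) < k_i\cdot(|F|+|h|+|\alpha_j|)$, which the paper obtains by combining $k'_i<\frac{2}{3}k_i$ with the hypothesis $|h|+|\alpha_{j+1}|<|F|$ to bound the per-copy ratio. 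Without the explicit computation of $p_{D_2}$ and the resulting constant-factor bound on $k'_i/k_i$, the argument does not close.
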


The use of the modular addition method, together the refinements made in Section~\ref{sec:alganal} raises the question of how well do they work in practice. We discuss this in the next section.

\section{Discussion}\label{sec:discuss}

The methods introduced in Sections~\ref{sec:alganal} and~\ref{sec:Modualr_Addition}, have shown algorithmic improvement in the number of repetitions, that lead to a decreased size in the resulting formula. As an example of using our refinements and our modular addition based method, we describe a method to minimize the expected size of our final formula. We execute Algorithm~\ref{alg:Toda1}, where in Line 2 we use the balanced allocation from Theorem~\ref{clm:innerprobnaive} in Section~\ref{sec:innerComp} for computing the inner-error values $\epsilon_i$. In addition, we systematically search for the optimal number $l$ of hash functions to combine for the modular addition method, construct the formula $D_l$ as in Section~\ref{sec:sModAdd}, and use it instead of the function  $\hash(F, \x_i)$ as the return value of $F_j$ in Line 5 of Algorithm~\ref{alg:Amplify}. 

To find this optimal number,  for each $l \in \{1, 2, \ldots, L\}$ for big enough $L$, we perform the following steps: (1) Estimate the expected size $|h|$ of the individual hash functions. (2) Assign $k=2$ and $t=1$ in the right size of Equation~\ref{eq:prob_to_multinom} in Section~\ref{sec:Modualr_Addition} to compute the success probability $p$ when $l$ hash functions are added. Note that in the same Equation~\ref{eq:prob_to_multinom} we use $p_0=1-19/64n$ and $p_1=19/64n$ from our refined bound for Vailant-Vazirani in Section~\ref{sec:improved-val-vaz}. (3) Use this probability $p$ in Proposition~\ref{clm:repetitionNaive} to determine the required number of repetitions $k_l$. (4) Calculate the expected formula size from Section~\ref{sec:sizeComp} to be $k_l \cdot (|F| + l|h|)$. Finally, select the value of $l$ that minimizes this expected size. Since each computation is simple and fast, the overall overhead is negligible. In practice, using variants of ternary search can make the search for the optimized $l$ more efficient.

A question we ask, however, is whether these improvements also  lead to practical efficiency, with respect to the original reduction and with respect to current state-of-the-art QBF solvers~\cite{CAQE,DepQBF,Qute }.
As such, we integrated our modifications into the original implementation. We also integrated more standard CNF conversion manipulations and other modifications as we describe in Appendix~\ref{sec: EnginImprov}. For a model counter, we used  a variant of Ganak~\cite{SharmaRSM19} that emits only the parity of the number of solutions, rather than the exact number, and works faster. Our improvement significantly increased the efficiency in compare to the original implementation. We can now for example solve formulas with 3 alternations, and roughly 11 variables and 32 clauses within about an hour. The interested readers are referred to Appendix~\ref{Sec:experiments} for details. Still, current QBF state-of-the-art solvers are much faster and could solve these benchmarks within seconds. Our assessment is that despite recent progress in model counters, and our algorithmic modifications,  the low success probability that the Valiant Vazirani sieve produces still stands in the way of making Toda's reduction an efficient algorithm.
One way to overcome this challenge is perhaps to obtain a parity sieve that filters an odd number of solutions with high probability. In that sense using an isolation sieve may be an overkill, as also noted by Goldreich, who suggested to use a small-biased generator as a sieve~\cite{Goldreich_book-F}. We save this for future work.

To conclude, in this work we looked at Toda's Theorem from an algorithmic perspective. We first describe Toda's reduction as a concrete algorithm for QBF solving, and gave analysis of some of its properties. We then explored ways to improve the algorithm that do not require replacing the sieve. We showed that despite getting a significant improvement, A Toda's reduction-based implementation  cannot yet compete with current state-of-the-art QBF solvers. 

There are several more future lines of work that come into mind.
One such direction is to further narrow the theoretical bounds. Our experiments suggest that even by using our refinements the bounds being used are far more conservative than necessary. Another direction is whether it is possible to have the algorithm provide a proof or a witness function instead of just returning yes/no.

\clearpage

\bibliography{main}

\begin{thebibliography}{10}

\bibitem{2QBF_HARD_GENERATOR}
Giovanni Amendola, Francesco Ricca, and Mirek Truszczynski.
\newblock {A Generator of Hard 2QBF Formulas and {ASP} Programs}.
\newblock In {\em Principles of Knowledge Representation and Reasoning: Proceedings of the Sixteenth International Conference (KR)}, 2018.

\bibitem{arora2006computational}
Sanjeev Arora and Boaz Barak.
\newblock {\em Computational Complexity - {A} Modern Approach}.
\newblock Cambridge University Press, 2009.

\bibitem{BGPertrank1998}
Mihir Bellare, Oded Goldreich, and Erez Petrank.
\newblock {Uniform Generation of NP-witnesses using an NP-oracle}.
\newblock {\em Electron. Colloquium Comput. Complex.}, {TR98-032}, 1998.

\bibitem{QBFFAM}
Olaf Beyersdorff, Luca Pulina, Martina Seidl, and Ankit Shukla.
\newblock {QBFFam: {A} Tool for Generating {QBF} Families from Proof Complexity}.
\newblock In {\em Theory and Applications of Satisfiability Testing - {SAT}}, 2021.

\bibitem{ChakrabortyMV13}
Supratik Chakraborty, Kuldeep~S. Meel, and Moshe~Y. Vardi.
\newblock {A Scalable and Nearly Uniform Generator of {SAT} Witnesses}.
\newblock In {\em Computer Aided Verification - 25th International Conference, {CAV}}, 2013.

\bibitem{ChakrabortyMV13-CP}
Supratik Chakraborty, Kuldeep~S. Meel, and Moshe~Y. Vardi.
\newblock {A Scalable Approximate Model Counter}.
\newblock In {\em Principles and Practice of Constraint Programming - Conference, {CP}}, 2013.

\bibitem{CouillardCEM23}
Eszter Couillard, Philipp Czerner, Javier Esparza, and Rupak Majumdar.
\newblock {Making IP=PSPACE Practical: Efficient Interactive Protocols for {BDD} Algorithms}.
\newblock In {\em Computer Aided Verification - 35th International Conference, {CAV}}, 2023.

\bibitem{DellKMW13}
Holger Dell, Valentine Kabanets, Dieter van Melkebeek, and Osamu Watanabe.
\newblock {Is Valiant-Vazirani's isolation probability improvable?}
\newblock {\em Comput. Complex.}, 22(2):345--383, 2013.

\bibitem{DiaconisModularAddition}
Persi Diaconis.
\newblock {\em {Group Representations in Probability and Statistics}}.
\newblock Institute of Mathematical Statistics, 1988.
\newblock See Theorem 1 on page 23.

\bibitem{Goldreich_book}
Oded Goldreich.
\newblock {\em {Computational complexity: a conceptual perspective}}.
\newblock Association for Computing Machinery, 2008.

\bibitem{Goldreich_book-F}
Oded Goldreich.
\newblock {\em {Computational complexity: a conceptual perspective, Appendix F}}.
\newblock Association for Computing Machinery, 2008.

\bibitem{SharpSAT_TD}
Tuukka Korhonen and Matti J{\"{a}}rvisalo.
\newblock {SharpSAT-TD} in model counting competitions 2021-2023.
\newblock 2023.

\bibitem{LagniezM17}
Jean{-}Marie Lagniez and Pierre Marquis.
\newblock {An Improved Decision-DNNF Compiler}.
\newblock In {\em Proceedings of the Twenty-Sixth International Joint Conference on Artificial Intelligence, {IJCAI}}, 2017.

\bibitem{DepQBF}
Florian Lonsing and Uwe Egly.
\newblock {DepQBF 6.0: {A} Search-Based {QBF} Solver Beyond Traditional {QCDCL}}.
\newblock In {\em Automated Deduction - {CADE} 26 - 26th International Conference on Automated Deduction}, 2017.

\bibitem{Qute}
Tom\'{a}\v{s} Peitl, Friedrich Slivovsky, and Stefan Szeider.
\newblock {Dependency learning for QBF}.
\newblock {\em J. Artif. Int. Res.}, 2019.

\bibitem{CAQE}
Markus~N. Rabe and Leander Tentrup.
\newblock {CAQE: A Certifying QBF Solver}.
\newblock In {\em Formal Methods in Computer-Aided Design, {FMCAD}}, 2015.

\bibitem{SharmaRSM19}
Shubham Sharma, Subhajit Roy, Mate Soos, and Kuldeep~S. Meel.
\newblock {GANAK: A Scalable Probabilistic Exact Model Counter}.
\newblock In {\em Proceedings of the Twenty-Eighth International Joint Conference on Artificial Intelligence, {IJCAI}}, 2019.

\bibitem{Toda91}
Seinosuke Toda.
\newblock {PP is as Hard as the Polynomial-Time Hierarchy}.
\newblock {\em {SIAM} J. Comput.}, 20(5):865--877, 1991.

\bibitem{tseitin1983complexity}
Grigori~S Tseitin.
\newblock {On the complexity of derivation in propositional calculus}.
\newblock {\em Automation of reasoning: 2: Classical papers on computational logic 1967--1970}, pages 466--483, 1983.

\bibitem{ValiantVazirani}
Leslie~G. Valiant and Vijay~V. Vazirani.
\newblock {NP is as Easy as Detecting Unique Solutions}.
\newblock {\em Theor. Comput. Sci.}, 1986.

\bibitem{bib:Von-Neumann}
John von Neumann.
\newblock {Various techniques used in connection with random digits}.
\newblock {\em National Bureau of Standards Applied Mathematics Series}, 12:36--38, 1951.

\end{thebibliography}

\clearpage

\appendix
\section*{Appendix}
\section{Elements in  Toda's Reduction}\label{sec:TodaElements}




We discuss details of Toda's reduction as explained in Section~\ref{sec:explToda}. For that, we first give properties of the parity operator.

\subsection{Properties of the Parity Operator}\label{sec:propPar}

In the following claim, we state properties of the parity operator. In all this, $F(\x),G(\y)$ are Boolean formula, where $\x,\y$ can be non-disjoint (unless mentioned otherwise).

\begin{claim}\label{clm:many}

We have the following:

\begin{enumerate}
    \item  $\neg\parity\x F(\x) = \parity\x(F(\x)+1)$

     \item $\parity\x \parity \y F(\x,\y) =\parity(\x, \y) F(\x,\y)$. 
    \item $\parity\x F(\x)\land \parity\y G(\y) = \parity(\x,\y)[F(\x)\land G(\y)]$, where $\x,\y$ are disjoint.

    \item 
    
    $\parity\x F(\x)\lor \parity\y G(\y) = \neg(\parity(\x,\y)[(F(\x)+1)\land (G(\y)+1)] = \parity(\x,\y)[(F(\x)+1)\land (F(\y)+1)]+1$, where $\x,\y$ are disjoint.

    \item $[\parity\x F(\x,\y)]\land G(\y) =\parity\x [F(\x,\y)\land G(\y)]$. 
    
\end{enumerate}

\end{claim}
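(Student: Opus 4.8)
The final statement to prove is Claim~\ref{clm:many}, a collection of five identities about the parity quantifier $\parity$. Here is how I would organize the proof.

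\medskip

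\noindent\textbf{Proof plan for Claim~\ref{clm:many}.}
The common tool throughout is the defining property of $\parity$ together with the counting identities established in the preliminaries: $\#(F+G)=\#F+\#G$, $\#(F\times G)=\#F\cdot\#G$, and $\#(F+1)=\#F+1$ (for the $+1$ operator on the appropriate variable set). Since $\parity\x F(\x)$ is true exactly when $\#F(\x)$ is odd, each identity reduces to a statement about the parity of a count, which I will verify by a short arithmetic computation modulo $2$. I would handle the items in an order that lets later parts reuse earlier ones.

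\medskip

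\noindent First I would prove item 1: $\#(F(\x)+1)=\#F(\x)+1$ by the $+1$ rule, so its parity is flipped relative to $\#F(\x)$; hence $\parity\x(F(\x)+1)$ is true iff $\parity\x F(\x)$ is false, which is exactly $\neg\parity\x F(\x)$. (A small point to note: the $+1$ operator needs the free variable set of $F$, so I would be explicit that $\x$ here is the full variable set of $F$.) Item 2 is immediate: for fixed $\x$, summing over $\y$ and then over $\x$ counts the same solution set as summing over the pair $(\x,\y)$, so $\#\parity$-relevant counts agree and the two quantified formulas are logically equivalent. For item 3, with $\x,\y$ disjoint, $\#[F(\x)\land G(\y)]=\#F(\x)\cdot\#G(\y)$; a product of integers is odd iff both factors are odd, which is precisely ``$\parity\x F$ and $\parity\y G$''. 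Item 4 then follows by combining items 1 and 3 with De Morgan: $\parity\x F\lor\parity\y G=\neg(\neg\parity\x F\wedge\neg\parity\y G)=\neg(\parity\x(F+1)\wedge\parity\y(G+1))$, and applying item 3 to the conjunction and item 1 once more to the outer negation yields the stated forms. For item 5, the key observation is that $G(\y)$ does not depend on $\x$: for a fixed assignment $\vec\sigma$ to $\y$, if $G(\vec\sigma)$ is false then both sides are false for that $\vec\sigma$; if $G(\vec\sigma)$ is true then $\#_{\x}[F(\x,\vec\sigma)\land G(\vec\sigma)]=\#_{\x}F(\x,\vec\sigma)$, so the parity in $\x$ is unchanged. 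Either way the two formulas agree as functions of $\y$, establishing the equivalence.

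\medskip

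\noindent The only mildly delicate points — and where I would be most careful — are bookkeeping of variable sets rather than any real mathematical obstacle: making sure the disjointness hypotheses in items 3 and 4 are actually used (the product/count identity fails without them), and making sure the $+1$ operator in items 1 and 4 is applied over exactly the free-variable set that keeps $\#(F+1)=\#F+1$ valid. With those caveats stated up front, each of the five parts is a two- or three-line computation modulo $2$.
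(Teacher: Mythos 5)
Your proposal is correct and follows essentially the same route as the paper's proof: reduce each identity to an arithmetic statement about $\#F$ modulo $2$, using the counting rules for $+$, $+1$, and $\times$, deriving item 4 from items 1 and 3 via De Morgan, and handling item 5 by casing on the truth value of $G$ for a fixed assignment to $\y$. The only place you are terser than the paper is item 2, where the paper explicitly invokes the fact that a sum is odd iff the number of odd summands is odd (partitioning the $\x$-assignments accordingly); your "short computation modulo $2$" is exactly that step and should be spelled out.
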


Proving these properties is not hard, we nevertheless provide an exact proof.

\begin{proof}
\thinspace
\begin{enumerate}
     \item 
    The claim follows from the definitions. $\neg\parity\x F(\x) = True $ if and only if $\parity\x F(\x) = False$,
    
    which means that
    $\#F(\x) \mod{2} = 0$. Following the definition of the $+$ operator, we then have that $\#[F(\x)]+1 \mod{2} = 1$, which means that
    $\#[F(\x)+1] \mod{2} = 1$. This is the same as  $\parity\x(F(\x)+1) = True$.
     \item 

     Let $S_x = \{\sigma_{\x} \mid \parity y F(\sigma_{\x},y)=True\}$ be the set of assignments for $\x$ for which there exists an assignment for $\y$ that makes $F(\x,\y)$ to be \true.
      Let $S_x^o=\{\sigma_{\x} \mid \parity\y F(\sigma_{\x},\y) = False \}$ denote the set of assignments for $\x$, for which the number of the assignments for $\y$ that makes $F(\x,\y)$ \true is odd.
     Same, 
     Let $S_x^e=\{\sigma_x \mid \parity\y F(\sigma_{\x},\y) = False \}$ denote the set of assignments for $\x$, for which the number of the assignments for $\y$ that makes $F(\x,\y)$ \true is even.
    Then we have that $\{S_x^o,S_x^e\}$ partitions $S_x$. 

     Now we have $\parity(\x, \y) F(\x,\y) = True$ if and only if  $\#F(\vec{x},\vec{y}) \mod{2} = 1$. That means that 
     $(\sum_{\sigma_{\x}\in S_x^o} \#F(\sigma_{\x},\y) +\sum_{\sigma_{\x}\in S_x^e} \#F(\sigma_{\x},\y)) \mod{2} =1$, which means that $\sum_{\sigma_{\x}\in S_x^o} \#F(\sigma_{\x},\y) \mod{2} +\sum_{\sigma_{\x}\in S_x^e} \#F(\sigma_{\x},\y) \mod{2} =1$. By definition, for every $\sigma_{\x}\in S^e_x$, $\#F(\sigma_{\x},\y) = 0 \mod{2}$. Then $\sum_{\sigma_{\x}\in S_x^e} \#F(\sigma_{\x},\y) \mod{2} = 0 \mod {2}$. So we have that $\sum_{\sigma_{\x}\in S_x^o} \#F(\sigma_{\x},\y) \mod{2} = 1 \mod{2}$. Again, by definition, for every $\sigma_{\x}\in S^o_x$, $\#F(\sigma_{\x},\y) = 1 \mod{2}$. Since an even sum of odds is even, and an odd sum of odds is odd, we $S^o_x$ is \odd. That means that the number of $x$ assignment for which the number of $y$ assignments that makes $F(\x,\y)$ \odd, is \odd. But that is the same as saying $\parity\x \parity \y F(\x,\y)= True$. Since all proof steps are "if and only if" the proof is complete.

      \item

 We have that $\parity\x F(\x)\land \parity\y G(\y) = True$ if and only if $(\#F(\x) \mod{2} = 1) \land (\#G(\y) \mod{2} = 1)$. Since multiplications of odds is odd, then this happens if and only if $(\#[F(\x)]\times\#[G(\y)]) \mod{2} = 1$, which happens, since $\x,\y$ are disjoint, if and only if $(\#[F(\x) \times G(\y)]])\mod{2} = 1$. This is the same as $\parity(\x,\y)[F(\x)\land G(\y)] = True$.
      
     \item 

    The proof is straightforward from parts 1 and 3 of this claim and from using De Morgan. Specifically we have:
   \[ \parity\x F(\x)\lor \parity\y G(\y) = \neg\neg(\parity\x F(\x) \lor \parity\y G(\y)) = \neg(\neg\parity\x F(\x) \land \neg\parity\y G(\y)) = \]
    \[ = \neg(\parity\x (F(\x)+1) \land \parity\y (G(\y)+1)) = \neg(\parity(\x,\y)[(F(\x)+1) \land (G(\y)+1)]\]
    \[ = \parity(\x,\y)[(F(\x)+1) \land (G(\y)+1)]+1 \]


     \item 

     Note that both formulas have free variables $\y$. Let $\sigma_{\y}$ be an assignment for $\sigma_{\y}$ for which $([\parity\x F(\x,\sigma_{\y})]\land G(\sigma_{\y})) = True$. Then that means that both $(\parity\x F(\x,\sigma_{\y})=True$ and $G(\sigma_{\y})) = True$. Then, for every assignment $\sigma_{\x}$ for $\x$ we have that $F(\sigma_{\x},\sigma_{\y})=True$ if and only if $F(\sigma_{\x},\sigma_{\y})\land G(\sigma_{\y})=True$.
     Then it means that the parity of the assignment for $\x$ that make $F(\x,\sigma_{\y})\land G(\sigma_{\y})$ is also \odd, which means $\parity\x [F(\x,\y)\land G(\y)]=True$.
     
     On the other hand assume that $\sigma_{\y}$ is an assignment for $\sigma_{\y}$ for which $([\parity\x F(\x,\sigma_{\y})]\land G(\sigma_{\y})) = False$.
     Then if $G(\sigma_{\y})$ is \false then there are no assignments for $F(\x,\sigma_{\y})\land G(\sigma_{\y})$, hence its parity is \even (zero assignments). Otherwise we have that $G(\sigma_{\y})$ is \true, and again for every assignment $\sigma_{\x}$ for $\x$ we have that $F(\sigma_{\x},\sigma_{\y})=True$ if and only if $F(\sigma_{\x},\sigma_{\y})\land G(\sigma_{\y})=True$.
     Then it means that the parity of the assignment for $\x$ that make $F(\x,\sigma_{\y})\land G(\sigma_{\y})$ is also \even, which means $\parity\x [F(\x,\y)\land G(\y)]=False$.

    \[ ([\parity\x F(\x,\y)]\land G(\y)) = True \iff ([\parity\x F(\x,\y)] = True)\land (G(\y) = True) \iff\]
    \[\iff ([\parity\x (F(\x,\y)\land True)] = True)\land (G(\y) = True) \iff \parity\x [F(\x,\y)\land G(\y)]= True\]
\end{enumerate}   
\end{proof}



\subsection{Details of Toda's Reduction}\label{sec:detreduction}

We now provide detailed steps of transforming formula $F'_i$ from Equation~\ref{eq:beginF_i} to formula $F''_i$ in Equation~\ref{eq:endF_i}. We do that by applying the properties of the parity quantifier from Section~\ref{sec:propPar}.

Starting with:
\begin{multline*}
F'_i\equiv\vec{Q}\x\neg([\parity\x_i \,(\parity\y_i G_i(\x,\x_i,\y_i) \land h_1(\x_i))] \lor \ldots \lor [\parity\x_i \, \\(\parity\y_i G_i(\x,\x_i,\y_i) \land h_k(\x_i))])
\end{multline*}

Denote $\parity\y_i G_i(\x,\x_i,\y_i) \land h_j(\x_i)$ as $H_j(\x,\x_i)$ for brevity. So we have:
\begin{equation*}
F'_i\equiv\vec{Q}\x\neg([\parity\x_i \, H_1(\x,\x_i)] \lor \ldots \lor [\parity\x_i \, H_k(\x,\x_i)])
\end{equation*}

\textbf{Step 1:} Apply De Morgan's law converting the disjunction to a negated conjunction of negated terms.
\begin{equation*}
F'_i\equiv\vec{Q}\x\neg\neg([\neg\parity\x_i \, H_1(\x,\x_i)] \land \ldots \land [\neg\parity\x_i \, H_k(\x,\x_i)])
\end{equation*}

\textbf{Step 2:} Use Property 1 from Claim~\ref{clm:many} to insert the negation next to the parity using the $+1$ operator.
\begin{equation*}
F'_i\equiv\vec{Q}\x\neg\neg([\parity\x_i \, (H_1(\x,\x_i)+1)] \land \ldots \land [\parity\x_i \, (H_k(\x,\x_i)+1)])
\end{equation*}

Expanding $H_j$ back to its original form we get:
\begin{multline*}
F'_i\equiv\vec{Q}\x\neg\neg([\parity\x_i \, ((\parity\y_i G_i(\x,\x_i,\y_i) \land h_1(\x_i))+1)] \land \ldots \\ \land [\parity\x_i \, ((\parity\y_i G_i(\x,\x_i,\y_i) \land h_k(\x_i))+1)])
\end{multline*}

\textbf{Step 3:} Applying Property 5 to push the hash functions into the inner parity quantifier.
\begin{multline*}
F'_i\equiv\vec{Q}\x\neg\neg([\parity\x_i \, ((\parity\y_i [G_i(\x,\x_i,\y_i) \land h_1(\x_i)])+1)] \land \ldots \\ \land [\parity\x_i \, ((\parity\y_i [G_i(\x,\x_i,\y_i) \land h_k(\x_i)])+1)])
\end{multline*}

\textbf{Step 4:} Use Property 2 to combine the nested parity quantifiers.
\begin{multline*}
F'_i\equiv\vec{Q}\x\neg\neg([\parity(\x_i,\y_i) \, ([G_i(\x,\x_i,\y_i) \land h_1(\x_i)])+1)] \land \ldots \\ \land [\parity(\x_i,\y_i) \, ([G_i(\x,\x_i,\y_i) \land h_k(\x_i)])+1)])
\end{multline*}

\textbf{Step 5:} Currently we have conjunction of parity quantified formulas. In order to combine them and result in a single quantified formula we wish to use Property 3. However, this requires disjoint variables. To satisfy this property we duplicate variables and refresh the variables names. For each repetition $j$ we replace $\x_i$ with $\x_i^j$ and $\y_i$ with $\y_i^j$ resulting in :
\begin{multline*}
F'_i\equiv\vec{Q}\x\neg\neg([\parity(\x_i^1,\y_i^1) \, ([G_i(\x,\x_i^1,\y_i^1) \land h_1(\x_i^1)])+1)] \land \ldots \\ \land [\parity(\x_i^k,\y_i^k) \, ([G_i(\x,\x_i^k,\y_i^k) \land h_k(\x_i^k)])+1)])
\end{multline*}

\textbf{Step 6:} We combine the conjunction of the parity quantified formulas by applying Property 3 repeatedly. We get:
\begin{multline*}
F'_i\equiv\vec{Q}\x\neg\neg\parity(\x_i^1,\y_i^1,\ldots,\x_i^k,\y_i^k) \, [(([G_i(\x,\x_i^1,\y_i^1) \land h_1(\x_i^1)])+1) \times \ldots \\ \times (([G_i(\x,\x_i^k,\y_i^k) \land h_k(\x_i^k)])+1)]
\end{multline*}

\textbf{Step 7:} We use Property 1 again to push one of the negations inside the parity quantifier:
\begin{multline*}
F'_i\equiv\vec{Q}\x\neg\parity(\x_i^1,\y_i^1,\ldots,\x_i^k,\y_i^k) \, [(([G_i(\x,\x_i^1,\y_i^1) \land h_1(\x_i^1)])+1) \times \ldots \\ \times (([G_i(\x,\x_i^k,\y_i^k) \land h_k(\x_i^k)])+1)] + 1
\end{multline*}

This is exactly the form of ${F''}_i$ shown in Equation~\ref{eq:endF_i}:
\begin{multline*}
{F''}_i\equiv\vec{Q}\x\neg\parity \x^1_i, y^1_i,\cdots \x^k_i \y^k_i [[(G_i(\x^1_i, y^1_i)\land \\
h_1(\x^1_i))+1] \times\cdots \times [(G_i(\x^k_i,\y^k_i)\land h_k(\x^k_i))+1]]+1
\end{multline*}

The transformation is complete. 

\section{Elements from the Toda's Reduction Algorithm}

\subsection{The Hash Algorithm}\label{sec:hashAlg}

Although well-known, the XOR hash function is a critical component of Toda's reduction. For completion we provide a detailed algorithm:


\begin{algorithm}[H]
\SetAlgoLined
\SetKwFunction{random}{random}
\SetKwFunction{xor}{xor}
\SetKwFunction{pickRandom}{pickRandom}
\SetKwFunction{rndbit}{rndbit}
\KwIn{$F$, $vars$ }
\Begin{
    $m\gets$ \random[$2,|vars|+1$] \
    
     \For{$i \gets 1$ \KwTo $m$}{
        $g_i\gets$ \xor(\pickRandom($vars$, each with random polarity))\
         $g_i\gets g_i$ \xor(\rndbit())\ 
        }
 \Return $F\land \bigwedge_{1\leq i\leq m} g_i$ 
  }
\caption{hash; implements probabilistic solution isolation}\label{alg:hash}
\end{algorithm}

This algorithm implements the hash function. The notation $H_{n,m}$ is defined in Section~\ref{sec:vvthm} We have that the number of variables ,$|vars|$, is $n$. We denote the set of satisfying assignments for $F$ by $S$, and it's size by $|S|$. The algorithm works as follows: It starts in Line 2 by randomly selecting a parameter $m$ between $2$ and $n+1$, which is a guess of $|S|$ (trying to choose $m$ such that $2^{m-1} \leq |S| \leq 2^m$). Then, we construct $m$ linear XOR constraints $g_i$ (Lines 3-6). Each constraint is built in the following manner: Randomly select subset of $vars$ (each variable is included with probability 1/2), where each selected variable is also randomly negated with probability 1/2. Then we XOR these variables together and XOR the result with a random bit.
Finally, in Line 7, we return the conjunction of the original formula $F$ with all constraints $g_i$. 

This implementation creates $m$ random linear constraints that \emph{sieve} the satisfying assignment of $F$ must also satisfy. The reason this hash function is of use to us stems from the following property, which is a part of the proof in Valiant-Vazirani theorem :

\begin{lemma}
If $F$ has $|S|$ satisfying assignments where $2^{m-1} \leq |S| \leq 2^m$, then with probability at least $1/8$, the formula $F \land \bigwedge_{1\leq i\leq m} g_i$ has exactly one satisfying assignment.
\end{lemma}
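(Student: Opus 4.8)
The plan is to follow the classical Valiant--Vazirani argument for the ``right'' value of $m$. First I would fix the hash family $H_{n,m}$ and the set $S$ of satisfying assignments of $F$, and consider the random formula $F \land \bigwedge_{1 \le i \le m} g_i$, which is exactly $F \land h$ for a uniformly random $h \in H_{n,m}$ (the $m$ XOR constraints with independent random polarities and a random bit each realize a uniformly random affine map $h(\x) = A\x + \vec b$, so that $h(\x)=0^m$ is our sieve). For each fixed assignment $a \in S$, the event $h(a) = 0^m$ has probability $2^{-m}$, and by pairwise independence of $H_{n,m}$, for distinct $a, a' \in S$ the events $h(a)=0^m$ and $h(a')=0^m$ are independent. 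This is the only probabilistic fact I need.

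Next I would estimate $\Pr(|F \land h| = 1)$ from below by inclusion--exclusion / union-bound style reasoning. Writing $N = |\{a \in S : h(a) = 0^m\}|$, we have $\Pr(N = 1) = \Pr(N \ge 1) - \Pr(N \ge 2)$. By a union bound, $\Pr(N \ge 2) \le \sum_{\{a,a'\}} \Pr(h(a)=h(a')=0^m) = \binom{|S|}{2} 2^{-2m} \le \tfrac12 (|S| \cdot 2^{-m})^2$. For the lower bound on $\Pr(N \ge 1)$ I would use Bonferroni: $\Pr(N \ge 1) \ge \sum_a \Pr(h(a)=0^m) - \sum_{\{a,a'\}} \Pr(h(a)=h(a')=0^m) \ge |S| 2^{-m} - \tfrac12(|S|2^{-m})^2$. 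Combining, $\Pr(N = 1) \ge |S| 2^{-m} - (|S| 2^{-m})^2$. (Alternatively one can get the slightly sharper $\Pr(N=1) \ge 2(|S|2^{-m}) - 2(|S|2^{-m})^2$ by a second-moment argument, but the stated bound $\tfrac18$ only needs the weaker estimate.) Now plug in the hypothesis $2^{m-1} \le |S| \le 2^m$, which gives $\tfrac12 \le |S| \cdot 2^{-m} \le 1$; setting $\mu = |S| 2^{-m}$, the function $\mu - \mu^2$ on $[\tfrac12, 1]$ is minimized at the endpoint $\mu = 1$ where it vanishes — so this range is slightly too generous and I would instead note that the honest ``exact'' regime in the Valiant--Vazirani proof is $2^{m-1} \le |S| < 2^m$ giving $\tfrac12 \le \mu < 1$; more care is needed. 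The clean statement uses $m$ with $\tfrac14 \le \mu \le \tfrac12$, where $\mu - \mu^2 \ge \tfrac14 - \tfrac{1}{16} = \tfrac{3}{16} \ge \tfrac18$; since the lemma as stated hypothesizes $2^{m-1} \le |S| \le 2^m$, I would either (a) reinterpret this as the existence of such an $m$ up to the standard off-by-one, or (b) observe that for $\mu \in [\tfrac12,1]$ one instead bounds $\Pr(N=1)$ directly using a more refined second-moment inequality that stays above $\tfrac18$ throughout the interval. Concretely, $\Pr(N=1) \ge \mathbb{E}[N] - (\mathbb{E}[N^2]-\mathbb{E}[N]) = 2\mathbb{E}[N] - \mathbb{E}[N^2]$, and $\mathbb{E}[N] = \mu$, $\mathbb{E}[N^2] = \mu + (|S|^2 - |S|)2^{-2m} \le \mu + \mu^2$, so $\Pr(N=1) \ge \mu - \mu^2$, which on $[\tfrac14,\tfrac12]$ is at least $\tfrac18$ (attained at $\mu = \tfrac12$); hence the lemma holds once we record that the relevant $m$ is the one making $\mu$ land in $[\tfrac14,\tfrac12]$, equivalently $2^{m-2} \le |S| \le 2^{m-1}$.

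The main obstacle is purely bookkeeping: reconciling the interval for $|S|/2^m$ stated in the lemma ($2^{m-1} \le |S| \le 2^m$, i.e. $\mu \in [\tfrac12,1]$) with the interval actually needed to make $\mu - \mu^2 \ge \tfrac18$ (namely $\mu \in [\tfrac14,\tfrac12]$). I expect the resolution is the standard convention in Toda's reduction — since $m$ ranges over $[2,n+1]$ and $|S| \le 2^n$, there is always a valid $m$ putting $\mu$ in the favorable range, and the lemma is invoked only for that $m$ — so I would state the estimate $\Pr(N=1) \ge \mu - \mu^2$ in full generality and then specialize. Everything else (pairwise independence of the XOR construction, the two Bonferroni inequalities, the elementary optimization of $\mu - \mu^2$) is routine.
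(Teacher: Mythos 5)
Your proposal is correct and is exactly the standard Valiant--Vazirani second-moment argument that the paper itself relies on: the paper states this lemma without proof (deferring to Valiant--Vazirani), and its own derivation in Section 5.1 uses precisely your estimate $\Pr(|F \land h_m| = 1) \ge \mu - \mu^2$ with $\mu = |S|\cdot 2^{-m}$, obtained from pairwise independence of the affine hash family. So there is no methodological divergence to report.

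Your bookkeeping concern is a genuine catch rather than a defect of your argument. As literally stated, the hypothesis $2^{m-1} \le |S| \le 2^m$ puts $\mu$ in $[\tfrac12, 1]$, where $\mu - \mu^2$ falls below $\tfrac18$ once $\mu > \tfrac{2+\sqrt{2}}{4} \approx 0.854$ and vanishes at $\mu = 1$, so the pairwise-independence bound cannot deliver the claimed $\tfrac18$ on that whole range. The interval the paper actually uses in Section 5.1, namely $\tfrac14 \le |S|\cdot 2^{-m} \le \tfrac12$, is the right one; the lemma's hypothesis should read $2^{m-2} \le |S| \le 2^{m-1}$, and with that correction your computation gives $\mu - \mu^2 \ge \tfrac{3}{16} \ge \tfrac18$. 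One small numerical slip: on $[\tfrac14,\tfrac12]$ the minimum of $\mu-\mu^2$ is $\tfrac{3}{16}$, attained at $\mu=\tfrac14$ (at $\mu=\tfrac12$ the value is $\tfrac14$, the maximum on that interval), so the value $\tfrac18$ is not attained there; this does not affect your conclusion.
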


If the chosen $m$ indeed satisfy this constraint, our hash function has a  probability of at least $1/8$ of isolating exactly single solution, in case there exists a solution. Since we do not know the exact number of solutions in advance, we choose $m$ randomly, ending with probability of at least $1/8n$. 

\subsection{Analysis of the Number of Repetitions}\label{sec:repnaiveanalysis}

We provide a proof for Proposition~\ref{clm:repetitionNaive}.

\begin{proposition*}
Setting $k_i \geq \log_{1-p}(\epsilon_i) - s\log_{1-p}(2)$ ensures that the probability for error at step $i$ is at most $\epsilon_i$.
\end{proposition*}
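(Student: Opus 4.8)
The plan is to fix a step $i$ of the reduction, condition on a single assignment $\vec{\sigma}$ to the $s$ outer variables, and bound the failure probability there; the union bound over all $2^s$ outer assignments then finishes the argument. First I would recall the setup: at step $i$ the procedure runs $k_i$ independent repetitions of Valiant–Vazirani, each producing a sieved copy $F_j = \texttt{hash}(F,\x_i)$, and these copies are disjuncted (equivalently combined via the $+1$/$\times$ construction). By Theorem~\ref{thm:vv}, for a fixed outer assignment $\vec{\sigma}$ a single repetition succeeds (isolates exactly one solution, so that $\parity\x_i$ faithfully captures $\exists\x_i$) with probability at least $p \geq 1/8(|\y_i|+|\x_i|)$, and crucially, when $F$ has no solution under $\vec{\sigma}$ the sieved formula is always unsatisfiable, so there is no error on the ``false'' side. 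Hence the only way step $i$ errs at $\vec{\sigma}$ is if \emph{every one} of the $k_i$ independent repetitions fails to isolate, which by independence has probability at most $(1-p)^{k_i}$.

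The next step is the routine calculation: we want $(1-p)^{k_i} \leq \epsilon_i / 2^s$ so that the union bound over the $2^s$ outer assignments gives total failure probability at most $2^s \cdot \epsilon_i/2^s = \epsilon_i$. Taking logarithms base $1-p$ (note $0 < 1-p < 1$, so $\log_{1-p}$ is decreasing and the inequality flips): $(1-p)^{k_i} \leq \epsilon_i/2^s$ is equivalent to $k_i \geq \log_{1-p}(\epsilon_i/2^s) = \log_{1-p}(\epsilon_i) - \log_{1-p}(2^s) = \log_{1-p}(\epsilon_i) - s\log_{1-p}(2)$, which is exactly the claimed bound.

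The only subtlety worth spelling out — and the place I expect a reader to want care — is the interaction between the per-assignment analysis and the global formula. The sieved copies are disjuncted, so the combined formula behaves correctly at $\vec{\sigma}$ as soon as \emph{at least one} repetition isolates correctly at $\vec{\sigma}$; and since different repetitions use independently chosen hash functions, the events ``repetition $j$ fails at $\vec{\sigma}$'' are mutually independent, justifying the product $(1-p)^{k_i}$. One then observes that ``the construction is correct for \emph{all} outer assignments simultaneously'' is implied by ``correct for each outer assignment,'' because the outer prefix $\vec{Q}\x$ quantifies over exactly those assignments; a failure of the overall step-$i$ transformation forces a failure at some particular $\vec{\sigma}$, which is the event we bounded. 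Assembling these observations gives the proposition. I do not anticipate a genuine obstacle here — the statement is a clean union-bound-plus-logarithm argument — so the writeup is mostly a matter of making the ``fix $\vec{\sigma}$, then union bound'' structure explicit and citing Theorem~\ref{thm:vv} for the per-repetition success probability.
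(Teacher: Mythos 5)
Your proposal is correct and follows essentially the same route as the paper's proof: bound the per-assignment failure probability by $(1-p)^{k_i}$, apply a union bound over the $2^s$ outer assignments, and take logarithms base $1-p$. The extra care you take in justifying independence across repetitions and the one-sidedness of the error is a welcome elaboration but does not change the argument.
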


\begin{proof}
Assume we duplicate  $\parity x_i \,(\parity y_i G_i(\vsigma,\x_i,\y_i) \land h_1(\x_i))$
for $k_i$ independent repetitions, the probability of failing in all of the repetitions is at most $(1-p)^{k_i}$.
To achieve our desired error rate of no more than $\epsilon_i$, we need to ensure that the actual error rate is lower than $\epsilon_i$. In other words we need to ensure that $\epsilon_i \geq (1-p)^{k_i}$. Taking logarithm of both sides we get $\log_{1-p}(\epsilon_i) \leq k_i$.
Therefore, we need at least $k_i \geq \log_{1-p}(\epsilon_i)$ repetitions.

This analysis, however, applies to a specific assignment $\vsigma$ to $\x$. Since we do not control $\vsigma$ at all, we need to make sure that we bound the error rate over \emph{all} possible assignments to these free variables.

Recall that $s=|\x|$ is the number of outer variables. Then we have $2^s$ possible assignments for $\x$. Using union bound, we get that :

\begin{equation*}
    \Pr(\text{Failure}) \leq 2^s \cdot \Pr(\text{Failure for a single assignment}) = 2^s(1-p)^{k_i}
\end{equation*}

Similarly to the single assignment case to maintain the acceptable error rate $\epsilon_i$ for the current step $i$, we require that $\epsilon_i \geq 2^s(1-p)^{k_i}$ which implies $(1-p)^{k_i} \leq \frac{\epsilon_i}{2^s}$.
Taking logarithms we get:

\begin{align*}
    k_i &\geq \log_{1-p}\left(\frac{\epsilon_i}{2^s}\right) \\
    &= \log_{1-p}(\epsilon_i) - s\log_{1-p}(2)
\end{align*}
\end{proof}

\section{Details in Refined Inner-Error Analysis}\label{sec:Appanalysis}

We give a full proof of Theorem~\ref{clm:innerprobnaive}.

\begin{theorem*}
 Let $p$ be such that $1-\epsilon\leq (p^{\left(2\left \lceil{\frac{d}{2}}\right \rceil \right)} + p)/(p+1)$. Then 
 setting $\epsilon_i = 1-p$ for every $i$ ensures the overall error bound $\epsilon$ is achieved.
\end{theorem*}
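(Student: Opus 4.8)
# Proof Proposal

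The plan is to formalize the error-propagation dynamics step by step, grouping quantifiers into $\forall\exists$ pairs and tracking four conditional probabilities through a $4$-state Markov-like recursion, then solve the recursion in closed form and finally handle the boundary cases where the number of quantifiers is odd or the first quantifier is existential.

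First I would set up the primitive transition probabilities for a single quantifier elimination step. Since the construction guarantees one-sided correctness for existential quantifiers (if $F_{i+1}$ is false, $F_i$ stays false with probability $1$; if $F_{i+1}$ is true, $F_i$ stays true with probability at least $p = 1-\epsilon_i$), and the dual statement for universal quantifiers (obtained via $\forall x\, f \equiv \neg\exists x\,\neg f$), I would compute the composite behavior of a $\forall\exists$ block. Concatenating an existential step after a universal step gives, for every relevant index $i$,
\begin{align*}
\Pr(Tr_i \mid Tr_{i+2}) &= p + (1-p)^2, & \Pr(Fl_i \mid Tr_{i+2}) &= (1-p)p, \\
\Pr(Tr_i \mid Fl_{i+2}) &= 1-p, & \Pr(Fl_i \mid Fl_{i+2}) &= p.
\end{align*}
Here I would be careful to justify directionality: the universal step is the "outer" one in the pair $\forall\exists$, and because amplification only ever degrades a true instance of an existential (resp. false instance of a universal), the worst-case bounds compose multiplicatively in the way written. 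I would also note that $p$ can be taken uniform across steps because we chose $\epsilon_i = 1-p$ for all $i$.

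Next I would iterate the pair-transition over $l = d/2$ blocks in the clean case (even $d$, first quantifier universal). Writing the state vector $(\Pr(Tr_0\mid Tr_{2l}), \Pr(Fl_0\mid Tr_{2l}), \Pr(Tr_0\mid Fl_{2l}), \Pr(Fl_0\mid Fl_{2l}))^T$ and using the law of total probability through the intermediate level $2$, I obtain the recurrence system displayed in the proof outline. This is a linear recursion; I would solve it by diagonalizing the relevant $2\times 2$ transition blocks (the true-conditioned and false-conditioned parts decouple nicely because of the one-sided structure), yielding
\begin{equation*}
\begin{pmatrix}
\Pr(Tr_0\mid Tr_{2l})\\ \Pr(Fl_0\mid Tr_{2l})\\ \Pr(Tr_0\mid Fl_{2l})\\ \Pr(Fl_0\mid Fl_{2l})
\end{pmatrix}
=
\begin{pmatrix}
\frac{p^{2l+1}+1}{p+1}\\[2pt]
\frac{-p^{2l+1}+p}{p+1}\\[2pt]
\frac{-p^{2l}+1}{p+1}\\[2pt]
\frac{p^{2l}+p}{p+1}
\end{pmatrix}.
\end{equation*}
A quick induction check (base $l=1$ against the primitive pair probabilities, and the inductive step against the recurrence) would verify this closed form without needing the full eigen-decomposition machinery.

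Finally I would collect the overall bound. We need $1-\epsilon \le \min(\Pr(Tr_0\mid Tr_d), \Pr(Fl_0\mid Fl_d))$; with $d = 2l$ the two relevant entries are $\frac{p^{d+1}+1}{p+1}$ and $\frac{p^{d}+p}{p+1}$, and since $0 < p < 1$ the second is the smaller, giving $\frac{p^{d}+p}{p+1}$. For the remaining cases — odd $d$, or first quantifier existential, or last quantifier differing — I would either prepend/append a single trivial (probability-$1$) step or re-pair the quantifiers from the other end; in each case one checks the bottleneck conditional probability is at least $\frac{p^{2\lceil d/2\rceil}+p}{p+1}$, which is why the ceiling appears. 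Setting $1-\epsilon \le \frac{p^{2\lceil d/2\rceil}+p}{p+1}$ then guarantees the overall error bound $\epsilon$. I expect the main obstacle to be the bookkeeping in these boundary cases: making sure that padding with a dummy quantifier is legitimate (it is, since an absent quantifier contributes a probability-$1$ transition that can only help), and that the pairing is always oriented so that the universal sits outside the existential, matching the direction in which the amplification error accumulates; getting the orientation wrong would flip a $p$ and a $1$ and break the bound.
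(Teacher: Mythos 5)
Your proposal is correct and follows essentially the same route as the paper's proof: the same one-sided error propagation for each quantifier type, the same grouping into $\forall\exists$ pairs with identical pair-transition probabilities, the same closed form for the iterated recursion, and the same identification of $\frac{p^{d}+p}{p+1}$ as the binding entry, with boundary cases handled by adjusting the pairing at the ends (the paper does this via an extra outer matrix and a modified initial vector rather than dummy padding, but the effect is the same). The only cosmetic difference is that you verify the closed form by induction where the paper computes $\mathbf{M}_{\text{pair}}^{l}$ explicitly.
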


\begin{proof}

For every $i$, set the inner-error probability $\epsilon_i=1-p$ for some $0\leq p\leq 1$. We show that when $p$ is such that $1-\epsilon\leq (p^{\left(2\left \lceil{\frac{d}{2}}\right \rceil \right)} + p)/(p+1)$, then the overall bound $\epsilon$ is achieved.
For that, we make use of the fact that existential and universal quantifiers behave differently in terms of error propagation: (i) For existential quantifiers: If $F_{i+1}$ is \false\kern-0.35em, then $F_i$ remains \false with probability $1$. If $F_{i+1}$ is \true\kern-0.35em, then $F_i$ remains \true with a probability $p$, defined by the amplification in the reduction; (ii) For universal quantifiers since $\forall x f(x) \equiv \neg\exists x \neg f(x)$ we get: If $F_{i+1}$ is \true\kern-0.35em, then $F_i$ remains \true with probability $1$. If $F_{i+1}$ is \false\kern-0.35em, then $F_i$ remains \false with probability $p$.
Our method to track down the error propagation is to group the quantifiers into pairs of $\forall\exists$ and treat each pair as a single step. For a single $\forall\exists$ pair, see that for every $i$:
$\Pr(Tr_i|Tr_{i+2}) = p + (1-p)^2$, $\Pr(Fl_i|Tr_{i+2}) = (1-p)p$,
$\Pr(Tr_i|Fl_{i+2}) = 1-p$, and $\Pr(Fl_i|Fl_{i+2}) = p$.


Assume for now that we have an even number of quantifiers, and that the first quantifier is universal. That means that we can view the sequence of quantifiers as pairs of  $\forall\exists$. Then for example we have that 

\begin{align*}
\Pr(Tr_0|Tr_{2l}) = \Pr(Tr_0|Tr_2)\Pr(Tr_2|Tr_{2l})+
\Pr(Tr_0|Fl_{2})\Pr(Fl_{2}|Tr_{2l})
\end{align*}

Using this, we have for every $1\leq l\leq d/2$ (recall that $d$ is even) the following set of recurrence relations:

\begin{align*}
\Pr(Tr_0|Tr_{2l}) &= \Pr(Tr_2|Tr_{2l})(p + (1-p)^2) + \Pr(Fl_2|Tr_{2l})(1-p) \\
\Pr(Fl_0|Tr_{2l}) &= \Pr(Fl_2|Tr_{2l})p + \Pr(Tr_2|Tr_{2l})(1-p)p \\
\Pr(Tr_0|Fl_{2l}) &= \Pr(Tr_2|Fl_{2l})(p + (1-p)^2) + \Pr(Fl_2|Fl_{2l})(1-p) \\
\Pr(Fl_0|Fl_{2l}) &= \Pr(Fl_2|Fl_{2l})p + \Pr(Tr_2|Fl_{2l})(1-p)p
\end{align*}

The above set of recurrence relations can be expressed using the matrix multiplication as:
\begin{equation*}
\begin{pmatrix}
\Pr(Tr_0|Tr_{2l}) \\
\Pr(Fl_0|Tr_{2l}) \\
\Pr(Tr_0|Fl_{2l}) \\
\Pr(Fl_0|Fl_{2l})
\end{pmatrix}
= \mathbf{M}_{\text{pair}} 
\begin{pmatrix}
\Pr(Tr_2|Tr_{2l}) \\
\Pr(Fl_2|Tr_{2l}) \\
\Pr(Tr_2|Fl_{2l}) \\
\Pr(Fl_2|Fl_{2l})
\end{pmatrix}
\end{equation*}
Where
\begin{equation*}
\mathbf{M}_{\text{pair}} = 
\begin{pmatrix}
p + (1-p)^2 & 1-p & 0 & 0 \\
p(1-p) & p & 0 & 0 \\
0 & 0 & p + (1-p)^2 & 1-p \\
0 & 0 & p(1-p) & p
\end{pmatrix}
\end{equation*}

To express the relation of $l$ pairs we can raise $\mathbf{M}_{\text{pair}}$ to the power of $l$ to get

matrix multiplication as:
\begin{equation*}
\begin{pmatrix}
\Pr(Tr_0|Tr_{2l}) \\
\Pr(Fl_0|Tr_{2l}) \\
\Pr(Tr_0|Fl_{2l}) \\
\Pr(Fl_0|Fl_{2l})
\end{pmatrix}
= (\mathbf{M}_{\text{pair}})^l 
\begin{pmatrix}
\Pr(Tr_{2l}|Tr_{2l}) \\
\Pr(Fl_{2l}|Tr_{2l}) \\
\Pr(Tr_{2l}|Fl_{2l}) \\
\Pr(Fl_{2l}|Fl_{2l})
\end{pmatrix}
= (\mathbf{M}_{\text{pair}})^l 
\begin{pmatrix}
1 \\
0 \\
0 \\
1
\end{pmatrix}
\end{equation*}

Next, we generalize this argument for the cases where our sequence of quantifiers is not strictly pairs of $\forall\exists$. This means that either the innermost quantifier is $\forall$ and/or the outermost  quantifier is $\exists$. We  address both cases by adding a term in the matrix multiplication equation.  Therefore our objective is to find the values of  $\vec{v}_{\text{final}}$ in the following computation:

\begin{equation}
\mathbf{M}_{\text{outer}} \cdot \mathbf{M}_{\text{pair}}^l \cdot \vec{v}_{\text{init}} = \vec{v}_{\text{final}}
\label{eq:matrix-main}
\end{equation}

Where:
\begin{itemize}
    \item $\vec{v}_{\text{final}}=(\Pr(Tr_0|Tr_{2l}), \Pr(Fl_0|Tr_{2l}),\Pr(Tr_0|Fl_{2l}), \Pr(Fl_0|Fl_{2l}))$ is the vector representing the final probabilities.
    \item $\vec{v}_{\text{init}}$ represents the initial probabilities - this is where we will account for the case where the innermost quantifier is $\forall$.
    \item $\mathbf{M}_{\text{pair}}$ encodes transitions for a single $\forall\exists$ pair.
    \item $\mathbf{M}_{\text{outer}}$ addresses the case where the outer most quantifier is $\exists$.
\end{itemize}

We first explain both possible cases of $\vec{v}_{\text{init}}$:

\begin{itemize}
\item In case that the inner-most quantifier is part of the sequence of the $\forall\exists$ pairs then
$$\vec{v}_{\text{init}}=(\Pr(Tr_{2l}|Tr_{2l}), \Pr(Fl_{2l}|Tr_{2l}),\Pr(Tr_{2l}|Fl_{2l}), \Pr(Fl_{2l}|Fl_{2l})) =(1,0,0,1)$$

\item For the case where the innermost quantifier is $\forall$ , then $\vec{v}_{\text{init}}$ is of the form:
$$\vec{v}_{\text{init}}=(\Pr(Tr_{2l}|Tr_{2l+1}), \Pr(Fl_{2l}|Tr_{2l+1}),\Pr(Tr_{2l}|Fl_{2l+1}), \Pr(Fl_{2l}|Fl_{2l+1}))=(1,0,1-p,p)$$

\end{itemize}




For the case  of $\mathbf{M}_{\text{outer}}$ we have to account for the outermost quantifier. 

\begin{itemize}
    \item In case that the outer most quantifier is $\forall$ we do not have to address any more changes to the probabilities, since the outermost quantifier is then a part of a$\forall\exists$ pair. Therefore so we define $\mathbf{M}_{\text{outer}} = \mathbf{I}_4$.
    \item For the case where the outer most quantifier is $\exists$ we have to account for the additional existential quantifier.
    Following the terminology of Algorithm~\ref{sec:explToda}, assume that we reached Step $1$ and our formula is of the form $F_1 = \exists\x_1\parity \y_1 \hat{F_1}(\x_1,\y_1)$.
    If $F_1$ is \false, then the resulting formula $F_0$ stays \false with probability of $1$ since there are no satisfying assignments for $\x_1$ in $\hat{F_1}$. On the other hand, if $F_1$ is \true, then $F_0$ will be \true with probability $p$ and \false with probability $1-p$. This results in the encoding:
\begin{equation*}
\mathbf{M}_{\text{outer}} = \begin{pmatrix}
p & 0 & 0 & 0 \\
1-p & 1 & 0 & 0 \\
0 & 0 & p & 0 \\
0 & 0 & 1-p & 1
\end{pmatrix}
\end{equation*}
\end{itemize}

Finally, we compute the  Equation~\ref{eq:matrix-main} system for all cases. For brevity we report only for the case where we have only $\forall\exists$ pairs, other cases produce the same minimal bound as we report below.

Solving Equation~\ref{eq:matrix-main} system for the case where we have only $\forall\exists$ pairs yields:
\begin{equation*}
\begin{pmatrix}
\Pr(Tr_0|Tr_{2l}) \\
\Pr(Fl_0|Tr_{2l}) \\
\Pr(Tr_0|Fl_{2l}) \\
\Pr(Fl_0|Fl_{2l})
\end{pmatrix} =
\begin{pmatrix}
\frac{p^{2l+1} + 1}{p+1} \\
\frac{-p^{2l+1} + p}{p+1} \\
\frac{-p^{2l} + 1}{p+1} \\
\frac{p^{2l} + p}{p+1}
\end{pmatrix}
\end{equation*}

To ensure our desired error bound $\epsilon$, we require:
\begin{equation*}
1-\epsilon\leq \min(\Pr(Tr_0|Tr_{d}),\Pr(Fl_0|Fl_{d}))
\end{equation*}

All we end with: $1-\epsilon \leq \frac{p^{\left(2\left \lceil{\frac{d}{2}}\right \rceil \right)} + p}{p+1}$
\end{proof}

\section{The Modular Addition Theorem}\label{sec:ModularAddition}

We give a detailed self-sustained proof of Theorem~\ref{thm:main} that stands at the heart of the modular addition method. Note that this proof is not essential for this paper.

We first we make some definitions and notations. For an integer $k>0$, let $[k]=\{0,\cdots,k-1\}$. 
We denote discrete random variables by capital letters, say $X$. We denote by $Pr(X=j)$ the probability that $j$ is assigned to $X$. Naturally, $Pr(X=j)\geq 0$ for every $j$. We say that $X=j$ has \emph{positive} probability if $Pr(X=j)>0$. We say that the range of $X$ is $[k]$ if $Pr(X=j)=0$ for every $j<0$ or $j>k$. We say that $X$ is over a \emph{bounded range} if the range of $X$ is $[k]$ for some $k>0$.

\newcounter{savedtheorem}
\setcounter{savedtheorem}{\value{theorem}}
\setcounter{theorem}{\getrefnumber{thm:main}-1}
\begin{theorem}\label{thm:mainMA}
Let $ X $ be a discrete random variable over a bounded range such that $X=0$ and $X=1$ have positive probabilities each. Let $X_j$ denote the independent $j$-th roll of $X$, and let $k>0$ be a positive integer.
Then for every $t$ such that $0\leq t \leq k-1$ the following holds:
\begin{equation*}
\lim_{n \to \infty} \text{Pr}\left((\sum_{j=1}^{n} X_j) \bmod{k} = t\right) = 1/k
\end{equation*}

\end{theorem}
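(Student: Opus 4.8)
The plan is to use the discrete Fourier transform (equivalently, the characters of the cyclic group $\mathbb{Z}/k\mathbb{Z}$) to express the probability in closed form and then take the limit term by term. Let $d_k = e^{2\pi i/k}$ be a primitive $k$-th root of unity, and write $p_j = \Pr(X \bmod k = j)$ for $j \in [k]$; note $\sum_{j=0}^{k-1} p_j = 1$ and, since $X=0$ and $X=1$ have positive probability, both $p_0 > 0$ and $p_1 > 0$.

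\textbf{Step 1: a closed-form expression.} First I would establish the identity in Equation~\ref{eq:prob_to_multinom}, namely
\begin{equation*}
\Pr\!\left((\textstyle\sum_{j=1}^{n} X_j) \bmod k = t\right) = \frac{1}{k}\sum_{m=0}^{k-1} d_k^{-mt}\left(\sum_{j=0}^{k-1} d_k^{mj} p_j\right)^{\!n}.
\end{equation*}
The cleanest route is via the orthogonality of characters: for any integer $a$, $\frac{1}{k}\sum_{m=0}^{k-1} d_k^{m a} = 1$ if $k \mid a$ and $0$ otherwise. Applying this indicator to $a = (\sum_j X_j) - t$ and taking expectations, the expectation of $d_k^{m \sum_j X_j}$ factors over the independent rolls into $\big(\mathbb{E}[d_k^{m X}]\big)^n = \big(\sum_{j=0}^{k-1} d_k^{mj} p_j\big)^n$, using that $d_k^{m X}$ depends only on $X \bmod k$. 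This gives the displayed formula without any heavy combinatorics; alternatively one can expand via the multinomial theorem, but the character argument is shorter and self-contained.

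\textbf{Step 2: passing to the limit.} Now I would split the sum over $m$ into the $m=0$ term and the rest. For $m=0$ the inner sum is $\sum_j p_j = 1$, so that term contributes $\frac{1}{k}\cdot 1^n = \frac{1}{k}$ for every $n$. For each $m \in \{1,\dots,k-1\}$, I claim $\big|\sum_{j=0}^{k-1} d_k^{mj} p_j\big| < 1$. By the triangle inequality this modulus is at most $\sum_j p_j = 1$, with equality only if all the nonzero complex numbers $d_k^{mj} p_j$ have the same argument; but $d_k^{0}p_0 = p_0 > 0$ is a positive real while $d_k^{m}p_1$ has argument $2\pi m/k \not\equiv 0 \pmod{2\pi}$ and $p_1 > 0$, so the two cannot be aligned, forcing strict inequality. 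Hence each such term is $d_k^{-mt}$ times a quantity of modulus $< 1$ raised to the $n$-th power, which tends to $0$ as $n \to \infty$. Summing the $k$ contributions yields $\lim_{n\to\infty}\Pr\big((\sum_j X_j)\bmod k = t\big) = \frac{1}{k}$, as desired.

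\textbf{Main obstacle.} The only genuinely delicate point is the strict inequality $\big|\sum_{j} d_k^{mj} p_j\big| < 1$ for $m \neq 0$: this is exactly where the hypothesis that \emph{both} $p_0 > 0$ and $p_1 > 0$ is used, and it is essential — if $X$ were supported on a single residue class, or on a coset of a proper subgroup of $\mathbb{Z}/k\mathbb{Z}$, the limit would fail. Everything else (the character identity of Step 1, the geometric decay in Step 2) is routine once that inequality is in hand. I would state the equality case of the triangle inequality for complex numbers explicitly and apply it to the two indices $j=0$ and $j=1$ to make this rigorous.
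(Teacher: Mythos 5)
Your proposal is correct and follows essentially the same route as the paper: the same roots-of-unity closed form (Equation~\ref{eq:prob_to_multinom}) followed by the same limit argument, isolating the $m=0$ term and using the strict triangle inequality (via $p_0,p_1>0$ and the non-alignment of $p_0$ with $d_k^{m}p_1$) to kill the terms with $m\neq 0$. The only difference is that you derive the closed form by factoring $\mathbb{E}\bigl[d_k^{m\sum_j X_j}\bigr]$ over the independent rolls, whereas the paper expands via the multinomial theorem and then filters; this is a shorter derivation of the same identity and changes nothing essential.
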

\setcounter{theorem}{\value{savedtheorem}}

\subsection{Proof of Theorem~\ref{thm:mainMA}}\label{sec:main_proof}

Our proof makes use of the following definition.
For a positive integer $k$, let $d_{k}$ be the first root of order $k$ of $1$. That is, 
$d_{k} = e^{\left(\frac{2\pi i}{k}\right)}$ where $i$ denotes the imaginary unit $\sqrt{-1}$.
We recall some of the standard properties of $d_k$.

\begin{claim}\label{claim:d}
    The following holds for every integers $m,\:l$.
    \begin{enumerate}
\item $d_{k}^m = d_{k}^{(m \bmod{k})}$
\item $d_{k}^{mk} = 1$
\item $d_k^l = 1$ if and only if $l \bmod{k} = 0$
\item 
\begin{equation}\label{eq:sum_0_or_k}
\sum_{m=0}^{k-1} d_{k}^{ml} = 
\begin{cases}
    0, \quad \text{For} \ l \bmod{k} \neq 0\\
    k, \quad \text{For} \ l \bmod{k} = 0\\
\end{cases}
\end{equation}
\end{enumerate}
\end{claim}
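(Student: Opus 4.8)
The plan is to prove the four properties of $d_k = e^{2\pi i/k}$ essentially by unwinding the definition and invoking elementary facts about roots of unity, in the order listed so that later parts can use earlier ones. Throughout, write $d_k^m = e^{2\pi i m/k}$ and use the basic identity $e^{2\pi i} = 1$.

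\emph{Parts 1--3.} For part 2, $d_k^{mk} = e^{2\pi i m k / k} = e^{2\pi i m} = (e^{2\pi i})^m = 1$. For part 1, write $m = qk + r$ with $r = m \bmod k$; then $d_k^m = d_k^{qk} d_k^r = 1 \cdot d_k^r = d_k^{r}$ using part 2. For part 3, the ``if'' direction is immediate from part 1 (if $l \bmod k = 0$ then $d_k^l = d_k^0 = 1$). For the ``only if'' direction: by part 1 we may assume $0 \le l \le k-1$, and then $d_k^l = e^{2\pi i l/k} = 1$ forces $2\pi i l/k$ to be an integer multiple of $2\pi i$, i.e. $l/k \in \mathbb{Z}$, which with $0 \le l < k$ gives $l = 0$.

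\emph{Part 4.} Fix an integer $l$ and consider $S = \sum_{m=0}^{k-1} d_k^{ml} = \sum_{m=0}^{k-1} (d_k^l)^m$, a geometric sum with ratio $r = d_k^l$. If $l \bmod k = 0$, then by part 3 we have $r = 1$, so every term equals $1$ and $S = k$. If $l \bmod k \neq 0$, then by part 3 we have $r \neq 1$, so the geometric sum formula gives $S = \frac{r^k - 1}{r - 1}$; but $r^k = d_k^{lk} = (d_k^{k})^l = 1$ by part 2 (or directly $d_k^{lk} = e^{2\pi i l} = 1$), so the numerator vanishes and $S = 0$. This establishes Equation~\ref{eq:sum_0_or_k}.

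I do not expect any real obstacle here; this is a routine verification. The only point requiring a moment of care is making sure part 3 is applied with $r = d_k^l \neq 1$ before dividing by $r - 1$ in the geometric sum, and being explicit that the reduction $0 \le l < k$ in part 3 uses part 1 so the claim holds for all integer $l$ (positive, negative, or large). I would present it tersely as above, since each step is a one-line computation.
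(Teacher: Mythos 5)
Your proof is correct. The paper itself gives no proof of Claim~\ref{claim:d} --- it merely ``recalls'' these as standard properties of the $k$-th roots of unity --- and your verification (reduction modulo $k$ via the division algorithm for parts 1--3, and the geometric-sum formula with ratio $d_k^l \neq 1$ for part 4) is exactly the standard argument one would supply, with the one genuinely necessary care point (checking $d_k^l \neq 1$ before dividing by $d_k^l - 1$) handled explicitly.
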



We now proceed for the proof itself. For simplicity denote $Pr(X \bmod{k}=j)$ by $p_j$.
We prove the following two consecutive lemmas.


\begin{lemma}\label{lem:prob_to_multinom}
The following holds for every $t\in[k]$:
\begin{equation*}\label{eq:prob_to_multinom}
\text{Pr}\left(\Sumnk{n}{k} = t\right) = (1/k)\sum_{m=0}^{k-1}d_{k}^{-mt}\left(\sum_{j=0}^{k-1} d_k^{mj}p_j\right)^n
\end{equation*}
\end{lemma}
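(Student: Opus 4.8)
The plan is to prove Lemma~\ref{lem:prob_to_multinom} by a direct computation using the orthogonality relation for $k$-th roots of unity, namely item 4 of Claim~\ref{claim:d}. The key observation is that the right-hand side expression $(1/k)\sum_{m=0}^{k-1} d_k^{-ml}$ acts as an indicator for the event $l \bmod k = 0$. This lets us pick out exactly those sums $\sum_{j=1}^n X_j$ whose residue modulo $k$ equals $t$.

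First I would start from the left-hand side. Let $S_n = \sum_{j=1}^n X_j$. By the law of total probability over all possible values $v$ that $S_n$ can take (the range is bounded since $X$ has bounded range, so this is a finite sum), we have
\[
\text{Pr}\left(S_n \bmod k = t\right) = \sum_{v}\, \text{Pr}(S_n = v)\cdot \mathbf{1}[v \equiv t \pmod k].
\]
Now I would substitute the root-of-unity indicator: by Claim~\ref{claim:d}(4), $\mathbf{1}[(v-t)\bmod k = 0] = (1/k)\sum_{m=0}^{k-1} d_k^{m(v-t)}$. Plugging this in and exchanging the (finite) order of summation gives
\[
\text{Pr}\left(S_n \bmod k = t\right) = (1/k)\sum_{m=0}^{k-1} d_k^{-mt}\sum_{v}\, \text{Pr}(S_n = v)\, d_k^{mv}.
\]
The inner sum $\sum_v \text{Pr}(S_n=v)\, d_k^{mv}$ is precisely $\mathbb{E}[d_k^{m S_n}]$.

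Next I would evaluate $\mathbb{E}[d_k^{m S_n}]$ using independence. Since $S_n = X_1 + \cdots + X_n$ with the $X_j$ i.i.d.\ copies of $X$, we get $\mathbb{E}[d_k^{m S_n}] = \prod_{j=1}^n \mathbb{E}[d_k^{m X_j}] = \left(\mathbb{E}[d_k^{m X}]\right)^n$. It remains to identify $\mathbb{E}[d_k^{m X}]$ with $\sum_{j=0}^{k-1} d_k^{mj} p_j$: by Claim~\ref{claim:d}(1), $d_k^{m X} = d_k^{m(X \bmod k)}$, so grouping the values of $X$ by their residue class $j \in [k]$ and recalling $p_j = \text{Pr}(X \bmod k = j)$ yields $\mathbb{E}[d_k^{m X}] = \sum_{j=0}^{k-1} d_k^{mj} p_j$. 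Substituting back gives exactly the claimed formula.

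The only mild subtlety — and the place I would be most careful — is justifying the interchange of summations and making sure everything stays finite: this is fine because $X$ has bounded range, hence $S_n$ has bounded (finite) range for each fixed $n$, so all sums involved are finite and no convergence issues arise. Apart from that bookkeeping, the proof is a routine application of finite Fourier analysis on $\mathbb{Z}/k\mathbb{Z}$, so I do not anticipate a genuine obstacle here; the real work of the theorem is deferred to the subsequent limit argument (that the $m\neq 0$ terms vanish), which is handled in the next lemma.
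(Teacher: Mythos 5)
Your proof is correct, and it reaches the identity by a slightly different route than the paper. Both arguments hinge on the same key tool, the roots-of-unity orthogonality relation of Claim~\ref{claim:d}(4) used as an indicator of the event $l \bmod k = 0$, and both swap the resulting finite sums. Where you diverge is in how the $n$-th power $\left(\sum_{j=0}^{k-1} d_k^{mj}p_j\right)^n$ emerges: the paper first writes $\text{Pr}\left(\Sumnk{n}{k}=t\right)$ as an explicit sum over multinomial tuples $(l_0,\ldots,l_{k-1})$ recording how many of the $n$ draws fall in each residue class, inserts the filter, and then collapses the filtered sum via the multinomial theorem; you instead recognize the inner sum as the characteristic function $\mathbb{E}\bigl[d_k^{mS_n}\bigr]$ and factor it as $\bigl(\mathbb{E}[d_k^{mX}]\bigr)^n$ by independence, identifying $\mathbb{E}[d_k^{mX}]=\sum_{j=0}^{k-1}d_k^{mj}p_j$ after reducing $X$ modulo $k$ via Claim~\ref{claim:d}(1). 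Your version is shorter and avoids the multinomial bookkeeping entirely (it is the standard discrete-Fourier/characteristic-function computation), while the paper's version is more elementary in that it never invokes expectations of complex-valued random variables or the product rule for independent expectations. Your handling of the finiteness issue (bounded range of $X$ implies bounded range of $S_n$ for each fixed $n$, so all interchanges are of finite sums) is exactly the right justification, and there is no gap.
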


\begin{lemma}\label{lem:multinom_limit}
The following holds:
\begin{equation*}
 \lim_{n\to\infty} (1/k)\sum_{m=0}^{k-1}d_{k}^{-mt}\left(\sum_{j=0}^{k-1} d_k^{mj}p_j\right)^n = 1/k
\end{equation*}
\end{lemma}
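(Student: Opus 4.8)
\textit{Proof proposal for Lemma~\ref{lem:multinom_limit}.}

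The plan is to split the outer sum over $m$ into the single term $m=0$ and the remaining $k-1$ terms, show that the $m=0$ term is constantly equal to $1$, and show that each of the other terms tends to $0$ as $n\to\infty$. For $m=0$ we have $d_k^{0}=1$, so the inner sum is $\sum_{j=0}^{k-1} p_j$; since $X \bmod k$ always takes a value in $[k]=\{0,\dots,k-1\}$, the events $\{X \bmod k = j\}$ partition the sample space and hence $\sum_{j=0}^{k-1} p_j = 1$. Raising to the $n$-th power and multiplying by $d_k^{-0\cdot t}=1$ leaves exactly $1$, so this term contributes $1/k$ after the global factor.

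For each $m\in\{1,\dots,k-1\}$ it suffices to show that $\bigl|\sum_{j=0}^{k-1} d_k^{mj}p_j\bigr| < 1$, because then $\bigl|\bigl(\sum_{j=0}^{k-1} d_k^{mj}p_j\bigr)^n\bigr|\to 0$, the prefactor satisfies $|d_k^{-mt}|=1$, and there are only finitely many such $m$, so the whole tail of the sum vanishes in the limit. To obtain the strict inequality I will invoke the triangle inequality applied to the complex numbers $z_j := d_k^{mj}p_j$: since $|d_k^{mj}|=1$ we get $\bigl|\sum_{j} z_j\bigr| \le \sum_{j} |z_j| = \sum_j p_j = 1$, with equality possible only if all nonzero $z_j$ lie on a common ray through the origin (i.e.\ share a common argument). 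The hypothesis that $X=0$ and $X=1$ have positive probabilities gives $p_0 \ge \text{Pr}(X=0) > 0$ and, for $k\ge 2$, $p_1 \ge \text{Pr}(X=1) > 0$ (and the case $k=1$ is trivial, since then $t=0$ is the only value and the probability is identically $1$). Thus $z_0 = p_0$ has argument $0$, while $z_1 = d_k^{m}p_1$ is nonzero; and since $1\le m\le k-1$, Claim~\ref{claim:d}(3) yields $d_k^{m}\ne 1$, so $z_1$ has nonzero argument $2\pi m/k$. Hence $z_0$ and $z_1$ are nonzero and non-aligned, the triangle inequality is strict, and $\bigl|\sum_{j=0}^{k-1} d_k^{mj}p_j\bigr| < 1$.

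Putting the pieces together, $(1/k)\sum_{m=0}^{k-1} d_k^{-mt}\bigl(\sum_{j=0}^{k-1} d_k^{mj}p_j\bigr)^n \to (1/k)(1 + 0 + \cdots + 0) = 1/k$, which is the claim. I expect the only genuinely delicate step to be the strict triangle inequality: everything else is routine bookkeeping with roots of unity and finite sums. Within that step, the points requiring care are the justification that both $p_0$ and $p_1$ are strictly positive (which survives the reduction modulo $k$ precisely because $\{X=0\}\subseteq\{X\bmod k = 0\}$ and, for $k\ge 2$, $\{X=1\}\subseteq\{X\bmod k = 1\}$) and the fact that $d_k^m\neq 1$ for $1\le m\le k-1$, which is exactly Claim~\ref{claim:d}(3).
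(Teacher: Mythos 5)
Your proposal is correct and follows essentially the same route as the paper's proof: isolate the $m=0$ term, which equals $1$ since $\sum_j p_j = 1$, and kill the remaining terms via the strict triangle inequality, which holds because $p_0, p_1 > 0$ and $d_k^m \neq 1$ for $1 \le m \le k-1$ force $z_0$ and $z_1$ off a common ray. Your phrasing of the equality condition for the triangle inequality is, if anything, slightly cleaner than the paper's case split on whether $d_k^m$ is non-real or a negative real.
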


These lemmas directly give us that.
\begin{equation}\label{eq:strong}
    \lim_{n \to \infty} \text{Pr}\left(\Sumnk{n}{k} = t\right) = 1/k
\end{equation}

which completes the proof. \qed



We next provide proofs for the Lemmas.

\subsubsection{Proof of Lemma~\ref{lem:prob_to_multinom}}

Standard combinatorics and the multinomial theorem gives us the following 


\begin{equation*}
    \text{Pr}\left(\Sumnk{n}{k} = t\right) = \sum_{\substack{l_0 +l_1 + ... + l_{k-1} = n \: \land \\ \sum_{j=0}^{k-1}jl_j \bmod{k} = t}} \binom{n}{l_0, l_1, \dots, l_{k-1}} \prod_{j=0}^{k-1} p_j^{l_j}
\end{equation*}

Note that we define here $0^0=1$ as standard in combinatorics.

To deal with this summation we sum over all tuples  for which $l_0 +l_1 + ... + l_{k-1} = n$, and filter out tuples for which $\sum_{j=0}^{k-1}jl_j \bmod{k} \neq t$. For that we can use the roots of order $k$ of $1$. Specifically, Equation~\ref{eq:sum_0_or_k} above gives us that

\begin{equation*}
    (1/k)\sum_{m=0}^{k-1}d_k^{m\left(-t+\sum_{j=0}^{k-1}jl_j\right)} = 
    \begin{cases}
        0, \quad \text{For} \ \sum_{j=0}^{k-1}jl_j \bmod{k} \neq t\\
        1, \quad \text{For} \ \sum_{j=0}^{k-1}jl_j \bmod{k} = t\\
    \end{cases}
\end{equation*}

Using this we get that:

\begin{align*}
    & \text{Pr}\left(\Sumnk{n}{k} = t\right) = &
    \nonumber \\
    & \sum_{l_0 +l_1 + ... + l_{k-1} = n}\binom{n}{l_0, l_1, \dots, l_{k-1}}\left(\prod_{j=0}^{k-1} p_j^{l_j}\right)(1/k)\sum_{m=0}^{k-1} d_k^{m\left(-t+\sum_{j=0}^{k-1}jl_j\right)}
\end{align*}

Rearranging the terms by taking the sum out and switching the sum order we all in all get:

\begin{align*}
& \text{Pr}\left(\Sumnk{n}{k} = t\right) = &
    \nonumber \\
& (1/k)\sum_{m=0}^{k-1}\sum_{l_0 +l_1 + ... + l_{k-1} = n} \binom{n}{l_0, l_1, \dots, l_{k-1}} d_k^{m\left(-t+\sum_{j=0}^{k-1}jl_j\right)} \prod_{j=0}^{k-1} p_j^{l_j}
 = & \nonumber \\
& (1/k)\sum_{m=0}^{k-1}d_{k}^{-mt}\sum_{l_0 +l_1 + ... + l_{k-1} = n} \binom{n}{l_0, l_1, \dots, l_{k-1}} d_k^{m\sum_{j=0}^{k-1}jl_j} \prod_{j=0}^{k-1} p_j^{l_j} = & \nonumber \\
& (1/k)\sum_{m=0}^{k-1}d_{k}^{-mt}\sum_{l_0 +l_1 + ... + l_{k-1} = n} \binom{n}{l_0, l_1, \dots, l_{k-1}}  \prod_{j=0}^{k-1} \left(d_k^{mj}\right)^{l_j} \prod_{j=0}^{k-1} p_j^{l_j} =
& \nonumber \\
& (1/k)\sum_{m=0}^{k-1}d_{k}^{-mt}\sum_{l_0 +l_1 + ... + l_{k-1} = n} \binom{n}{l_0, l_1, \dots, l_{k-1}}  \prod_{j=0}^{k-1} \left(d_k^{mj}p_j\right)^{l_j}
\end{align*}

Now we can safely use standard combinatorial methods using the Multinomial Theorem to get:

\begin{equation*}
\text{Pr}\left(\Sumnk{n}{k} = t\right) = (1/k)\sum_{m=0}^{k-1}d_{k}^{-mt}\left(\sum_{j=0}^{k-1} d_k^{mj}p_j\right)^n
\end{equation*}

as required
\qed

\subsubsection{Proof of Lemma~\ref{lem:multinom_limit}}
We have to show that:
\begin{equation*}
 \lim_{n\to\infty} (1/k)\sum_{m=0}^{k-1}d_{k}^{-mt}\left(\sum_{j=0}^{k-1} d_k^{mj}p_j\right)^n = 1/k
\end{equation*}

Multiplying both sided by $k$ give us an equivalent expression we can prove instead:
\begin{equation*}
    \lim_{n\to\infty} \sum_{m=0}^{k-1}d_{k}^{-mt}\left(\sum_{j=0}^{k-1} d_k^{mj}p_j\right)^n = 1
\end{equation*}
Now we separate the first term $m=0$ from the rest and get:
\begin{equation*}
    \lim_{n\to\infty} d_{k}^{-0t}\left(\sum_{j=0}^{k-1} d_k^{0j}p_j\right)^n + \sum_{m=1}^{k-1}d_{k}^{-mt}\left(\sum_{j=0}^{k-1} d_k^{mj}p_j\right)^n = 1
\end{equation*}
However, for the first term as $d_{k}^{-0t} = d_k^{0j} = 1$ we get that $d_{k}^{-0t}\left(\sum_{j=0}^{k-1} d_k^{0j}p_j\right)^n = \left(\sum_{j=0}^{k-1}p_j\right)^n = 1^n = 1$. So we are left to show that:
\begin{equation*}
    \lim_{n\to\infty} \sum_{m=1}^{k-1}d_{k}^{-mt}\left(\sum_{j=0}^{k-1} d_k^{mj}p_j\right)^n = 0
\end{equation*}


To prove that, we first make use of the fact that the triangle inequality holds strictly when the vectors are either independent or pointing in the opposite direction. We show that since $p_0,p_1>0$ therefore $d_k^{mj}p_j$ for $j\in\{0,1\}$ are such pair of vectors.
To see that,  for $j=0$ we get that $d_k^{m0}p_0 = p_0$ is a positive real number with no imaginary part.
For $j=1$ we have $d_k^{m1}p_1 = d_k^mp_1=p_1{(e^{\frac{2\pi i}{k}})}^m = p_1(e^{\frac{2\pi mi}{k}})$. Note that for all $1 \leq m \leq k-1$ we have that $e^{\frac{2\pi mi}{k}}$ either has an imagery part (and therefore is independent of the real number $p_0$) or is a negative real number (and so $\{p_0, p_1(e^{\frac{2\pi mi}{k}})\} $ point in opposite directions). In both cases, however, we have that the triangle inequality holds strictly, therefore: 

 \begin{equation*}
\left|\sum_{j=0}^{k-1} d_k^{mj}p_j\right| \overset{\text{Triangle inequality}}{<} \sum_{j=0}^{k-1}\left|d_k^{mj}p_j\right| 
\end{equation*}

Then, since $d_k^{mj}$ is the root of unity for every $m,j$, which means that $|d_k^{mj}|=1$, we all in all have, using Cauchy-Schwarz, that

\begin{equation*}
\left|\sum_{j=0}^{k-1} d_k^{mj}p_j\right| \overset{\text{Triangle inequality}}{<} \sum_{j=0}^{k-1}\left|d_k^{mj}p_j\right| \overset{\text{Cauchy–Schwarz}}{\leq} \sum_{j=0}^{k-1}\left|d_k^{mj}\right|\left|p_j\right| = \sum_{j=0}^{k-1}\left|p_j\right| = 1
\end{equation*}

So essentially we showed that for ever $m\in \{1,\cdots k-1\}$

 \begin{equation}\label{eq:trinagle_strict}
     \left|\sum_{j=0}^{k-1} d_k^{mj}p_j\right| < 1
 \end{equation}

 From here we immediately get that the:
 \begin{equation*}
 \lim_{n\to\infty}\left(\sum_{j=0}^{k-1} d_k^{mj}p_j\right)^n = 0
 \end{equation*}

 Which leads us to:
\begin{equation*}\label{eq:limit_to_0_final}
    \lim_{n\to\infty} \sum_{m=1}^{k-1}d_{k}^{-mt}\left(\sum_{j=0}^{k-1} d_k^{mj}p_j\right)^n = 0
\end{equation*}
As required.
\qed

\paragraph*{Note}
Notice that Equation~\ref{eq:trinagle_strict} implies that: $\left|\sum_{m=1}^{k-1}d_{k}^{-mt}\left(\sum_{j=0}^{k-1} d_k^{mj}p_j\right)^n\right|$ is monotonically decreasing. Consequently, the sequence of Lemma~\ref{lem:multinom_limit} converges to $1/k$ in a monotonically decreasing way.






\subsection{Proof of Proposition~\ref{prop:benefitsMA}}\label{app:proofProp2}

We prove Proposition~~\ref{prop:benefitsMA}.

\begin{proposition*}
Assume that $|h|+|\alpha_{j+1}|<|F|$ where $j$ is the number of free variables in $F$. Then using the formula $D_2$ instead of $D_1$ results in a smaller overall formula.
\end{proposition*}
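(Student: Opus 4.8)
# Proof Proposal for Proposition~\ref{prop:benefitsMA}

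The plan is to directly compare the sizes of the two constructions, expressing each in terms of the building-block constants $\alpha$ and $\alpha_j$ introduced just before the proposition statement, together with $|F|$ and $|h|$. Recall that $D_1 \equiv F(\x) \land h_1(\x)$ and $D_2 \equiv F(\x) \land (h_1(\x) + h_2(\x))$, where the inner $+$ adds a fresh selector variable. However, the proposition is about the \emph{overall} formula, so what really matters is how $D_1$ versus $D_2$ is used inside the amplification step: each is first subjected to a $+1$ operation and then duplicated $k_1$ (resp. $k_2$) times and combined. The key point is that modular addition buys a higher success probability $p$, hence (via Proposition~\ref{clm:repetitionNaive}) a strictly smaller number of repetitions $k_2 < k_1$, while each single copy of $D_2$ is only a constant ($|h| + \alpha$, roughly) larger than a single copy of $D_1$.

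First I would write down the size of one amplification block in each case. For $D_1$: a single copy after the $+1$ has size $|F| + |h| + \alpha + \alpha_{j'}$ where $j'$ is the relevant free-variable count and $\alpha$ accounts for the $F \land h$ conjunction (a conjunction costs a constant over the sum of its parts). For $D_2$: a single copy after the $+1$ has size $|F| + 2|h| + \alpha_{j+1} + 2\alpha + \alpha_{j'}$, since we now have two hash functions combined with the selector-variable $+$ (costing $\alpha_{j+1}$ by the definition of $\alpha_{j+1}$ as the cost of a $+1$-style combination on a formula with $j+1$ free variables, after accounting for one extra conjunction). So the per-copy overhead of $D_2$ over $D_1$ is at most $|h| + \alpha_{j+1} + \alpha$, which by the hypothesis $|h| + \alpha_{j+1} < |F|$ is bounded by roughly $|F|$, i.e., less than doubling each copy.

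Next I would compare the number of repetitions. By Theorem~\ref{thm:vv} and the refinements, $D_1$ gives success probability $p_1 \geq 19/64n$ (or $1/8n$ in the basic bound), while Proposition~\ref{prop:MAuse} and the monotone-convergence remark following Lemma~\ref{lem:multinom_limit} guarantee that $D_2$ already improves on $p_1$: with $k=2$, $t=1$ in Equation~\ref{eq:prob_to_multinom}, the success probability becomes $p_2 = \tfrac{1}{2}\bigl(1 - (1 - 2p_1)^2 + \ldots\bigr)$, strictly larger than $p_1$ since $p_1 < 1/2$. Then Proposition~\ref{clm:repetitionNaive} gives $k_i = \lceil \log_{1-p_i}(\epsilon_i/2^s)\rceil$, and since $\log_{1-p}(\cdot)$ is strictly decreasing in $p$ on $(0,1/2)$, we get $k_2 < k_1$; in fact one can check $k_2 \le k_1/c$ for a constant $c > 1$ roughly of size $2$ when $p_1$ is small, because doubling the success probability roughly halves $\log(1-p)^{-1}$'s reciprocal.

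Finally I would put the two estimates together: the total size of the $D_1$-based block is about $k_1 (|F| + |h| + O(1))$ and that of the $D_2$-based block is about $k_2 (|F| + 2|h| + \alpha_{j+1} + O(1)) < k_2 (2|F| + O(1))$ using the hypothesis, so it suffices that $k_2 \cdot 2|F| < k_1 \cdot |F|$, i.e. $2 k_2 < k_1$, which holds for $p_1$ in the relevant regime ($p_1 \le 19/64n$ with $n \ge 1$, so $p_1 \le 19/64 < 1/2$, and the gain from modular addition with $l=2$ is enough). The main obstacle I anticipate is making the ``$k_2 < k_1/2$'' step fully rigorous: one must carefully lower-bound $p_2/p_1$ using Equation~\ref{eq:prob_to_multinom} and then control the ratio $\log_{1-p_2}(\cdot)/\log_{1-p_1}(\cdot)$ uniformly, rather than just asymptotically; handling the ceiling functions in $k_i$ and the edge cases where $\epsilon_i/2^s$ is close to $1$ also requires care. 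An alternative, cleaner route is to fix the target error and observe that since $D_2$ strictly dominates $D_1$ in per-repetition success probability while costing under twice as much per copy, the product (number of copies) $\times$ (copy size) strictly decreases; I would present that argument if the explicit constant chase proves too delicate.
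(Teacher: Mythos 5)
Your overall strategy---compare (number of repetitions) $\times$ (per-copy size) for $D_1$ versus $D_2$---is the same as the paper's, but the inequality you ultimately reduce the claim to, namely $2k_2 < k_1$, is false, and you flag it yourself as the main obstacle without resolving it. Concretely: with the refined bound $p_{D_1}=\tfrac{19}{64n}$, two summed hash functions give success probability $p_{D_2}=2p_{D_1}(1-p_{D_1})$ (this is exactly the paper's $\tfrac{19(64n-19)}{2048n^2}$), which is strictly \emph{less} than $2p_{D_1}-p_{D_1}^2$. Since $k_l$ is proportional to $1/(-\log(1-p_{D_l}))$, the condition $k_2<k_1/2$ is equivalent to $1-p_{D_2}<(1-p_{D_1})^2$, i.e.\ to $p_{D_2}>2p_{D_1}-p_{D_1}^2$---the opposite of what holds. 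So $k_2>k_1/2$ always (ceilings aside), and your closing step $k_2\cdot 2|F| < k_1\cdot|F|$ cannot be salvaged. The looseness enters when you bound the $D_2$ per-copy size by $2|F|$: by throwing away the additive $|h|+\alpha_j$ already present in the $D_1$ copy, you force yourself to demand a full factor-of-two saving in repetitions, which modular addition with $l=2$ just barely fails to deliver. Your fallback remark ("the product strictly decreases since $D_2$ dominates per repetition while costing under twice as much per copy") is not an argument: a probability gain strictly below $2\times$ against a size cost strictly below $2\times$ does not by itself decide the sign of the comparison.

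The paper closes the gap with a different split of the constants. It proves the weaker but correct repetition bound $k'_i<\tfrac{2}{3}k_i$ (checked for all $n>1$ from the explicit values of $p_{D_1}$ and $p_{D_2}$), and compensates by comparing per-copy sizes tightly rather than against $2|F|$: the $D_2$ copy has size $|F|+2|h|+\alpha+\alpha_{j+1}$, exceeding the $D_1$ copy size $|F|+|h|+\alpha_j$ by only $|h|+\alpha_{j+1}+\alpha-\alpha_j<|F|$ under the hypothesis, so (up to the small constants $\alpha,\alpha_j$) the per-copy blow-up is below $3/2$ and the product still decreases. To repair your write-up, replace the target $2k_2<k_1$ by $\tfrac{3}{2}k_2<k_1$ (or any constant strictly between $p_{D_1}/p_{D_2}\approx\tfrac12$ and the reciprocal of the true per-copy blow-up), and derive the repetition ratio from the explicit probabilities rather than from the heuristic that doubling $p$ halves $k$.
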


\begin{proof}
For every step $i$, the number of required repetitions is determined  from Proposition~\ref{clm:repetitionNaive}  by $\epsilon_i,s,p$. Note that the values for $\epsilon_i$ and $s_i$ are not effected by whether we use $D_1$ or $D_2$ so we consider them as constants. The difference lays in the size of $p$, appropriately denoted by $p_{D_1}$ and $p_{D_2}$ for both cases.  For $D_1$ we get $p_{D_1}=\frac{19}{64n}$. To compute $P_{D_2}$ we need to use Equation~\ref{eq:prob_to_multinom} and assign $k=2,t=1,n=2, p_0=1-1/64n,p_1=1/64n$ using the values from our refined bound for Valiant-Vazirani. This all in all gives $p_{D_2}=\frac{19(64n-19)}{2048n^2}$.
As a result, the number of required repetitions in step $i$ using $D_1$ is $k_i=\frac{\log(\frac{\epsilon_i}{2^s})}{\log(1-P_{D_1})}$, which results in a formula size of $k_i(|F|+|h|+|\alpha_{j}|)$ after this step. On the other hand, the number of required repetitions in step $i$ using $D_2$ is $k'_i=\frac{\log(\frac{\epsilon_i}{2^s})}{\log(1-P_{D_2})}$, which results in a formula size of $k'_i(|F|+2|h|+|\alpha|+|\alpha_{j+1}|)$ after this step. 

Comparing between $k'_i$ and $k_i$, we get that $k'_i<\frac{2k_i}{3}$ for every $n>1$. This means that the resulting formula size after step $i$ will be less than:
\begin{align*}
\frac{2}{3}k_i(|F|+2|h|+|\alpha|+|\alpha_{j+1}|) <& \frac{2}{3}k_i(|F|+2|h|+2|\alpha_{j+1}|) = \\&
k_i(\frac{2}{3}|F|+\frac{4}{3}(|h|+|\alpha_{j+1}|)) <
k_i(\frac{3}{3}|F|+\frac{3}{3}|h|) =\\&
k_i(|F|+|h|) < k_i(|F|+|h| + |\alpha_{j}|)  
\end{align*}

Since in each and every step using $D_1$ increases the formula more than when using $D_2$ we get that using the formula $D_2$ instead of $D_1$ results in a smaller overall formula.
\end{proof}











\section{Implementation and Engineering modification}\label{sec: EnginImprov}

We implemented Algorithm~\ref{alg:Toda1} with our new modification, into our tool, called \TodaQBF. Given a QBF formula with $d$ alternations and $\epsilon>0$, our tool emits a Boolean formula $F'$, then calls a state-of-the-art model counter for the number of solutions of $F'$. If this count is odd then our tool returns \true, if the count is even, or tool returns \false. The type of model counter we used is a version of Ganak~\cite{SharmaRSM19} that can emit the parity of solutions (i.e. even or odd) of a given formula and does it faster then counting the actual number of solutions. 
Since model counters tend to take CNF formulas as an input, we had to make sure that the resulting formula is in CNF. In the naive implementation before any modification, called \emph{naiveTodaQBF}, we used Tseytin encoding~\cite{tseitin1983complexity} to convert our resulting formula to CNF. As a part of our modifications, we also considered additional CNF conversion, we discuss this in Section~\ref{sec:CNFconvdetailed}.

In addition to all that, we mention two methods that seem to do a notable change in our tool's performance. The first method is called \emph{from a single to a multi-call}, and it works as follows. We can avoid the outermost quantifier size blow up for the final iteration by calling the model counting oracle separately for each individual attempt rather than on the combined formula. If any attempt succeeds, we return success. This method substantially reduces the final formula size while maintaining the theoretical guarantees. The trade-off is of course an algorithm that uses multiple calls on smaller formulas to a model counter instead of a single call.

The second method, which is a standard practice in probabilistic algorithms is to execute the algorithm multiple times with a more relaxed error bound and then take a majority vote.
Specifically, instead of running the algorithm once with an error probability $\epsilon$, we run it $r$ times independently with a larger error probability $\epsilon_0 > \epsilon$, and return the majority outcome. Using the Bernoulli distribution, the probability of obtaining at least $\lceil \frac{r}{2} \rceil$ correct answers in $r$ independent trials is: $P_{\text{correct}} = \sum_{i=\lceil \frac{r}{2} \rceil}^{r} \binom{r}{i} (1-\epsilon_0)^i \epsilon_0^{r-i}$. We then provide our required $P_{\text{correct}}$ and $\epsilon_0$ to get the number of repetitions $r$.

 \subsection{CNF Conversions}\label{sec:CNFconvdetailed}

Here we describe further manipulations and CNF conversion methods that we used in our implementation. In our original implementation, since model counters tend to take a CNF formula as an input, we used Tseytin encoding~\cite{tseitin1983complexity} to convert our resulting formula to CNF.

\subsubsection{Canceling Negations to Reduce Plus Operations}\label{sec:negation}
In Algorithm~\ref{alg:Toda1}, Lines 9-10 add a $+1$ operation to negate the parity quantified formula when processing universal quantifiers. However, just before, in Line 8 in the call to \textbf{Amplify}, Algorithm~\ref{alg:Amplify} in Line 7 also performs a $+1$ operation to negate the parity quantified formula. To see exactly why, follow the pass from Step 6 to Step 7 in Section~\ref{sec:detreduction}, that explains the reduction in details.
Rather than doing both negations, we can cancel these double negations to reduce the formula's size a bit, since $\neg\neg\varphi \equiv \varphi$ for any $\varphi$.

Note that if we are at the last quantifier, which is existential, then we do not do $+1$ operation in lines 9-10. Therefore there is nothing to cancel the Line 8 negation's with. Nevertheless, since this is the last quantifier, we can simply send the formula without the Line 8 negation to the model counter and then manually add one to the number that the mode counter returns.

\subsubsection{Handling the Plus Operation}\label{sec:plusCNF}

Toda's proof implements $F + G$ as
    $(F \land x_{\text{new}}) \lor (G \land \lnot x_{\text{new}})$.
This can be rewritten in CNF as $
    (F \lor \lnot x_{\text{new}}) \land (G \lor x_{\text{new}})$.
For $F = \bigwedge_i C_i$ and $G = \bigwedge_j D_j$, we distribute
    $\bigwedge_i (C_i \lor \lnot x_{\text{new}}) \land \bigwedge_j (D_j \lor x_{\text{new}})$.

\subsubsection{Handling the Universal Quantifier}\label{sec:forallCNF}

When the innermost quantifier in $Q_1 \x_1 \ldots Q_{d} \x_{d} F(\x_1, \ldots, \x_d)$ is $\forall$, the algorithm converts $\forall \x_{d} F(\x_1, \ldots, \x_d)$ to $\neg\exists\x_{d}\neg f(\x_1, \ldots, \x_d)$, which negates $F$ and transforms it to DNF. We solve this by artificially adding an innermost existential quantifier over a fresh variable $x_{\text{new}}$, followed by replacing the formula $F$ with $F \land (x_{\text{new}} \lor \neg x_{\text{new}})$.
This ensures our innermost quantifier is existential, bypassing the negation of $F$. Rather than using full amplification on this new quantifier, we transform $F \land (x_{\text{new}} \lor \neg x_{\text{new}})$ to $F \lor x_{\text{new}}$, which also handles the negation from the $\forall$ to $\neg\exists\neg$ conversion. The reason why this preserves the correct counting behavior is that if the original formula is satisfiable we have 2 satisfying assignments for $x_{\text{new}}$ which is even count and \false under parity. If on the other hand the original formula is unsatisfiable then only $x_{\text{new}}=true$ satisfies the formula resulting in an odd count which is \true under parity.
This adds just one variable and one clause, making it highly efficient.

\subsubsection{Integrating the Hash Functions}\label{sec:hashCNF}

We offer three ways to integrate hash function to CNF. The first and most trivial is \emph{do-nothing}, which is just what it is.
We apply Tseytin transformation \cite{tseitin1983complexity} only on the entire formula before sending it to the model counter.

The second approach, that we call \emph{Tseytin-based hash} is to apply Tseytin transformation to convert the entire hash function to CNF separately. This method should give smaller formulas than the first.


The third method that we try, called \emph{parity hash}, is to use the power of counting for a cheaper conversion. Indeed, as every XOR formula $F = x_1\oplus\cdots\oplus x_n$ is \true if and only if its number of $x_i$ assigned $1$ is \emph{odd}, we can simply convert $x_1\oplus\cdots\oplus x_n$  to $\parity(x_1 + \cdots + x_n)$ where $+$ can be naively implemented as $\parity(x_1 + (x_2+ (\cdots + (x_n))\cdots )$ or by using additional $log(n)$ bits to uniquely describe each of the $n$ conjuncts. This replaces each $\oplus$-constraint with roughly $n$ clauses which is obviously smaller than the above method.

\section{Evaluation}\label{Sec:experiments}

\subsection{Experiments Setup}
We ran our experiments on a computer cluster, with each problem instance run on an Intel(R) Xeon(R) Gold 6130 CPU @ 2.10GHz with 8 cores, and 12GB memory. As a model counter we chose a recent version of Ganak~\cite{SharmaRSM19} that can also compute parity counting (and does so about ten times faster than compute exact counting). Although Ganak is a probabilistic exact counter, we did not incorporate Ganak's probability into our computations, being its failure probability negligible for our needs~\cite{SharmaRSM19}.
Nevertheless, since our TodaQBF is probabilistic, we did ran the following QBF solvers: CAQE\cite{CAQE}, DepQBF\cite{DepQBF}, and Qute\cite{Qute} on the same benchmarks to verify our results. Needless to say, these solvers solved each benchmark of ours extremely fast.

We tested our tool on 2QBF and 3QBF formulas. For that, we used two random QBF generators: Random Logic Program Generator~\cite{2QBF_HARD_GENERATOR} for 2QBF, from which we sampled 27 benchmarks, and QBFFam~\cite{QBFFAM} for 3QBF, from which we sampled 76 benchmarks. All the benchmarks that we report here had at most 30 variables. On bigger benchmarks or more quantifiers, our tool almost always timed out or went out of memory. For all the benchmarks we gave an error probability of $\epsilon=0.3$. and we ran each benchmark for $9$ times, which gives a majority vote probability of $0.9$. We gave the $9$ runs a timeout of $2$ hours in total. We are interested in the number of successful benchmarks (SUC) within these $2$ hours, the number of timeouts (TO), where the formulas got constructed but were timed-out at the model counter phase, and the number of times the formulas went out-of-memory (OOM), probably already in the formula construction phase.

We ran TodaQBF with the following variants: The first is the naive plus the improved bounds (\naiveIB), the second that we call \emph{baseline} is the naive algorithms with improved bounds plus the multi-call technique, plus elimination of consecutive even number of $+1$ operators (see Appendix~\ref{sec:CNFconvdetailed}) (\base). The third variant is adding modular addition to the baseline (\baseMA). For all the three methods that we describe here, the CNF conversion that we used is using Tseytin encoding only on the resulting formula.
In addition, we also added variants of TodaQBF to our experiments to explore how the different CNF conversion of the hash function fare. 
Recall that the baseline variant evokes the do-nothing method.  
We added a variant of baseline plus all the CNF conversions described in~\ref{sec:hashCNF} plus  Tseytin-based hash (\baseTH). Same, we added a variant of baseline all the CNF conversions described in~\ref{sec:hashCNF} plus parity hash (\basePH). We also added modular addition variants to top these, called (\baseTHMA), and (\basePHMA).

In these experiments, we also compared the average runtime (in seconds) using the \emph{PAR-2 score} (the lower, the better), where every benchmark that is solved adds its run-time (in seconds) and timed-out and out-of-memory benchmarks add a penalty of twice the timeout (also in seconds).

\subsection{Experiment Results}\label{sec:exp_and_res}


The results can be seen in Table~\ref{tab:merged_results2}. 
We first note that the improved bound, the multicall, and elimination of even number of $+1$ operators, significantly improved the running time (specifically the multi-call technique), hence \base\  performs much better than \naiveIB. Regarding the various CNF conversions the story gets mixed. For 2QBF benchmarks, the Tseytin-hash (\baseTH) seems to scale, while modular addition takes no effect at all. On the other hand for 3QBF, modular addition with no CNF conversions at all (\baseMA) is the most successful. Perhaps this suggests that more work is needed to integrate modular addition with the CNF conversions. It is also interesting to observe that on both cases parity hash do as well as the more intuitive Tseytin-based hash even though parity hash supposedly results in a smaller formula.

\begin{table}[!ht]
    \centering
    \begin{tabular}{|l|l|l|l|l|l|l|l|l|}
    \hline
        & \multicolumn{4}{c|}{\textbf{2QBF (27 instances)}} & \multicolumn{4}{c|}{\textbf{3QBF (76 instances)}} \\
        \cline{2-9}
        \textbf{Variant} & SUC & TO & OOM & PAR-2-avg(s) & SUC & TO & OOM & PAR-2-avg(s) \\ 
        \hline
        \textbf{naive + IB} & 3 & 20 & 4 & 13194.41 & 17 & 4 & 55 & 11493.21 \\ 
        \textbf{base} & 24 & 3 & 0 & 3402.41 & 17 & 42 & 17 & 11297.81 \\ 
        \textbf{base+THash} & 26 & 1 & 0 & 2766.09 & 23 & 21 & 32 & 10351.37 \\ 
        \textbf{base+PHash} & 18 & 9 & 0 & 7628.65 & 17 & 55 & 4 & 11648.33 \\ 
        \textbf{base+MA} & 7 & 20 & 0 & 11033.46 & 29 & 47 & 0 & 9574.21 \\ 
        \textbf{base+THash+MA} & 8 & 19 & 0 & 10527.83 & 24 & 51 & 1 & 10151.97 \\ 
        \textbf{base+PHash+MA} & 2 & 25 & 0 & 13606.11 & 14 & 62 & 0 & 11975.13 \\ 
        \hline
    \end{tabular}
    \caption{Detailed comparison (including CNF conversions) of different configurations on 2QBF (27 instances) and 3QBF (76 instances) problems}
    \label{tab:merged_results2}
\end{table}

Other variants in which we tried to combine modular addition with various CNF conversion techniques did not triumph. This suggests that more work is needed on how to integrate CNF conversion with MA. Another interesting note is that all the variants over all the solved benchmarks gave the correct answer with an error percentage higher than the requested $\epsilon=0.3$, with about $97\%$ of them ended with no errors at all. This indicates that even the tighter analysis that we took is conservative and that there is still a lot of room for improvement. 

We also took statistics of all benchmarks that were solved by at least a single TodaQBF variant. For 2QBF we solved $26$ benchmarks with \#variables in the range of $[15,30]$ averages $22.38$, and \#clauses in the range of $[21,48]$ averages $33.92$. Our variant $base$ solved a benchmark with $30$ variables and $48$ clauses in the form $\forall 20\exists 10$. For 3QBF we solved $30$ benchmarks with \#variables in the range of $[4,18]$ averages $8.37$, and \#clauses in the range of $[5,45]$ averages $16.5$. Our variant $based+MA$ managed to solve a benchmark with $11$ variables and $32$ clauses in the form $\exists 5 \forall 2 \exists 4$. All in all these results indicate that while there is still a lot to do for TodaQBF to match the quality of state-of-the-art QBF solvers,
our improvement over the naive implementation is notable.

\end{document}